\title{Isoperimetric Inequalities for Real-Valued Functions with Applications to Monotonicity Testing}
\author{
Hadley Black\thanks{Department of Computer Science, University of California, Los Angeles. Email: \href{mailto:hablack@cs.ucla.edu}{\nolinkurl{hablack@cs.ucla.edu}}. This work was supported by NSF Grant
CCF-1553605 and Boston University's Data Science Initiative.}
\and 
Iden Kalemaj\thanks{Department of Computer Science, Boston University. Email: \href{mailto:ikalemaj@bu.edu}{\nolinkurl{ikalemaj@bu.edu}}. This work was supported by NSF award CCF-1909612 and Boston University's Dean's Fellowship.}
\and 
Sofya Raskhodnikova\thanks{Department of Computer Science, Boston University. Email: \href{mailto:sofya@bu.edu}{\nolinkurl{sofya@bu.edu}}. This work was supported by NSF award CCF-1909612.}
\date{\vspace{-5ex}}
\date{}  
}
\newcommand\footnoteref[1]{\protected@xdef\@thefnmark{\ref{#1}}\@footnotemark}
\newcommand{\ignore}[1]{}
\newcommand{\cD}{\mathcal{D}}
\newcommand{\cQ}{\mathcal{Q}}
\newcommand{\cS}{\mathcal{S}}
\newcommand{\hM}{\widehat{M}}
\newcommand{\R}{\mathbb R}
\newcommand{\eps}{\varepsilon}
\newcommand{\calH}{{\cal H}}
\newcommand{\cG}{{\mathcal{G}}}
\newcommand{\cH}{{\mathcal{H}}}
\newcommand{\bx}{\boldsymbol{x}}
\newcommand{\by}{\boldsymbol{y}}
\newcommand{\bS}{\boldsymbol{S}}
\newcommand{\bT}{\boldsymbol{T}}
\newcommand{\NN}{\mathbb{N}}
\newcommand{\RR}{\mathbb{R}}
\newcommand{\sfm}{\mathcal{S}_f^-}
\newcommand{\red}{\mathrm{red}}
\newcommand{\blue}{\mathrm{blue}}
\newcommand{\floor}[1]{\lfloor#1\rfloor}
\newcommand{\EX}{\mathbb{E}}
\newcommand*\Let[2]{\State #1 $\gets$ #2}
\algrenewcommand\algorithmicrequire{\textbf{Input:}}
\algrenewcommand\algorithmicensure{\textbf{Output:}}
\newcommand{\dist}{\mathtt{dist}}
\newtheorem{theorem}{Theorem}[section]
\newtheorem{definition}[theorem]{Definition}
\newtheorem{lemma}[theorem]{Lemma}
\newtheorem{corollary}[theorem]{Corollary}
\newtheorem{observation}[theorem]{Observation}
\newtheorem{fact}[theorem]{Fact}
\newtheorem{claim}[theorem]{Claim}
\newcommand{\Sec}[1]{\hyperref[sec:#1]{Section~\ref*{sec:#1}}} 
\newcommand{\Eqn}[1]{\hyperref[eq:#1]{(\ref*{eq:#1})}} 
\newcommand{\Fig}[1]{\hyperref[fig:#1]{Fig.\,\ref*{fig:#1}}} 
\newcommand{\Tab}[1]{\hyperref[tab:#1]{Tab.\,\ref*{tab:#1}}} 
\newcommand{\Thm}[1]{\hyperref[thm:#1]{Theorem\,\ref*{thm:#1}}} 
\newcommand{\Fact}[1]{\hyperref[fact:#1]{Fact\,\ref*{fact:#1}}} 
\newcommand{\Lem}[1]{\hyperref[lem:#1]{Lemma\,\ref*{lem:#1}}} 
\newcommand{\Prop}[1]{\hyperref[prop:#1]{Prop.~\ref*{prop:#1}}} 
\newcommand{\Cor}[1]{\hyperref[cor:#1]{Corollary~\ref*{cor:#1}}} 
\newcommand{\Conj}[1]{\hyperref[conj:#1]{Conjecture~\ref*{conj:#1}}} 
\newcommand{\Def}[1]{\hyperref[def:#1]{Definition~\ref*{def:#1}}} 
\newcommand{\Alg}[1]{\hyperref[alg:#1]{Algorithm~\ref*{alg:#1}}} 
\newcommand{\Step}[1]{\hyperref[step:#1]{Step.~\ref*{step:#1}}} 
\newcommand{\Ex}[1]{\hyperref[ex:#1]{Ex.~\ref*{ex:#1}}} 
\newcommand{\Clm}[1]{\hyperref[clm:#1]{Claim~\ref*{clm:#1}}} 
\newcommand{\Inv}[1]{\hyperref[inv:#1]{Invariant~\ref*{inv:#1}}} 
\newcommand{\Rem}[1]{\hyperref[rem:#1]{Remark~\ref*{rem:#1}}} 
\newcommand{\Obs}[1]{\hyperref[obs:#1]{Observation~\ref*{obs:#1}}} 
\begin{document}
\maketitle

\begin{abstract}
We generalize the celebrated isoperimetric inequality of Khot, Minzer, and Safra~(SICOMP 2018) for Boolean functions to the case of real-valued functions $f:\{0,1\}^d\to\RR$.
Our main tool in the proof of the generalized inequality is a new Boolean decomposition that represents every real-valued function $f$ over an arbitrary partially ordered domain as a collection of Boolean functions over the same domain, roughly capturing the distance of $f$ to monotonicity and the structure of violations of $f$ to monotonicity.

We apply our generalized isoperimetric inequality to improve algorithms for testing monotonicity and approximating the distance to monotonicity for real-valued functions. Our tester for monotonicity has query complexity $\widetilde{O}(\min(r \sqrt{d},d))$, where $r$ is the size of the image of the input function. (The best previously known tester, by Chakrabarty and Seshadhri (STOC 2013), makes $O(d)$ queries.) Our tester is nonadaptive and has 1-sided error. We show a matching lower bound for nonadaptive, 1-sided error testers for monotonicity. We also show that the distance to monotonicity of real-valued functions that are $\alpha$-far from monotone can be approximated nonadaptively within a factor of $O(\sqrt{d\log d})$ with query complexity polynomial in  $1/\alpha$ and the dimension $d$. This query complexity is known to be nearly optimal for nonadaptive algorithms even for the special case of Boolean functions. (The best previously known distance approximation algorithm for real-valued functions, by Fattal and Ron (TALG 2010) achieves $O(d\log r)$-approximation.)
\end{abstract}


\section{Introduction}\label{sec:intro}

We investigate the structure of real-valued functions over the domain $\{0,1\}^d$, the $d$-dimensional hypercube. Our main contribution is a generalization of a powerful tool from the analysis of Boolean functions, specifically, isoperimetric inequalities\footnote{We discuss isoperimetric inequalities that study the size of the ``boundary'' between the points on which the function takes value 0 and the points on which it takes value 1. The boundary size is defined in terms of the edges of the $d$-dimensional hypercube with vertices labeled by the values of the function. The edges of the hypercube might be directed or undirected, depending on the type of the inequality.}, to the case of real-valued functions. Isoperimetric inequalities for the undirected hypercube were studied by Margulis~\cite{Mar74} and Talagrand~\cite{Tal93}. Chakrabarty and Seshadhri~\cite{CS16} had a remarkable insight to develop a directed analogue of the Margulis inequality. This beautiful line of work culminated in the directed analogue of the Talagrand inequality proved by Khot, Minzer, and Safra~\cite{KhotMS18}. We refer to this as the KMS inequality. As Khot, Minzer, and Safra\ explain in their celebrated work, the Margulis-type inequalities follow from the Talagrand-type inequalities and, more generally, the directed analogue of the Talagrand inequality implies all the other inequalities we mentioned.
We generalize all these inequalities to the case of real-valued functions.

For the directed case, we prove a generalization of the KMS inequality for functions  $f \colon \{0,1\}^d \to \mathbb{R}$.
To generalize the \emph{undirected} isoperimetric inequalities, we give a property testing interpretation of the Talagrand inequality. With this interpretation, the inequality easily generalizes to the case of real-valued functions.

Our proofs of the new isoperimetric inequalities reduce the general case to the Boolean case. Our main tool for generalizing the KMS inequality is a new Boolean decomposition theorem that represents every real-valued function $f$ over an arbitrary partially ordered domain as a collection of Boolean functions over the same domain, roughly capturing the distance of $f$ to monotonicity and the structure of violations of $f$ to monotonicity. 

We apply our generalized isoperimetric inequality to improve algorithms for testing monotonicity and approximating the distance to monotonicity for real-valued functions.
Our algorithm for testing monotonicity is nonadaptive and has 1-sided error. An algorithm is {\em nonadaptive} if its input queries do not depend on answers to previous queries. A property testing algorithm has {\em 1-sided error} if it always accepts all inputs with the property it is testing. We show that our algorithm for testing monotonicity is optimal among nonadaptive, 1-sided error testers. Our distance approximation algorithm is nonadaptive. Its query complexity is known to be nearly optimal for nonadaptive algorithms, even for the special case of Boolean functions.

\subsection{Isoperimetric Inequalities for Real-Valued Functions}\label{sec:intro-inequalities}

We view the domain of functions $f \colon \{0,1\}^d\to \RR$ as a hypercube. For the directed isoperimetric inequalities, the edges of the hypercube are ordered pairs $(x,y)$, where $x,y\in \{0,1\}^d$ and there is a unique\footnote{Given a positive integer $\ell \in \mathbb{Z}^+$, we let $[\ell]$ denote the set $\{1, 2, \dots, \ell\}$.} $i\in[d]$ such that $x_i=0,y_i=1$, and $x_j=y_j$ for all coordinates $j\in[d]/\{i\}$. This defines a natural partial order on the domain: $x \preceq y$ if $x_i \leq y_i$ for all coordinates $i \in [d]$ or, equivalently, if there is a directed path from $x$ to $y$ in the hypercube.
A function $f\colon \{0,1\}^d \to \R$ is \emph{monotone} if $f(x) \leq f(y)$ whenever $x \preceq y$.
The distance to monotonicity of a function $f\colon \{0,1\}^d \to \R$, denoted $\eps(f)$, is the minimum of $\Pr_{x\in\{0,1\}^d}[f(x)\neq g(x)]$  over all monotone functions $g\colon \{0,1\}^d \to \R$.
 An edge $(x, y)$ is \emph{violated} by $f$ if $f(x) > f(y)$. Let $\sfm$ be the set of violated edges. For $x \in \{0,1\}^d$, let $I_f^-(x)$ be the number of $\textit{outgoing}$ violated edges incident on $x$, specifically,
$$
I_f^-(x) = \left|\left\{ y \colon (x,y) \in \cS_f^-\right\}\right|.
$$
Our main result is the following isoperimetric inequality.
\begin{theorem}[Isoperimetric Inequality]
\label{thm:directed_talagrand_real}
There exists a constant $C > 0$, such that for all functions $f\colon \{0,1\}^d \rightarrow \RR$,
\begin{equation}\label{eq:talagrand}
\underset{\bx \sim \{0,1\}^d}{\EX} \biggl[\sqrt{I_f^-(\bx)}\biggr] \geq C \cdot \eps(f).
\end{equation}
\end{theorem}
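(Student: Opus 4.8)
The strategy is to bootstrap the real-valued inequality from its Boolean counterpart, the KMS inequality, by decomposing $f$ into Boolean functions with tightly controlled violation structure. The key ingredient I would establish, and then invoke, is a \emph{Boolean decomposition lemma}: there exist Boolean functions $f_1,\dots,f_k\colon\{0,1\}^d\to\{0,1\}$ with three properties. \textbf{(P1)} Every edge violated by some $f_i$ is also violated by $f$; in particular the outgoing violated edges of $f_i$ at any vertex $x$ form a subset of those of $f$, so $I_{f_i}^-(x)\le I_f^-(x)$. \textbf{(P2)} The $f_i$ have disjointly supported violations: the vertex sets $A_i=\{x\colon x\text{ is incident to a violated edge of }f_i\}$ are pairwise disjoint (or at worst $O(1)$-overlapping). \textbf{(P3)} The distances aggregate: $\sum_{i=1}^k\eps(f_i)\ge c_0\cdot\eps(f)$ for an absolute constant $c_0>0$. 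Morally, such a decomposition redistributes the non-monotonicity of $f$ — which is spread across many value levels — into localized Boolean ``unit-distance'' pieces, each a clean thresholding-type object suitable for KMS, while forfeiting only a constant fraction of the distance.

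Granting the decomposition, the rest is a short aggregation. Applying the KMS inequality to each Boolean $f_i$ gives $\EX_{\bx}\bigl[\sqrt{I_{f_i}^-(\bx)}\bigr]\ge C_{\mathrm{KMS}}\cdot\eps(f_i)$. Summing over $i$ and invoking (P3),
\[
\EX_{\bx}\Bigl[\littlesum_{i=1}^k\sqrt{I_{f_i}^-(\bx)}\Bigr]\;=\;\littlesum_{i=1}^k \EX_{\bx}\bigl[\sqrt{I_{f_i}^-(\bx)}\bigr]\;\ge\;C_{\mathrm{KMS}}\littlesum_{i=1}^k\eps(f_i)\;\ge\;C_{\mathrm{KMS}}\,c_0\cdot\eps(f).
\]
Now fix $\bx$. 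By (P2) at most one index $i$ (or $O(1)$ of them) has $I_{f_i}^-(\bx)>0$, and by (P1) each such $I_{f_i}^-(\bx)$ is at most $I_f^-(\bx)$; hence $\littlesum_{i}\sqrt{I_{f_i}^-(\bx)}=\max_i\sqrt{I_{f_i}^-(\bx)}\le\sqrt{I_f^-(\bx)}$ when the supports are exactly disjoint, and $\littlesum_{i}\sqrt{I_{f_i}^-(\bx)}\le O(1)\cdot\sqrt{I_f^-(\bx)}$ by Cauchy--Schwarz if they may overlap in $O(1)$ functions. Combining this pointwise bound with the display yields $\EX_{\bx}\bigl[\sqrt{I_f^-(\bx)}\bigr]\ge C\cdot\eps(f)$ for an absolute constant $C$, which is \Thm{directed_talagrand_real}.

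\textbf{Where the work is.} Essentially all the difficulty sits in the decomposition lemma, and specifically in securing (P2) and (P3) simultaneously. The naive choice $f_i=\mathbf{1}[f\ge t_i]$, with $t_1<\dots<t_k$ the distinct values in the image of $f$, makes (P1) automatic (thresholding never creates a new violated edge) and satisfies a form of (P3): one can assemble the optimal monotone approximators $g_i$ of the $f_i$ into a single monotone function $g(x)=\max\{t_i\colon g_i(x)=1\}$ and conclude $\eps(f)\le\sum_i\eps(f_i)$ by a union bound. But it badly violates (P2): a single violated edge $(x,y)$ of $f$ is violated by $\mathbf{1}[f\ge t_i]$ for \emph{every} value $t_i\in(f(y),f(x)]$, so the supports overlap by as much as the image is large. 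That overlap is exactly the $\sqrt{r}$-type loss we cannot afford, since \Thm{directed_talagrand_real} carries an absolute constant independent of the image size. The decomposition must therefore not merely pick thresholds cleverly, but also restrict each Boolean function to a carefully chosen sub-region of the cube, so the pieces become disjointly supported on violated edges while still collectively witnessing a constant fraction of $\eps(f)$. I expect the construction to go through the optimal monotone approximator $g$ of $f$ together with a combinatorial localization of the disagreement region $\{x\colon f(x)\ne g(x)\}$ — e.g., via connected components of an appropriate violation graph, or via a Hall/K\"onig-type matching of violated comparable pairs — producing one Boolean function per ``unit'' of distance, localized so that its violated edges stay inside those of $f$ and its support avoids those of the other pieces. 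Showing that this localization loses only a constant factor of $\eps(f)$, while functioning over an arbitrary poset as the decomposition is stated, is the technical heart of the argument.
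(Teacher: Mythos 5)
Your high-level route is exactly the paper's: decompose $f$ into Boolean functions whose violated edges sit inside $\cS_f^-$ and in disjoint regions, apply the Boolean KMS inequality to each piece, and aggregate. Your aggregation step is correct and essentially reproduces the paper's derivation (there it is run through the robust inequality, but for the non-robust statement your pointwise bound $\sum_i \sqrt{I_{f_i}^-(x)} \le \sqrt{I_f^-(x)}$ under disjointness works the same way). The genuine gap is that your decomposition lemma, properties (P1)--(P3), is precisely the paper's Boolean Decomposition (\Thm{main1}), and that is where essentially the entire proof lives; you state it as an ingredient "to be established" and offer only speculative routes (connected components of the violation graph, or a Hall/K\"onig-type matching), neither of which, as sketched, secures (P2) and (P3) simultaneously.

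Concretely, the paper's construction needs three ideas your sketch does not supply. First, it starts from a \emph{max-weight, min-cardinality} matching $M$ of violated comparable pairs -- not an arbitrary maximal or Hall-type matching -- and this extremal choice is what later forces values to line up correctly. Second, disjointness is obtained by merging matched pairs whose \emph{sweeping graphs} (unions of all directed paths between matched endpoints) intersect (\Lem{general-partition}); a rematching claim shows each merged block $(S_i,T_i)$ can be re-matched through any prescribed comparable pair, and combined with the extremal choice of $M$ this yields that \emph{every} comparable pair in $S_i\times T_i$ is violated by $f$ (\Lem{greater}). Note that distinct connected components of the violation graph can have overlapping sweeps (a vertex can lie between endpoints of two different components without any violated pair joining them), so the naive component-based localization does not give (P2) without such a merge-and-rematch argument, and after merging one must re-prove that the block still certifies its share of $\eps(f)$. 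Third, the Boolean function on a block cannot be a fixed threshold: the paper sets $f_i(z)=1$ iff $f(z) > \max_{t\in T_i(z)} f(t)$ inside $\cH_i$ (extended by constants outside), and verifying $\cS_{f_i}^-\subseteq \cS_f^-\cap E(\cH_i)$ uses the monotonicity of this variable threshold ($T_i(y)\subseteq T_i(x)$ for $x\prec y$) together with structural properties of sweeping graphs, while verifying that $S_i$ is all-$1$ and $T_i$ all-$0$ (hence $\eps(f_i)\ge |M_i|/2^d$, giving (P3) with $c_0=1/2$) uses \Lem{greater}. Without these steps the decomposition is an assumption, not a proof, so the proposal does not yet establish \Thm{directed_talagrand_real}.
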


\Thm{directed_talagrand_real} is a generalization of the celebrated inequality of Khot, Minzer, and Safra~\cite{KhotMS18}, that was subsequently strengthened by Pallavoor et al.~\cite{PallavoorRW20}, who proved (\ref{eq:talagrand}) for the special case of Boolean functions $f \colon \{0,1\}^d \to \{0,1\}$. We show that the same inequality holds for real-valued functions without any dependence on the size of the image of the function. In addition, the constant $C$ is only a factor of 2 smaller than the constant in the inequality of Pallavoor et al.

Applications to monotonicity testing and distance approximation rely on a stronger, ``robust'' version of \Thm{directed_talagrand_real}. The robust version considers an arbitrary 2-coloring $\mathtt{col} \colon \cS_f^- \rightarrow \{\mathrm{red}, \mathrm{blue}\}$ of the violated edges. The color of an edge is used to specify whether the edge is counted towards the lower or the upper endpoint.  Let $I_{f, \mathrm{red}}^-(x)$ be the number of \textit{outgoing} red violated edges incident on $x$, and $I_{f, \mathrm{blue}}^-(x)$ be the number of \textit{incoming} blue violated edges incident on $x$, specifically,
\begin{align*}
    I_{f,\text{red}}^-(x) =& \left|\left\{ y \colon (x,y) \in \cS_f^- \text{, } \mathtt{col}(x,y) = \text{red}\right\}\right|; \\
    I_{f,\text{blue}}^-(y) =& \left|\left\{ x \colon (x,y) \in \cS_f^- \text{, } \mathtt{col}(x,y) = \text{blue}\right\}\right| \text{.}
\end{align*}
Our next theorem is a generalization of the robust isoperimetric inequality for Boolean functions established by Khot, Minzer, and Safra\ and strengthened by Pallavoor et al. As before, the constant $C$ is only a factor of 2 smaller for the real-valued case than for the Boolean case.

\begin{theorem}[Robust Isoperimetric Inequality]\label{thm:real_tal_robust}
There exists a constant $C > 0$, such that for all functions $f \colon \{0,1\}^d \to \RR$ and colorings $\mathtt{col} \colon \cS_f^- \to \{\red,\blue\}$, 
\[
\underset{\bx \sim \{0,1\}^d}{\EX}\left[\sqrt{I_{f,\red}^-(\bx)}\right] + \underset{\by \sim \{0,1\}^d}{\EX}\left[\sqrt{I_{f,\blue}^-(\by)}\right] \geq C \cdot \eps(f).
\]
\end{theorem}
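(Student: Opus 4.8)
The plan is to reduce the robust real-valued inequality (\Thm{real_tal_robust}) to the robust \emph{Boolean} inequality of Pallavoor et al.\ via the Boolean decomposition theorem advertised in the introduction. Concretely, I would first invoke the decomposition to obtain a collection of Boolean functions $f_1, \dots, f_m \colon \{0,1\}^d \to \{0,1\}$ (on the same domain) with two key properties: (i) the distances to monotonicity aggregate, i.e.\ $\sum_i \eps(f_i) \gtrsim \eps(f)$ (or some weighted version thereof), and (ii) each violated edge of each $f_i$ corresponds to a violated edge of $f$, with the correspondence being such that every edge of $f$ is ``used'' by only a bounded number of the $f_i$'s — ideally each violated edge $(x,y) \in \sfm$ gives rise to a violated edge in at most one $f_i$, or the multiplicities telescope nicely. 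The point is that the $f_i$ should be something like threshold/level functions $\mathbbm{1}[f(x) \ge \theta]$ together with a bookkeeping of which threshold each violation ``belongs'' to, so that the outgoing-violated-edge counts $I_{f_i}^-(x)$ sum up to at most $I_f^-(x)$ pointwise (and similarly for incoming counts).

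Given such a decomposition, the argument would go as follows. Fix the coloring $\col$ of $\sfm$. Push this coloring down to each $f_i$: an edge violated by $f_i$ inherits the color of the corresponding edge of $f$. This gives colorings $\col_i$ of $\cS_{f_i}^-$, and by construction $\sum_i I_{f_i,\red}^-(x) \le I_{f,\red}^-(x)$ and $\sum_i I_{f_i,\blue}^-(y) \le I_{f,\blue}^-(y)$ for every $x,y$ (this is where property (ii) is crucial — the "at most one $f_i$ per violated edge" structure). Now apply the Boolean robust inequality to each $f_i$:
\[
\EX_{\bx}\!\left[\sqrt{I_{f_i,\red}^-(\bx)}\right] + \EX_{\by}\!\left[\sqrt{I_{f_i,\blue}^-(\by)}\right] \ge C_{\mathrm{Bool}}\cdot \eps(f_i),
\]
sum over $i$, and use superadditivity of $\sqrt{\cdot}$ on nonnegative reals — namely $\sum_i \sqrt{a_i} \ge \sqrt{\sum_i a_i}$ — together with linearity of expectation to get
\[
\EX_{\bx}\!\left[\sqrt{I_{f,\red}^-(\bx)}\right] + \EX_{\by}\!\left[\sqrt{I_{f,\blue}^-(\by)}\right] \;\ge\; \sum_i \left(\EX_{\bx}\!\left[\sqrt{I_{f_i,\red}^-(\bx)}\right] + \EX_{\by}\!\left[\sqrt{I_{f_i,\blue}^-(\by)}\right]\right) \;\ge\; C_{\mathrm{Bool}} \sum_i \eps(f_i) \;\ge\; C\cdot\eps(f),
\]
using property (i) in the last step. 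The promised loss of "only a factor of $2$" in the constant suggests the decomposition gives $\sum_i \eps(f_i) \ge \eps(f)/2$, with everything else tight; I would expect that factor of $2$ to come from splitting violations of $f$ into those resolved by "lowering" versus "raising" values, or from an up/down symmetrization.

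The main obstacle, and the technical heart of the argument, is establishing a Boolean decomposition with \emph{both} the distance-aggregation property (i) and the edge-disjointness/multiplicity property (ii) simultaneously — these pull in opposite directions. A naive threshold decomposition $f_i = \mathbbm{1}[f \ge \theta_i]$ satisfies a clean edge-multiplicity bound only if the thresholds are chosen adaptively per violated edge, but then it is unclear that the distances to monotonicity add up to $\Omega(\eps(f))$; conversely, a decomposition designed to preserve distance (e.g.\ via an optimal monotone approximant $g$ of $f$ and a matching on the violated points) may blow up edge multiplicities. I anticipate the decomposition is built by taking an optimal violation-removing modification of $f$, extracting from it a "canonical" set of disjoint violating structures (paths or matchings) of total size $\Omega(\eps(f)2^d)$, and then defining each Boolean $f_i$ to isolate one such structure so that its violated edges are a subset of $\sfm$ used by no other $f_j$; verifying that each $f_i$ is genuinely $\Omega(\eps(f_i))$-far and that the counts telescope is the delicate part. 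Once that decomposition is in hand (it is presumably proved as a separate theorem earlier in the paper and may be assumed here), the reduction above is essentially bookkeeping: apply the Boolean robust inequality term-by-term, sum, and invoke concavity/superadditivity of the square root. The non-robust \Thm{directed_talagrand_real} then follows as the special case where $\col$ colors every edge red.
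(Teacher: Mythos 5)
Your overall strategy --- decompose $f$ into Boolean functions, push the coloring down to each piece, apply the robust Boolean inequality of Pallavoor et al.\ term-by-term, and sum --- is exactly the paper's route. However, the recombination step is justified incorrectly, and the decomposition property (ii) you require is too weak to support it. From the pointwise bound $\sum_i I_{f_i,\red}^-(x)\le I_{f,\red}^-(x)$ you may conclude $\sqrt{I_{f,\red}^-(x)}\ge\sqrt{\sum_i I_{f_i,\red}^-(x)}$, but the square-root inequality you invoke, $\sum_i\sqrt{a_i}\ge\sqrt{\sum_i a_i}$, points the wrong way: it lower-bounds $\sum_i\sqrt{I_{f_i,\red}^-(x)}$ by $\sqrt{\sum_i I_{f_i,\red}^-(x)}$, it does not upper-bound it. Hence the first inequality of your display, $\EX_{\bx}\bigl[\sqrt{I_{f,\red}^-(\bx)}\bigr]\ge\sum_i\EX_{\bx}\bigl[\sqrt{I_{f_i,\red}^-(\bx)}\bigr]$, does not follow from what you assumed, and it can fail badly under your hypotheses: if $k$ different $f_i$'s each contribute a single red violated edge at the same vertex $x$ (which ``at most one $f_i$ per violated edge'' permits, since the edges are distinct), then $\sum_i\sqrt{I_{f_i,\red}^-(x)}=k$ while $\sqrt{I_{f,\red}^-(x)}$ can be as small as $\sqrt{k}$.

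What makes the recombination legitimate in the paper is a stronger, \emph{vertex-level} disjointness, which is precisely what \Thm{main1} provides: the Boolean functions come with vertex-disjoint induced subgraphs $\cH_1,\ldots,\cH_k$ satisfying $\cS_{f_i}^-\subseteq\cS_f^-\cap E(\cH_i)$, so at every vertex at most one $f_i$ has any violated edge at all (red or blue). Then pointwise $\sum_i\sqrt{I_{f_i,\red}^-(x)}=\sqrt{\sum_i I_{f_i,\red}^-(x)}\le\sqrt{I_{f,\red}^-(x)}$ --- there are no cross terms and no concavity loss --- and your chain of inequalities goes through, with the factor of $2$ coming from item 1 of \Thm{main1} (a maximal-matching versus vertex-cover argument on a max-weight, min-cardinality matching), not from an up/down symmetrization or a lowering/raising split. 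So your reduction is salvageable exactly by invoking the decomposition in the form the paper proves it (violations confined to disjoint subgraphs); with only the edge-multiplicity version you posited, the key step is a genuine gap.
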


Note that \Thm{real_tal_robust} implies \Thm{directed_talagrand_real} by considering the coloring where all violated edges are red. Therefore, we only present a proof of \Thm{real_tal_robust}.

\paragraph{Boolean decomposition.}
Our main technical contribution is the Boolean decomposition (\Thm{main1}). It allows us to prove \Thm{real_tal_robust} by reducing the general case of real-valued functions to the special case of Boolean functions.
\Thm{main1} states that every non-monotone function $f$ can be decomposed into Boolean functions $f_1, f_2, \dots, f_k$ that collectively preserve the distance to monotonicity of $f$ and violate a subset of the edges violated by $f$.  Crucially, they violate edges in disjoint subgraphs of the hypercube.

Our Boolean decomposition works for functions over any partially ordered domain. We represent such a domain by a directed acyclic graph (DAG). For a DAG $\cG$, we denote its vertex set by $V(\cG)$ and its edge set by $E(\cG).$ A DAG $\cG$ determines a natural partial order on its vertex set: for all $x,y\in V(\cG),$ we have $x\preceq y$  if and only if $\cG$ contains a path from $x$ to $y$. A function $f \colon V(\cG) \rightarrow \RR$  is {\em monotone} if $f(x)\leq f(y)$ whenever $x \preceq y$. An edge $(x,y)$ of $\cG$ is {\em violated} by $f$ if $f(x)>f(y)$.
The definitions of $\eps(f)$, the distance of $f$ to monotone, and $\sfm$, the set of violated edges, are the same as for the special case of the hypercube.

\begin{theorem} [Boolean Decomposition] \label{thm:main1} Suppose $\cG$ is a DAG and $f \colon V(\cG) \to \RR$ is a non-monotone function over the vertices of $\cG$. Then, for some $k \geq 1,$ there exist Boolean functions $f_1,\ldots,f_k \colon V(\cG) \to \{0,1\}$ and disjoint (induced) subgraphs $\cH_1,\ldots,\cH_k$ of $\cG$ for which the following hold:
\begin{enumerate}[noitemsep]
    \item $2\sum_{i=1}^k \eps(f_i) \geq \eps(f)$.
    \item $\cS_{f_i}^- \subseteq \cS_{f}^- \cap E(\cH_i)$ for all $i \in [k]$.
\end{enumerate}
\end{theorem}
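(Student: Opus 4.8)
The plan is to build the Boolean functions $f_i$ by a "peeling" or "layering" argument on the values of $f$, guided by an optimal monotone function. Let $g$ be a monotone function achieving the distance to monotonicity, so that $f$ and $g$ disagree on exactly an $\eps(f)$-fraction of points. Call a vertex $x$ \emph{bad} if $f(x) \neq g(x)$ and \emph{good} otherwise. The goal is to capture the bad vertices by thresholding $f$ at various levels while ensuring the resulting Boolean functions (a) only violate edges that $f$ already violates, and (b) have their violations confined to pairwise disjoint induced subgraphs. The natural first attempt is, for a threshold $t \in \RR$, to define the indicator $h_t(x) = \mathbbm{1}[f(x) \geq t]$. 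Such a thresholded function violates an edge $(x,y)$ only if $f(x) \geq t > f(y)$, which in particular forces $f(x) > f(y)$, so $\cS_{h_t}^- \subseteq \cS_f^-$ for every $t$. The difficulty is that a single threshold need not capture $\Omega(\eps(f))$ of the distance, and different thresholds do not automatically violate edges in disjoint subgraphs.

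\textbf{Key steps.} First I would pass to a sub-collection of bad vertices that is "monochromatic" with respect to whether $f$ overshoots or undershoots $g$: let $U = \{x : f(x) > g(x)\}$ and $L = \{x : f(x) < g(x)\}$; one of these has size at least $\eps(f)/2 \cdot 2^d$ (this is where the factor $2$ in conclusion~1 enters). Without loss of generality work with $U$. Second, I would use $g$'s value to slice $U$ into level sets $U_c = \{x \in U : g(x) = c\}$ ranging over the values $c$ in the image of $g$. For each such $c$, define $\cH_c$ to be the subgraph of $\cG$ induced on $V_c := \{x : f(x) \geq c\} \setminus \{x : g(x) < c\}$ — intuitively, the vertices where $f$ is at least $c$ but that $g$ has not yet "committed" to a value below $c$ — and define $f_c(x) = \mathbbm{1}[f(x) \geq c]$ restricted (as a violation-bearing object) to $\cH_c$. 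The monotonicity of $g$ is exactly what makes the $V_c$ nested/disjoint in the right way: if $g(x) < c$ then $g(z) < c$ for all $z \preceq x$, so once a vertex is excluded from $\cH_c$ its whole down-set is excluded, which is the mechanism forcing disjointness of the $\cH_c$'s across distinct levels $c$ (after possibly reindexing so each vertex belongs to at most one $\cH_c$). Third, I would verify conclusion~2: an edge $(x,y) \in E(\cH_c)$ violated by $f_c$ has $f(x) \geq c > f(y)$, hence $f(x) > f(y)$, so it lies in $\cS_f^-$; and it lies in $E(\cH_c)$ by construction. Fourth, for conclusion~1, I would argue that every vertex $x \in U$ is "charged" to some level $c$ at which $f_c$ is nonconstant on $\cH_c$ in a way that contributes to $\eps(f_c)$ — roughly, taking $c = g(x)$, the vertex $x$ has $f_c(x) = 1$, and the monotone repair of $g$ below level $c$ shows there must be a nearby vertex forcing $f_c$ off monotonicity — so $\sum_c \eps(f_c) \geq |U|/2^d \geq \eps(f)/2$.

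\textbf{Main obstacle.} The hard part will be pinning down the exact definitions of the subgraphs $\cH_i$ and the level sets so that two competing demands are met simultaneously: the $\cH_i$ must be genuinely pairwise disjoint (not merely "locally disjoint" near violated edges), yet collectively they must retain enough of the bad set $U$ to recover $\eps(f)/2$ of the distance in the sense that $\sum_i \eps(f_i) \geq \eps(f)/2$. These pull in opposite directions: disjointness wants the $\cH_i$ small, while distance-preservation wants them large enough that each restricted Boolean function is genuinely far from monotone on its own piece. I expect the resolution to hinge on choosing the slicing thresholds to be the \emph{values of an optimal monotone $g$} rather than values of $f$, using monotonicity of $g$ to get a clean "down-set closure" property that makes the pieces disjoint for free, and then a careful accounting (an averaging or charging argument over the levels $c$) to show no more than a factor $2$ is lost. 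A secondary technical point to handle carefully is that $f_i$ should be defined on all of $V(\cG)$ (the theorem asks for $f_i \colon V(\cG) \to \{0,1\}$) while its violations are confined to $\cH_i$; I would set $f_i$ to a constant outside $\cH_i$ and check this introduces no new violated edges and does not decrease $\eps(f_i)$ — or, more carefully, define $\eps(f_i)$ and the charging so that only the behavior on $\cH_i$ matters.
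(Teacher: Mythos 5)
There is a genuine gap, in fact two. First, your subgraphs are not disjoint: with $V_c=\{x: f(x)\geq c\}\setminus\{x: g(x)<c\}=\{x: f(x)\geq c \text{ and } g(x)\geq c\}$, the sets are \emph{nested} ($V_c\subseteq V_{c'}$ for $c'\leq c$), so distinct levels overlap heavily. The parenthetical ``after possibly reindexing so each vertex belongs to at most one $\cH_c$'' is exactly the missing content of the theorem: any disjointification by set differences destroys the induced-subgraph structure, can separate the two endpoints of a violated edge into different pieces (breaking item~2), and leaves the distance accounting unsupported. Second, the charging for item~1 fails as stated: charging a bad vertex $x\in U$ to the level $c=g(x)$ does not make $x$ (or any violation witnessing its badness) show up in $f_c$. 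For example, if $f(x)=10$, $f(y)=9$ for the violated pair $x\prec y$, and an optimal $g$ has $g(x)=1$, then thresholding at $c=g(x)$ gives $f_c(x)=f_c(y)=1$ and no violation; the violations of $f$ live at thresholds between $f(y)$ and $f(x)$, which need have nothing to do with $g(x)$. (There is also an off-by-one: for $x\in U_c$ you have $f(x)>c$, so even in good cases you would need the strict threshold $f(x)>c$ rather than $f(x)\geq c$; with $\geq$ the one-dimensional example $f(0)=2$, $f(1)=1$, $g\equiv 1$ already gives $\sum_c\eps(f_c)=0$.) More fundamentally, fixed thresholds---whether indexed by values of $f$ or of $g$---are known to lose a factor of $r$ in the distance (the paper points this out explicitly), and restricting each threshold to a sub-DAG does not by itself repair this, since the sub-DAGs you propose are determined by the same fixed cutoffs.

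The paper's proof takes a different route that resolves both tensions simultaneously. It starts from a max-weight, min-cardinality matching $M$ in the transitive closure (weight $\sum_{(x,y)\in M}(f(x)-f(y))$), which is a maximal matching in the violation graph and hence has size at least $\eps(f)|V(\cG)|/2$; this is where the factor $2$ comes from, not from splitting into overshoot/undershoot sets. It then merges ``conflicting'' matched pairs until the resulting sweeping graphs $\cH(S_i,T_i)$ (unions of all $s$-to-$t$ paths) are vertex-disjoint, proving along the way a rematching property which, combined with optimality of $M$, shows $f(x)>f(y)$ for every comparable $(x,y)\in S_i\times T_i$. Finally, instead of a fixed threshold, each $f_i$ uses a \emph{variable} threshold: $f_i(z)=1$ iff $f(z)>\max_{t\in T_i(z)}f(t)$ inside $\cH_i$ (and $f_i$ is $1$ above $\cH_i$, $0$ elsewhere). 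Distance is preserved because every matched pair of $M\cap(S_i\times T_i)$ becomes a violated pair of $f_i$, and item~2 follows from structural properties of sweeping graphs together with $T_i(y)\subseteq T_i(x)$ for $x\prec y$. If you want to salvage your outline, the piece you would have to reinvent is precisely this matching-plus-merging mechanism for disjointness and the reachability-dependent threshold; the optimal monotone function $g$ by itself does not provide them.
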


%
%
We derive \Thm{real_tal_robust} from \Thm{main1} in \Sec{directed-real-talagrand} and prove \Thm{main1} in \Sec{decomposition}.

A natural first attempt to proving \Thm{directed_talagrand_real} is to try reducing to the special case of Boolean functions (the KMS inequality) via a thresholding argument. Given $f \colon \{0,1\}^d \to \RR$ and $t \in \RR$, define $h_t \colon \{0,1\}^d \to \{0,1\}$ to be $h_t(x) = 1$ iff $f(x) > t$. Clearly, this can only reduce the left-hand side of \Eqn{talagrand} 
since the influential edges of $h_t$ are a subset of the influential edges of $f$. Thus, if there exists some $t \in \R$ such that $\eps(h_t) = \Omega(\eps(f))$, then applying the KMS inequality to $h_t$ would show that the inequality also holds for $f$. In fact, as we show in \Sec{undirected_real}, this technique easily allows us to reduce the \emph{undirected} inequality for the real-valued case to the Boolean case, without any significant additional ideas. However, in the directed setting, a simple argument shows that there exists $f$ for which $\eps(h_t) \leq \eps(f)/r$ for all $t \in \R$, where $r$ is the size of the image of $f$. Thus, additional ideas are required to prove \Thm{directed_talagrand_real} by a reduction to the KMS inequality. The highly structured decomposition of \Thm{main1} gives a collection of disjoint subgraphs $\cH_1,\ldots,\cH_k$ of the directed hypercube where, in each $\cH_i$, an independent "variable thresholding rule" can be applied, yielding the Boolean function $f_i$. The "threshold" for each vertex $x$ in $\cH_i$ depends on the values of the function at a particular set of vertices reachable from $x.$


The Boolean decomposition is quite powerful: in addition to enabling us to prove the new isoperimetric inequality, 
it 
can be used to easily derive a lower bound on the number of edges violated by a real-valued function directly from the bound for the Boolean case, without relying on \Thm{real_tal_robust}. This bound is used to analyze the edge tester for monotonicity whose significance is described in \Sec{applications}. The early works on monotonicity testing~\cite{GGLRS00,DGLRRS99,Ras99} have shown that $|\sfm|\geq \eps(f)\cdot 2^d$ for every Boolean function $f$ on the domain $\{0,1\}^d$. In other words, the number of edges violated by $f$ is at least the number of points on which the value of the function has to change to make it monotone.
This bound was generalized to the case of real-valued functions by~\cite{DGLRRS99,Ras99} who showed that
$|\sfm|\geq (\eps(f)/\lceil\log r\rceil) \cdot 2^d$ for every real-valued function $f$ on the domain $\{0,1\}^d$ and with image size $r$. (The size of the image of $f$ is the number of distinct values it takes.)
Chakrabarty and Seshadhri~\cite{CS13} improved this bound by a factor of $\Theta(\log r)$, thus removing the dependence on the size of the image of the function.
 Our Boolean decomposition of a real-valued function $f$ in terms of Boolean functions $f_1,\dots,f_k$, given by \Thm{main1}, yields this result of~\cite{CS13} as an immediate corollary of the special case for Boolean functions:
$$|\sfm|\geq \sum_{i=1}^k|{\mathcal{S}_{f_i}^-} |
\geq \sum_{i=1}^k \eps(f_i)\cdot 2^d
\geq \eps(f)\cdot 2^{d-1},
$$
where the inequalities follow by first applying Item~2 of \Thm{main1}, then applying the bound for the Boolean case, and, finally, applying Item~1 of \Thm{main1}.

\paragraph{Undirected isoperimetric inequality for real-valued functions.} The original isoperimetric inequality of Talagrand~$\cite{Tal93}$ treats the domain $\{0,1\}^d$ as an undirected hypercube. An undirected edge $\{x,y\}$ is {\em influential} if $f(x)\neq f(y)$. Let $I_f(x)$ be the number of influential edges 
$\{x,y\}$ incident on $x \in \{0,1\}^d$ for which $f(x) > f(y)$. This definition ensures that each influential edge is counted towards $I_f(x)$ for exactly one vertex $x$. The variance $\mathrm{var}(f)$ of a Boolean function is defined as $p_0(1-p_0)$, where $p_0$ is the probability that $f(x)=0$ for a uniformly random point $x$ in the domain. Talagrand \cite{Tal93} proved the following.

\begin{theorem}[Talagrand Inequality \cite{Tal93}]\label{thm:undirected_boolean}
For all functions $f\colon \{0,1\}^d \to \{0,1\}$,
\begin{equation}\label{eq:undirected-talagrand}
\underset{\bx \sim \{0,1\}^d}{\EX} \biggl[\sqrt{I_f(\bx)}\biggr] \geq \sqrt{2}\ \mathrm{var}(f).
\end{equation}
\end{theorem}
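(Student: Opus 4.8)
This is a classical theorem, so one option is simply to invoke Talagrand's argument \cite{Tal93}; the plan to reprove it is an induction on the dimension $d$ (here and below, $\EX_{x}$ is taken over a uniform $x$ in the relevant cube). The base case $d=1$ is immediate: a Boolean function on $\{0,1\}$ is either constant, in which case both sides vanish, or non-constant, in which case $\mathrm{var}(f)=\tfrac14$ while exactly one vertex has $I_f=1$ and the other has $I_f=0$, giving left-hand side $\tfrac12\ge\tfrac{\sqrt2}{4}$.

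For the inductive step, split $\{0,1\}^d=\{0,1\}^{d-1}\times\{0,1\}$ along the last coordinate and let $g,h\colon\{0,1\}^{d-1}\to\{0,1\}$ be the restrictions of $f$ to the two faces. The key structural facts are the influence recursion
\[
I_f(x,0)=I_g(x)+\mathbf{1}[g(x)>h(x)],\qquad I_f(x,1)=I_h(x)+\mathbf{1}[h(x)>g(x)]\qquad(x\in\{0,1\}^{d-1}),
\]
which holds because the unique edge at $(x,b)$ outside its face is the cross-edge to $(x,1-b)$, counted toward the endpoint of larger value; and the law-of-total-variance identity $\mathrm{var}(f)=\tfrac12\bigl(\mathrm{var}(g)+\mathrm{var}(h)\bigr)+\tfrac14(\EX g-\EX h)^2$, combined with $|\EX g-\EX h|\le\EX_x|g(x)-h(x)|=\Pr_x[g(x)>h(x)]+\Pr_x[h(x)>g(x)]$. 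Averaging the recursion over $\bx$ and feeding the inductive hypothesis to the $\sqrt{I_g}$ and $\sqrt{I_h}$ parts reduces the theorem to showing that the total cross-edge ``gain''
\[
\tfrac12\,\EX_{x}\!\Bigl[\sqrt{I_g(x)+\mathbf{1}[g(x)>h(x)]}-\sqrt{I_g(x)}\Bigr]+\tfrac12\,\EX_{x}\!\Bigl[\sqrt{I_h(x)+\mathbf{1}[h(x)>g(x)]}-\sqrt{I_h(x)}\Bigr]
\]
is at least $\tfrac{\sqrt2}{4}(\EX g-\EX h)^2$.

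That last bound is the crux. If every cross-edge contributed its \emph{maximum} possible increment of $1$, we would be done at once, since then the gain equals $\tfrac12\bigl(\Pr_x[g>h]+\Pr_x[h>g]\bigr)\ge\tfrac12(\EX g-\EX h)^2\ge\tfrac{\sqrt2}{4}(\EX g-\EX h)^2$; the obstacle is that $\sqrt{t+1}-\sqrt t=\bigl(\sqrt{t+1}+\sqrt t\bigr)^{-1}$ is sub-constant once $t\ge1$, so a term-by-term charging breaks at vertices that already carry a within-face violated edge. Following \cite{Tal93}, the fix is to prove a two-point convexity inequality that treats the two face contributions and the cross contribution at the resampled coordinate \emph{jointly} rather than separately, and to push a slightly strengthened statement through the induction so that this inequality closes the recursion; the two-point inequality itself reduces to a one-variable fact provable from concavity of $t\mapsto\sqrt t$ and the identity above, which quantifies the slack that a vertex with large within-face influence already provides. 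An alternative route that sidesteps the delicate induction is Fourier-analytic: write $f-\EX f=\sum_{i\in[d]}h_i$ in its Efron--Stein decomposition so that $\mathrm{var}(f)=\sum_i\|h_i\|_2^2$, apply the $L^1$--$L^2$ inequality $\|h_i\|_2^2\log\bigl(\|h_i\|_2^2/\|h_i\|_1^2\bigr)\le2\|\partial_i f\|_2^2$ (a consequence of the edge log-Sobolev inequality on $\{0,1\}^d$), and sum over $i$ with Cauchy--Schwarz to pass from $\sum_i\|\partial_i f\|_2$ to $\EX_{\bx}[\sqrt{I_f(\bx)}]$; here the difficulty migrates to the summation step, which needs a convexity argument to recover $\mathrm{var}(f)$ exactly on the right-hand side.
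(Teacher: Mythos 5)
The paper does not actually prove this statement: \Thm{undirected_boolean} is imported verbatim as a known result, so your first option---simply invoking \cite{Tal93}---coincides with what the paper does, and for the paper's purposes that is all that is needed.

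As a standalone reproof, however, your inductive sketch has a genuine gap, and it is not a routine one. The base case, the influence recursion across the last coordinate, and the variance decomposition are all fine, but the inequality you reduce the theorem to---that the cross-edge gain is at least $\frac{\sqrt{2}}{4}(\EX g-\EX h)^2$---is false in general. Take $d-1=3$, let $g$ be the parity of the three coordinates and $h\equiv 0$. Then $\EX g-\EX h=\tfrac12$, so the required gain is $\sqrt{2}/16\approx 0.088$; but every $x$ with $g(x)=1$ already carries $I_g(x)=3$, so the gain equals $\tfrac14\bigl(\sqrt{4}-\sqrt{3}\bigr)\approx 0.067$. The theorem survives for this $f$ only because the inductive hypothesis applied to $g$ is far from tight ($\EX[\sqrt{I_g}]=\sqrt{3}/2$ versus $\sqrt{2}\,\mathrm{var}(g)=\sqrt{2}/4$), which shows that the unspecified ``slightly strengthened statement'' you propose to push through the induction is not a bookkeeping device but the entire content of the proof; you never formulate it, nor the two-point inequality that is supposed to close the recursion, and deferring both to \cite{Tal93} from inside your own induction scheme does not yield a proof. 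The Efron--Stein/log-Sobolev alternative is likewise left open exactly at its key summation step. So the right move here is the one the paper makes: cite \cite{Tal93} as a black box; neither of your sketches, as written, establishes the inequality.
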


Before generalizing \Thm{undirected_boolean} to real-valued functions, we reinterpret it using a property testing notion. Observe that the natural definition of the variance of a real-valued function results in a quantity that depends on specific values of the function, whereas whether an edge is influential depends only on whether the values on its endpoints are different and not on the specific values themselves. So, variance is not a suitable notion for generalizing this inequality. We replace the variance of $f$ with the distance of $f$ to constant, denoted $\dist(f,\mathbf{const})$, i.e., the minimum of $\Pr_{\bx \sim \{0,1\}^d}[f(\bx)\neq g(\bx)]$ over all constant functions $g\colon \{0,1\}^d \to \R$.
For a Boolean function $f$, the distance to  constant is $\min\{p_0,(1-p_0)\}$ and, therefore,
the left-hand side of \Eqn{undirected-talagrand} is at least $\dist(f, \mathbf{const})/\sqrt{2}$.
Next, we state our generalization of Talagrand's inequality, proved in   \Sec{undirected_real}.
\begin{theorem}[Undirected Isoperimetric Inequality]\label{thm:undirected_real}
For all functions $f \colon \{0,1\}^d \to \RR$,
\[
\underset{\bx \sim \{0,1\}^d}{\EX}\left[\sqrt{I_f(\bx)}\right] \geq 
\frac{\dist(f,\mathbf{const})}{2\sqrt{2}}\text{.}
\]
\end{theorem}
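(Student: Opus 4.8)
\textbf{Proof plan for \Thm{undirected_real}.} I would follow the thresholding reduction to the Boolean Talagrand inequality (\Thm{undirected_boolean}) sketched in the introduction, making it quantitative. For $f \colon \{0,1\}^d \to \RR$ and $t \in \RR$, let $h_t \colon \{0,1\}^d \to \{0,1\}$ be given by $h_t(x) = 1$ if $f(x) > t$ and $h_t(x) = 0$ otherwise. The first step is to observe that thresholding never increases the isoperimetric quantity: if an edge $\{x,y\}$ contributes to $I_{h_t}(x)$ then $h_t(x) = 1$ and $h_t(y) = 0$, i.e.\ $f(x) > t \ge f(y)$, so the same edge contributes to $I_f(x)$; hence $I_{h_t}(x) \le I_f(x)$ for all $x$, and $\EX_{\bx}[\sqrt{I_f(\bx)}] \ge \EX_{\bx}[\sqrt{I_{h_t}(\bx)}]$ for every $t$. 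Combined with the distance-to-constant form of \Thm{undirected_boolean} already recorded above (namely $\EX_{\bx}[\sqrt{I_h(\bx)}] \ge \dist(h,\mathbf{const})/\sqrt{2}$ for every Boolean $h$), this yields $\EX_{\bx}[\sqrt{I_f(\bx)}] \ge \dist(h_t,\mathbf{const})/\sqrt{2}$ for every $t$.

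The remaining task, and the only one with any content, is to exhibit a threshold $t$ for which $\dist(h_t,\mathbf{const}) \ge \tfrac12\,\dist(f,\mathbf{const})$. Let $v_1 < \cdots < v_r$ be the distinct values of $f$ with probabilities $p_1,\dots,p_r$ under the uniform distribution on $\{0,1\}^d$ (we may assume $r \ge 2$, since if $f$ is constant both sides of the inequality vanish). Put $p^* = \max_i p_i$, so that $\dist(f,\mathbf{const}) = 1 - p^* \in (0,1]$, and let $S_j = p_1 + \cdots + p_j$. Choosing $t = v_j$ gives $\Pr_{\bx}[h_t(\bx) = 0] = S_j$, hence $\dist(h_{v_j},\mathbf{const}) = \min(S_j,\, 1 - S_j)$. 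Let $j$ be the least index with $S_j \ge \tfrac{1-p^*}{2}$; since $S_0 = 0 < \tfrac{1-p^*}{2} \le 1 = S_r$, we have $1 \le j \le r$, and by minimality $S_{j-1} < \tfrac{1-p^*}{2}$, so $S_j = S_{j-1} + p_j < \tfrac{1-p^*}{2} + p^* = \tfrac{1+p^*}{2} < 1$. The strict bound $S_j < 1$ forces $j \le r-1$, and $S_j \in [\tfrac{1-p^*}{2},\, \tfrac{1+p^*}{2}]$ gives $\dist(h_{v_j},\mathbf{const}) = \min(S_j,\, 1 - S_j) \ge \tfrac{1-p^*}{2} = \tfrac12\,\dist(f,\mathbf{const})$.

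Combining the two steps with this choice $t = v_j$ gives $\EX_{\bx}[\sqrt{I_f(\bx)}] \ge \dist(h_{v_j},\mathbf{const})/\sqrt{2} \ge \dist(f,\mathbf{const})/(2\sqrt{2})$, as desired. I do not anticipate a real obstacle: the thresholding monotonicity is immediate, and the only thing to get right is the pigeonhole selection of the threshold, where the key point is that the target interval $[\tfrac{1-p^*}{2},\, \tfrac{1+p^*}{2}]$ has length exactly $p^* = \max_i p_i$, so no single atom of the distribution of $f$ can let the partial sums $S_j$ jump across it.
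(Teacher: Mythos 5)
Your proposal is correct and follows essentially the same route as the paper: a thresholding reduction to the Boolean Talagrand inequality in its distance-to-constant form, with the cut placed so that both sides of the threshold carry mass at least $(1-\max_t p_t)/2$, which accounts for the factor-of-2 loss. The paper picks the threshold as a median value $m$ and splits into two cases (strict versus non-strict cut at $m$), whereas you pick it by pigeonhole on the partial sums $S_j$; this is only a cosmetic difference.
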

We do not discuss Margulis-type isoperimetric inequalities here, but note that their natural generalizations to the real range follow from our Talagrand-type inequalities (for the same reasons as for the special case of Boolean functions, as discussed in~\cite{KhotMS18}).

\subsection{Applications of the New Isoperimetric Inequality for Real-Valued Functions}\label{sec:applications}
We apply our generalized isoperimetric inequality (\Thm{real_tal_robust}) to improve algorithms for testing monotonicity and approximating the distance to monotonicity for real-valued functions.

\paragraph{Monotonicity testing.} Monotonicity  of functions, first studied in the context of property testing by Goldreich et al.~\cite{GGLRS00}, is one of the most widely investigated properties in this model~\cite{EKKRV00,DGLRRS99,Ras99,LR01,FLNRRS02,AC06,Fis04,HK08,BRW05,PRR06,ACCL07,BGJRW12,BCGM12,BBM12,CS13,ChSe14,CS16,BlaRY14,CDJS17, ChenDST15,BB16,ChenWX17,PRV18,BlackCS18, KhotMS18,CS19,BlackCS20}.
A function is $\eps$-far from monotone if its distance to monotonicity is at least $\eps$; otherwise, it is $\eps$-close to monotone.
An $\eps$-tester for  monotonicity is a randomized algorithm that, given a parameter $\eps\in(0,1)$ and oracle access to a function $f$, accepts with probability at least 2/3 if $f$ is monotone and rejects with probability at least 2/3 if $f$ is $\eps$-far from monotone.
Prior to our work, the best monotonicity tester for real-valued functions was the {\em edge tester}. The edge tester, introduced by \cite{GGLRS00}, queries the values of $f$ on the endpoints of uniformly random edges of the hypercube and rejects if it finds a violated edge. As we discussed in \Sec{intro-inequalities}, a series of works \cite{GGLRS00,DGLRRS99,Ras99,CS13} proved lower bounds on $|\sfm|$, the number of violated edges, resulting in the tight analysis of the edge tester for both Boolean and real-valued functions: $O(d/\eps)$ queries are sufficient (and also necessary, e.g., for $f(x)=1-x_1$, the anti-dictator function).
%
For many years, it remained open whether an $o(d)$-query tester for monotonicity existed, until a sequence of breakthroughs~\cite{CS16,ChenST14,KhotMS18} designed testers for Boolean functions with query complexity $\widetilde{O}(d^{7/8}), \widetilde{O}(d^{5/6})$, and finally $\widetilde{O}(\sqrt{d})$. Prior to our work, the same question remained open for functions with image size, $r$, greater than 2.

We show that when $r$ is small compared to $d$, monotonicity can be tested with $o(d)$ queries. (Note that $r\leq 2^d$.)


\begin{theorem}\label{thm:tester} 
There exists a nonadaptive, 1-sided error $\eps$-tester for monotonicity of  $f \colon \{0,1\}^d \to \RR$ 
that makes $\widetilde{O}\Big(\min\big(\frac{r \sqrt{d}}{\eps^2},\frac{d}{\eps}\big)\Big)$ queries and works for all functions $f$ with image size $r$.
\end{theorem}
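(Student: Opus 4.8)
The plan is to build, for each $r$, the better of two nonadaptive $1$-sided error testers — the classical edge tester, of query complexity $O(d/\eps)$, and a multi-scale pair tester in the style of Khot, Minzer, and Safra, of query complexity $\widetilde{O}(r\sqrt d/\eps^2)$ — and to run whichever is cheaper for the given $d,\eps$ and (an upper bound on) $r$. The edge tester samples $O(d/\eps)$ uniformly random edges of $\{0,1\}^d$, queries $f$ on their endpoints, and rejects on a violated edge. Its correctness is immediate from the bound $|\sfm|\ge\eps(f)\cdot 2^{d-1}$ noted after \Thm{main1}: the hypercube has $d\cdot 2^{d-1}$ edges, so a uniformly random one is violated with probability $\ge\eps(f)/d$ when $f$ is $\eps$-far; the tester is nonadaptive and never rejects a monotone function. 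Since $\min(r\sqrt d/\eps^2,d/\eps)=d/\eps$ unless $r<\eps\sqrt d\le\sqrt d$, from now on I may assume $r\le\sqrt d$, so in particular $\log r\le\tfrac12\log d$.

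Next I would set up the pair tester. For each scale $\tau$ in $\{2^0,2^1,\dots,2^{\lceil\log d\rceil}\}$ it repeats $\widetilde O(r\sqrt d/\eps^2)$ times the experiment: draw $\bx\sim\{0,1\}^d$ uniformly, draw a uniformly random $\by$ comparable to $\bx$ with $|\bx\oplus\by|=\tau$ (optionally also a uniformly random edge on a uniformly random monotone $\bx$--$\by$ path), query $f$ on all sampled points, and reject if a violated comparable pair (equivalently, a violated edge) of $f$ is exhibited. This tester is nonadaptive, and it rejects only after producing an explicit witness that $f$ is not monotone, so it has $1$-sided error unconditionally. Hence the theorem reduces to one \emph{soundness} statement: if $f$ is $\eps$-far from monotone, then at some scale $\tau$ a single run of the experiment exhibits a witness with probability $\gtrsim\eps^2/(r\sqrt d\cdot\mathrm{polylog}(d))$; amplifying over the $O(\log d)$ scales then gives the claimed query bound, since each run costs $O(1)$ queries.

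To prove the soundness statement I would combine the robust real-valued inequality \Thm{real_tal_robust} with the Boolean decomposition \Thm{main1}, the latter being the source of the factor $r$. Apply \Thm{main1} to get Boolean $f_1,\dots,f_k\colon\{0,1\}^d\to\{0,1\}$ on disjoint induced subgraphs $\cH_1,\dots,\cH_k$ with $\sum_{i=1}^k\eps(f_i)\ge\eps(f)/2$, with a violated comparable pair of $f_i$ inside $\cH_i$ witnessed (via Item~2, using a random edge on the path when $\tau>1$) as a violation of $f$, and with $k=O(r)$ (one piece per value gap). For each Boolean $f_i$, the Boolean special case of \Thm{real_tal_robust} together with the standard KMS ``isoperimetric inequality $\Rightarrow$ tester'' argument — bucketing vertices by directed influence and propagating violations along random descending walks — produces a scale $\tau_i$ at which the scale-$\tau_i$ experiment exhibits a violation of $f_i$ with probability $\gtrsim\eps(f_i)^2/(\sqrt d\cdot\mathrm{polylog}(d))$; by disjointness of the $\cH_i$ these are (essentially) disjoint contributions to the experiment on $f$. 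Bucketing the indices $i$ by $\tau_i$ into $O(\log d)$ classes, some class $B$ carries $\sum_{i\in B}\eps(f_i)\ge\eps(f)/O(\log d)$, and Cauchy--Schwarz on $B$ (whose size is $\le k=O(r)$) gives $\sum_{i\in B}\eps(f_i)^2\ge\bigl(\sum_{i\in B}\eps(f_i)\bigr)^2/|B|\gtrsim\eps(f)^2/(r\cdot\mathrm{polylog}(d))$, so at the common scale of $B$ the experiment on $f$ succeeds with probability $\gtrsim\eps(f)^2/(r\sqrt d\cdot\mathrm{polylog}(d))$, as required.

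The main obstacle is exactly this last step: running the KMS ``inequality $\Rightarrow$ tester'' machinery through the decomposed, real-valued setting and certifying that the overhead is precisely a factor $r$ — checking that each piece $f_i$, whose support graph $\cH_i$ need not be a subcube, can still be handled by the walk/bucketing argument (or arguing it on the whole cube while controlling the conditioning ``$\bx,\by\in\cH_i$''), that a scale-$\tau_i$ violation of $f_i$ detected by the experiment is genuinely a violation of $f$ (this is where Item~2 of \Thm{main1} and the option of sampling a path edge enter, and it is what makes the soundness bound valid), and that $k=O(r)$. Nonadaptivity, $1$-sided error, and the $O(\log d)$-fold amplification are routine once the soundness bound is in hand. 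A cleaner alternative that also meets the bound in the relevant regime $r\le\sqrt d$ avoids $f$ altogether: run the Boolean KMS tester at proximity $\Theta(\eps/\log r)$ on each of the $\le r-1$ threshold functions $f_{\ge t}$ and reject if any does, using a range-reduction lemma (itself a consequence of \Thm{main1}) to ensure some $f_{\ge t}$ is $\Omega(\eps/\log r)$-far; the total cost $(r-1)\cdot\widetilde O\bigl(\sqrt d\,(\log r/\eps)^2\bigr)$ is $\widetilde O(r\sqrt d/\eps^2)$ because $\log r\le\tfrac12\log d$ in this regime.
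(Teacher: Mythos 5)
Your overall architecture (edge tester for the $d/\eps$ branch, a KMS-style multi-scale pair tester analyzed via \Thm{real_tal_robust} for the $r\sqrt d/\eps^2$ branch) matches the paper's, but the soundness argument you give for the pair tester has a fatal gap: it rests on the claim that the Boolean decomposition has $k=O(r)$ pieces (``one piece per value gap''). Nothing in \Thm{main1} ties $k$ to the image size $r$: the pieces come from a max-weight min-cardinality matching and the conflict-merging procedure, and $k$ can be huge even for Boolean $f$ (e.g., a function whose violated edges form many far-apart, non-conflicting sweeping graphs gives $k=\Omega(d)$ or worse with $r=2$). Since your Cauchy--Schwarz step needs $\sum_{i\in B}\eps(f_i)^2\geq(\sum_{i\in B}\eps(f_i))^2/|B|$ with $|B|\leq k$, without $k=O(r)$ you only get $\eps^2/(k\,\mathrm{polylog}\,d)$, which does not yield the claimed $\eps^2/(r\sqrt d\cdot\mathrm{polylog}\,d)$ detection probability. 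A secondary issue: when a violated \emph{pair} of $f_i$ at distance $\tau>1$ is found, your fix of querying a random edge on a monotone path costs a factor of $\tau$ in the detection probability (or a factor $\tau$ in queries), which again breaks the bound; this one is repairable, since the argument in \Sec{functions} (namely $T_i(y)\subseteq T_i(x)$ and the fact that any violated pair of $f_i$ has both endpoints in $V(\cH_i)$) shows violated comparable pairs of $f_i$ are violated pairs of $f$, but you would need to prove that rather than invoke Item~2 of \Thm{main1}, which only covers edges. Your proposed alternative is also unsound: the ``range-reduction lemma'' asserting that some threshold function $f_{\geq t}$ is $\Omega(\eps/\log r)$-far is false --- the paper explicitly notes there are functions with $\eps(h_t)\leq\eps(f)/r$ for \emph{every} threshold $t$, so per-threshold testing at proximity $\Theta(\eps/\log r)$ can miss every violation, and working at proximity $\eps/r$ instead inflates the cost to $\widetilde O(r^3\sqrt d/\eps^2)$.

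For contrast, the paper does not decompose $f$ inside the tester analysis at all. It uses \Thm{real_tal_robust} (whose proof is where \Thm{main1} lives) together with an even/odd-weight bipartition trick to extract a single $(K,\Delta)$-good violated subgraph with $K\sqrt\Delta=\Theta(\eps(f)2^d/\log d)$ --- with \emph{no} loss in $r$ (\Lem{good_subgraph}) --- and the factor $r$ enters elsewhere: the bound on non-persistent vertices degrades to $(r-1)\cdot O(\tau/\sqrt d)$ (\Lem{persistent}, proved by a union bound over the $r-1$ threshold functions), forcing the walk length $\tau'$ in \Clm{kms_blackbox} to be a factor $r-1$ shorter than in the Boolean case, which is exactly what produces the $r$ in the query complexity. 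If you want to salvage your route, you would need either a genuine bound on $k$ in terms of $r$ (which does not hold) or an aggregation over the pieces that avoids Cauchy--Schwarz in $k$; the persistence-based route is the one that actually closes the argument.
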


The proof of \Thm{tester} (in \Sec{tester}) heavily relies on the generalized isoperimetric inequality of \Thm{real_tal_robust}. We extend several other combinatorial properties of Boolean functions to real-valued functions. In particular, the persistence of a vertex $x \in \{0,1\}^d$ is a key combinatorial concept in the analysis. A vertex $x \in \{0,1\}^d$ is $\tau$-persistent if, with high probability, a random walk that starts at $x$ and takes $\tau$ steps in the $d$-dimensional directed hypercube ends at a vertex $y$ for which $f(y) \leq f(x)$. 
As we show, the upper bound on the number of vertices which are not $\tau$-persistent grows linearly with the distance $\tau$ \emph{and} the image size $r$. For the tester analysis, one needs to carefully choose the distance parameter $\tau$ for which many vertices are $\tau$-persistent. In particular, this value of $\tau$  also depends on the image size $r$, resulting in the linear dependence on $r$ in the query complexity of the tester.



\paragraph{Our lower bound for testing monotonicity.} We show that our monotonicity tester is optimal among nonadaptive, 1-sided error testers. 

\begin{theorem}\label{thm:lower_bound}
There exists a constant $\eps > 0$, such that for all $d, r \in \NN$, every nonadaptive, $1$-sided error $\eps$-tester for monotonicity of functions $f \colon \{0,1\}^d \to [r]$ requires $\Omega(\min(r\sqrt{d},d))$ queries.
\end{theorem}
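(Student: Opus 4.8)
The plan is to prove this via Yao's principle in its standard form for one-sided error testers. First I would record the routine fact that a one-sided error tester can reject $f$ only when its queried values already certify non-monotonicity: if the query set $Q$ contains no pair $x\preceq y$ with $f(x)>f(y)$, then $z\mapsto\max\{f(x):x\in Q,\ x\preceq z\}$ is a monotone function consistent with all answers, so the tester must accept. Hence a nonadaptive one-sided error $\eps$-tester is a distribution over query sets $\mathbf Q$ with the property that for every $f$ that is $\eps$-far from monotone, $\Pr_{\mathbf Q}[\mathbf Q$ contains a violated comparable pair of $f]\ge 2/3$. So it suffices to construct a distribution $\cF$ over functions $\{0,1\}^d\to[r]$, all $\Omega(1)$-far from monotone, such that every fixed set $Q$ with $|Q|=o(\min(r\sqrt d,d))$ contains a violated comparable pair of $\boldsymbol f\sim\cF$ with probability less than $2/3$; averaging over $\boldsymbol f$ then rules out any such tester.

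The hard distribution I would use is a ``staircase of hidden anti-dictators.'' Put $m:=\Theta(\min(r,\sqrt d))$ and $w:=\Theta(\sqrt d/m)$, so that $w\ge 1$; carve the Hamming-weight window $[d/2-\Theta(\sqrt d),\,d/2+\Theta(\sqrt d)]$ into $m$ consecutive blocks $I_1,\dots,I_m$, each spanning $w+1$ weights, and let $S_b:=\{x:|x|\in I_b\}$, where $|x|$ is the Hamming weight. Draw $\boldsymbol i_1,\dots,\boldsymbol i_m$ independently and uniformly from $[d]$ and set $\boldsymbol f(x)=b+1-\mathbf 1[x_{\boldsymbol i_b}=1]$ for $x\in S_b$, $\boldsymbol f(x)=1$ for $|x|$ below $S_1$, and $\boldsymbol f(x)=m+1$ for $|x|$ above $S_m$. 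Since $m+1\le r$ (adjusting constants), the image lies in $[r]$, and the function is fixed once $\boldsymbol i_1,\dots,\boldsymbol i_m$ are. For $\eps(\boldsymbol f)=\Omega(1)$: the values inside $S_b$ lie in $\{b,b+1\}$, so no comparable pair with endpoints in two different blocks, or in a block and a flat region, is violated; therefore a monotone $g$ on the cube restricts to a monotone function on each block, whence $\eps(\boldsymbol f)\cdot 2^d\ge\sum_b \eps(\boldsymbol f|_{S_b})\,|S_b|$. Inside $S_b$, every outgoing $\boldsymbol i_b$-edge whose lower endpoint has $\boldsymbol i_b$-coordinate $0$ is violated, and these edges form a matching covering a constant fraction of $S_b$; since $\bigcup_b S_b$ is a constant fraction of the cube, $\eps(\boldsymbol f)=\Omega(1)$.

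For the lower bound I would observe that every violated comparable pair of $\boldsymbol f$ lies inside a single block $S_b$ and is a pair $x\preceq y$ with $x_{\boldsymbol i_b}=0$ and $y_{\boldsymbol i_b}=1$; equivalently, $\boldsymbol i_b$ is one of the coordinates on which $x$ is $0$ and $y$ is $1$. So, for a fixed $Q$, let $\mathrm{Exp}_b(Q)\subseteq[d]$ be the set of coordinates $j$ such that some comparable pair $x\preceq y$ in $Q\cap S_b$ has $x_j=0$ and $y_j=1$. Then $Q$ contains a violated pair of $\boldsymbol f$ exactly when $\boldsymbol i_b\in\mathrm{Exp}_b(Q)$ for some $b$, so by a union bound $\Pr[Q$ contains a violated pair of $\boldsymbol f]\le\sum_{b=1}^m|\mathrm{Exp}_b(Q)|/d$. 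The crux is the combinatorial estimate $|\mathrm{Exp}_b(Q)|\le w\,|Q\cap S_b|$, the extension to blocks of width $w$ of the fact that $t$ hypercube vertices span edges in at most $t-1$ directions. To see it, take a spanning forest $T$ of the comparability graph of $Q\cap S_b$; if $j\in\mathrm{Exp}_b(Q)$ is witnessed by $x\preceq y$, walk along the $T$-path from $x$ to $y$: since the $j$-th coordinate passes from $0$ to $1$ along it, some edge $\{u,v\}$ of $T$ on this path has $u_j=0$ and $v_j=1$, and because $u,v$ are comparable with $u_j<v_j$ we must have $u\prec v$, so $j$ is among the at most $|v|-|u|\le w$ coordinates flipped up from $u$ to $v$. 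Charging each exposed $j$ to such a forest edge gives $|\mathrm{Exp}_b(Q)|\le w\cdot|E(T)|\le w\,|Q\cap S_b|$. Plugging in and using that the blocks are disjoint, $\Pr[Q$ contains a violated pair of $\boldsymbol f]\le\frac{w}{d}\sum_b|Q\cap S_b|\le\frac{w\,|Q|}{d}$, so if this is at least $2/3$ then $|Q|=\Omega(d/w)=\Omega(m\sqrt d)=\Omega(\min(r\sqrt d,d))$.

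I expect the combinatorial estimate $|\mathrm{Exp}_b(Q)|\le w\,|Q\cap S_b|$ to be the main point requiring care: the spanning-forest argument cannot be run naively because a path in the comparability graph need not be monotone, so one must locate a $\prec$-increasing tree edge where the relevant coordinate switches on. The other places needing attention are the geometric bookkeeping — that $\Theta(\sqrt d)$ middle Hamming-weight layers carry a constant fraction of the cube and that, inside each block, the violated $\boldsymbol i_b$-matching covers a constant fraction (which is cleanest for $w$ large and needs a direct binomial estimate when $w$ is a small constant) — so that $\eps(\boldsymbol f)$ is genuinely a constant independent of $d$ and $r$; and checking that the two regimes $r\le\sqrt d$ (where $m=\Theta(r)$, $w=\Theta(\sqrt d/r)$) and $r\ge\sqrt d$ (where $m=\Theta(\sqrt d)$, $w=\Theta(1)$, recovering $\Omega(d)$) are handled uniformly by this one construction.
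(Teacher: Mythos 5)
Your proposal is correct and takes essentially the same route as the paper: a staircase of anti-dictator functions supported on the $\Theta(\sqrt d)$ middle Hamming-weight levels split into blocks of width $w=\Theta(\sqrt d/\min(r,\sqrt d))$, the observation that a 1-sided error tester must find a violated comparable pair, and the counting bound that a set of $q$ queries can expose at most $wq$ hidden coordinates, yielding $\Omega(d/w)=\Omega(\min(r\sqrt d,d))$. The minor deviations are all sound and, if anything, make the argument more self-contained: you draw an independent hidden coordinate per block (the paper uses one coordinate $i$ for all blocks, giving a family of only $d$ functions), you prove the exposure bound directly via a spanning-forest charging argument (the paper invokes the capture lemma, Lemma 3.18 of BCPRS17), and you handle $r>\sqrt d$ inside the same construction by capping the number of blocks at $\Theta(\sqrt d)$ (the paper instead quotes the $\Omega(\min(d,r^2))$ bound of Blais, Brody, and Matulef for that regime).
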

We prove \Thm{lower_bound} (in \Sec{lower_bound}) by generalizing a construction of Fischer et al.~\cite{FLNRRS02} that showed that nonadaptive, 1-sided error monotonicity testers of Boolean functions must make $\Omega(\sqrt{d})$ queries.
Blais et al.~\cite{BBM12} demonstrated that every tester for  monotonicity  over the $d$-dimensional hypercube domain requires $\Omega(\min(d,r^2))$ queries. Our lower bound is stronger when $r \in [2, \sqrt{d}]$, although it applies only to nonadaptive, 1-sided error algorithms. 

\paragraph{Approximating the distance to monotonicity.}
Motivated by the desire to handle noisy inputs, Parnas et al.~\cite{PRR06} generalized the property testing model to tolerant testing. 
There is a direct connection between tolerant testing of a property and approximating the distance to the property with additive and multiplicative error in the sense that these problems can be reduced to each other with the right setting of parameters and have the same query complexity up to logarithmic factors (see, e.g., \cite[Claim 2]{PRR06} and \cite[Theorem A.1]{PallavoorRW20}). One clean way to state distance approximation guarantees is to replace the additive error $\alpha$ with the promise that the input function is $\alpha$-far from the property, as specified in the following definition.
A randomized {\em $c$-approximation algorithm for the distance to monotonicity,} where $c>1$, is given a parameter $\alpha\in(0,1)$ and oracle access to a function $f\colon \{0,1\}^d \rightarrow \R$ that is $\alpha$-far from monotone. It outputs an estimate $\hat{\eps}$ that, with probability at least 2/3, satisfies $\eps(f) \leq \hat{\eps} \leq c \cdot \eps(f)$. 

Fattal and Ron \cite{FR10} studied the problem of approximating the distance to monotonicity for real-valued functions over the hypergrid domain $[n]^d$. For the special case of the hypercube domain, they give an $O(d\log r)$-approximation algorithm for functions with image size $r$ that makes $\mathrm{poly}(d, 1/\alpha)$ queries. \Thm{real_tal_robust} allows us to improve on their result, by showing that the algorithm of Pallavoor et al.~\cite{PallavoorRW20} for approximating the distance to monotonicity of Boolean functions also works for real-valued functions, without any loss in the approximation guarantee.

 \begin{theorem}\label{thm:approx_dist_mono_real}
There exists a nonadaptive $O(\sqrt{d\log d})$-approximation algorithm for the distance to monotonicity that, given a parameter $\alpha \in (0, 1)$ and oracle access to a function $f\colon \{0,1\}^d \rightarrow \mathbb{R}$ that is $\alpha$-far from monotone, makes poly($d, 1/\alpha$) queries.
\end{theorem}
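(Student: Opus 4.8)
The plan is to run, essentially verbatim, the distance-approximation algorithm of Pallavoor et al.~\cite{PallavoorRW20} designed for Boolean functions, and to argue that its analysis transfers to the real-valued setting once the Boolean robust isoperimetric inequality it invokes is replaced by \Thm{real_tal_robust}. The key preliminary observation is that this algorithm is already well-defined on real-valued inputs: like the other monotonicity algorithms discussed above, it inspects the queried values of $f$ only through the comparisons ``$f(x) > f(y)$'' --- to decide which edges lie in $\sfm$ and to evaluate the local quantities $I_f^-(\cdot)$ and their thresholded and recolored variants --- so both its output and its query set depend only on the order type of $f$ on the queried points. Hence it stays nonadaptive, and its query complexity is unchanged: the estimate is produced by drawing $\poly(d,1/\alpha)$ sample vertices $\bx$, querying $f$ on $\bx$ and on each of its at most $d$ up-neighbors to read off $I_f^-(\bx)$, and taking the best of the $O(\log d)$ dyadic influence scales, for a total of $\poly(d,1/\alpha)$ queries exactly as in \cite{PallavoorRW20}. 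The equivalence between distance approximation and tolerant testing used to phrase the guarantee (\cite[Claim~2]{PRR06} and \cite[Theorem~A.1]{PallavoorRW20}) is itself insensitive to the range.

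It remains to establish correctness, i.e.\ that with probability at least $2/3$ the output $\hat{\eps}$ obeys $\eps(f) \le \hat{\eps} \le O(\sqrt{d\log d})\cdot\eps(f)$. Once the standard Chernoff concentration (which only uses that the sample size is $\poly(d,1/\alpha)$ and that $\eps(f)\ge\alpha$) is in place, it suffices to show that the statistic $\Phi(f)$ estimated by the algorithm --- a robust, multi-scale variant of $\EX_{\bx}[\sqrt{I_f^-(\bx)}]$ --- satisfies $\Omega(\eps(f)) \le \Phi(f) \le O(\sqrt{d\log d})\cdot\eps(f)$. The lower bound $\Phi(f) = \Omega(\eps(f))$ is precisely where \cite{PallavoorRW20} appeal to the robust isoperimetric inequality for Boolean functions; we substitute \Thm{real_tal_robust}, applied to the coloring of $\sfm$ that $\Phi$ effectively uses. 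Since the constant in \Thm{real_tal_robust} is only a factor $2$ smaller than in the Boolean version, this changes $\Phi(f)$ by at most a constant, which is absorbed into the $O(\sqrt{d\log d})$ approximation factor. For the upper bound $\Phi(f) = O(\sqrt{d\log d})\cdot\eps(f)$, fix a monotone $g$ with $\Pr_{\bx}[f(\bx)\neq g(\bx)] = \eps(f)$ and color each violated edge $(x,y)\in\sfm$ red if $g(x)\neq f(x)$ and blue otherwise; monotonicity of $g$ gives $g(x)\le g(y)$ while $f(x)>f(y)$, so a blue edge $(x,y)$ necessarily has $g(y)\neq f(y)$. Thus red violated edges leave only the $\eps(f)\cdot 2^d$ ``corrected'' vertices and blue violated edges enter only corrected vertices, whence both $\EX_{\bx}[\sqrt{I_{f,\red}^-(\bx)}]$ and $\EX_{\by}[\sqrt{I_{f,\blue}^-(\by)}]$ are at most $\sqrt{d}\cdot\eps(f)$; the extra $\sqrt{\log d}$ factor arises, exactly as in \cite{PallavoorRW20}, from searching over the $O(\log d)$ dyadic influence scales instead of having access to the optimal coloring, and this part of the argument does not see the range. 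Every ingredient here --- existence of the optimal monotone $g$, the identity that $\eps(f)\cdot 2^d$ equals the minimum number of points whose values must change, and the red/blue charging --- uses only that $\RR$ is totally ordered, so it applies to $f\colon\{0,1\}^d\to\RR$ without change.

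The main obstacle is bookkeeping rather than a new idea: one must locate every step in the analysis of \cite{PallavoorRW20} that implicitly uses $f\colon\{0,1\}^d\to\{0,1\}$ and confirm it survives. Two places call for a genuine (but routine) re-derivation: (i) the combinatorial ``good coloring'' lemma sketched above, which in the Boolean case may be phrased via the monotone closure or via rounding and which we instead state over an arbitrary totally ordered range using an optimal monotone correction $g$; and (ii) the multi-scale estimation subroutine, which we must check still approximates $\Phi(f)$ to within a constant factor with $\poly(d,1/\alpha)$ queries when $f$ is real-valued --- but here the only relevant fact is that $I_f^-(x)$ still ranges over $\{0,1,\dots,d\}$, so the dyadic bucketing and the Chernoff bounds are untouched. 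I expect nothing harder than this to come up: the sole property of the Boolean hypercube that \cite{PallavoorRW20} use in a non-trivial way is the robust isoperimetric inequality, and \Thm{real_tal_robust} provides its real-valued analogue with essentially the same constant.
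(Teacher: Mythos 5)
Your top-level plan coincides with the paper's: run the (comparison-based) tolerant tester of Pallavoor et al.\ unchanged, convert tolerant testing to distance approximation via their Theorem~A.1, and replace the Boolean robust isoperimetric inequality by \Thm{real_tal_robust}. However, your account of what the algorithm estimates, and of where the range-dependence sits, does not match what is actually needed. The algorithm does not estimate a ``robust multi-scale variant of $\EX_{\bx}[\sqrt{I_f^-(\bx)}]$''; it estimates the fraction of violated edges and the capture probabilities $\mu_f(\bS)$ for random coordinate sets sampled at $O(\log d)$ densities. Your proposed sufficient condition $\Omega(\eps(f)) \leq \Phi(f) \leq O(\sqrt{d\log d})\cdot \eps(f)$ is not correct if $\Phi$ is anything close to $\EX_{\bx}[\sqrt{I_f^-(\bx)}]$: take $f \equiv 1$ except $f(y_0)=0$ at a single middle-layer vertex $y_0$; then $\eps(f)=2^{-d}$ while $\EX_{\bx}[\sqrt{I_f^-(\bx)}] = \Theta(d)\cdot 2^{-d}$, so the upper bound fails by a $\sqrt{d/\log d}$ factor even for Boolean $f$. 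Your red/blue charging via an optimal monotone correction $g$ bounds the robust sum for one particular coloring that the algorithm cannot access; in the actual analysis the ``upper bound'' side is simply the observation that $\mu_f(S) \leq 2\eps(f)$ and that the violated-edge fraction is at most $2\eps(f)$, and all of the work is in the detection lemma (the analogue of Lemma~2.8 of \cite{PallavoorRW20}) showing that a far function with few violated edges has $\EX_{\bS}[\mu_f(\bS)] = \Omega(\eps/\sqrt{d\log d})$ at some scale.

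The genuine gap is inside that detection lemma, exactly where you assert that ``the sole property of the Boolean hypercube used non-trivially is the robust isoperimetric inequality.'' Besides the inequality, the Boolean proof needs a vertex set $B$ such that no violated edge has both endpoints in $B$, together with a single color $b$ for which $\sum_{x \in B}\sqrt{I_{f,b}^-(x)} = \Omega(\eps\cdot 2^d)$; for Boolean $f$ this is free because $B$ can be a level set of $f$ and each vertex is incident to violated edges of only one orientation. For real-valued $f$ a vertex can carry both incoming and outgoing violated edges (and both colors), so that choice is unavailable, and this is precisely the subtlety the paper flags and fixes: it colors each violated edge toward the endpoint with larger total violated degree $U_f^-$, bipartitions the cube by parity of Hamming weight, and averages over the two parities and two colors, losing only a constant factor before running the bucketing/Cauchy--Schwarz and capture-probability argument. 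Without this (or some substitute construction of $B$, $b$, and the coloring), the transfer of the Boolean analysis is not justified, and your items (i) and (ii) --- the $g$-based coloring and the concentration of the estimators --- address steps that were never range-dependent in the first place.
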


Pallavoor et al.\  prove that this approximation ratio is nearly optimal for nonadaptive algorithms, even for the special case of Boolean functions. We also note that, by the connection between tolerant testing and erasure-resilient testing observed by Dixit et al.~\cite{DixitRTV18},  our \Thm{approx_dist_mono_real} implies the existence of an erasure-resilient $\eps$-tester for monotonicity of functions $f: \{0,1\}^d\to\RR$ that can handle up to $\Theta(\eps/\sqrt{d\log d})$ erasures with query complexity   poly($d, 1/\eps$). The tester of Dixit et al.\ could handle only $O(\eps/d)$ erasures.
We prove \Thm{approx_dist_mono_real} in \Sec{distance-approximation}.

\subsection{Other Prior Work on Monotonicity Testing and Open Questions}
The query complexity of monotonicity testing of Boolean functions over the hypercube has been resolved for nonadaptive testers by Chen et al.~\cite{ChenDST15,ChenWX17} who proved a lower bound of $\widetilde{\Omega}(\sqrt{d})$. For adaptive testers, the best lower bound known to date is $\widetilde{\Omega}(d^{1/3})$, also shown by \cite{ChenWX17}. It is an open question whether adaptive algorithms can do better than nonadaptive ones for functions over the hypercube domain, both in the case of Boolean functions and, more generally, for functions with small image size. As we mentioned before, there is a lower bound of $\Omega(d)$ for functions with image size $\Omega(\sqrt{d})$~\cite{BBM12}.

Monotonicity testing has also been studied for functions on other types of domains, including general partially ordered domains~\cite{FLNRRS02}, with particular attention to the hypergrid domain $[n]^d$. (It has also been investigated in the context where the distance to monotonicity is the normalized $L_p$ distance instead of the Hamming distance, but we focus our attention here on the Hamming distance.)
When $d=1$, monotonicity testing on the hypergrid $[n]$ is equivalent to testing sortedness of $n$-element arrays. This problem was introduced by Ergun et al.~\cite{EKKRV00}. Its query complexity has been completely pinned down in terms of $n$ and $\eps$ by~\cite{EKKRV00,Fis04,ChSe14,Belovs18}: it is $\Theta(\frac{\log (\eps n)}{\eps})$. Pallavoor et al.~\cite{PRV18,Ramesh} considered the setting when the tester is given an additional parameter $r$, the number of distinct elements in the array, and obtained an $O((\log r)/\eps)$-query algorithm. There are also lower bounds for this setting: $\Omega(\log r)$ for nonadaptive algorithms by~\cite{BlaRY14} and $\Omega(\frac{\log r}{\log \log r})$ for all testers for the case when $r=n^{1/3}$ by~\cite{Belovs18}.

For general $d$,
Black et al.~\cite{BlackCS18,BlackCS20} gave an $\widetilde{O}(d^{5/6})$-query tester for Boolean functions $f\colon [n]^d \to \{0,1\}$.
For real-valued functions, Chakrabarti and Seshadhri~\cite{CS13,ChSe14} proved basically matching upper and lower bounds of $O((d\log  n)/\eps)$ and $\Omega((d\log n -\log \eps^{-1})/\eps)$. However, their lower bound only applies for functions with a large image. Pallavoor et al.~\cite{PRV18} gave an $O(\frac d \eps \cdot \log \frac d \eps \cdot \log r)$-query tester, where $r$, the size of the image, is given to the tester as a parameter.
It remains open whether there is an $O(\sqrt{d})$-query tester for Boolean functions on the hypergrid domain and, in particular, whether the isoperimetric inequality of \cite{KhotMS18} can be extended to hypergrids. Note that since the Boolean Decomposition (\Thm{main1}) holds for all partially ordered domains, an isoperimetric inequality for Boolean functions on hypergrids would also generalize to real-valued functions.

\section{Directed Talagrand Inequality for Real-Valued Functions}
\label{sec:directed-real-talagrand}

In this section, we use our Boolean decomposition \Thm{main1} to prove \Thm{real_tal_robust}, which easily implies the non-robust version (\Thm{directed_talagrand_real}) as we point out in the introduction. Let $f \colon \{0,1\}^d \to \RR$ be a non-monotone function over the $d$-dimensional hypercube and let $\mathtt{col} \colon \cS_f^- \to \{\red,\blue\}$ be an arbitrary 2-coloring of $\cS_f^-$. Given $x \in \{0,1\}^d$ and a subgraph $\cH$ of the $d$-dimensional hypercube, we define the quantities 
\begin{align*}
    I_{f,\red,\cH}^-(x) =& \left|\left\{ y \colon (x,y) \in \cS_f^- \cap E(\cH)\text{, } \mathtt{col}(x,y) = \red\right\}\right|; \\   
    I_{f,\blue,\cH}^-(y) =& \left|\left\{ x \colon (x,y) \in \cS_f^- \cap E(\cH) \text{, } \mathtt{col}(x,y) = \blue\right\}\right| \text{.}
\end{align*}

Let $f_1,\ldots,f_k \colon \{0,1\}^d \to \{0,1\}$ be the Boolean functions and $\cH_1,\ldots,\cH_k$ be the disjoint subgraphs of the $d$-dimensional hypercube that are guaranteed by \Thm{main1}. 
Let $C'$ denote the constant from the robust Boolean isoperimetric inequality (Theorem 2.7 of ~\cite{PallavoorRW20}) that is hidden by $\Omega$.
We have
\begin{align}
	\underset{\bx \sim \{0,1\}^d}{\EX}\left[\sqrt{I_{f,\red}^{-}(\bx)}\right] + \underset{\by \sim \{0,1\}^d}{\EX}\left[\sqrt{I_{f,\blue}^{-}(\by)}\right] &\geq \underset{\bx}{\EX}\left[\sqrt{I_{f,\red,\bigcup_{i=1}^k \cH_i}^{-}(\bx)}\right] + \underset{\by}{\EX}\left[\sqrt{I_{f,\blue,\bigcup_{i=1}^k \cH_i}^{-}(\by)}\right] \label{eq:1} \\
	&= \sum_{i=1}^k \left(\underset{\bx}{\EX} \left[ \sqrt{I_{f,\red,\cH_i}^{-}(\bx)}\right] + \underset{\by}{\EX} \left[ \sqrt{I_{f,\blue,\cH_i}^{-}(\by)}\right]\right) \label{eq:2} \\
	&\geq \sum_{i=1}^k \left(\underset{\bx}{\EX} \left[ \sqrt{I_{f_i,\red,\cH_i}^{-}(\bx)} \right] + \underset{\by}{\EX} \left[ \sqrt{I_{f_i,\blue,\cH_i}^{-}(\by)} \right]\right) \label{eq:3} \\
	&= \sum_{i=1}^k \left(\underset{\bx}{\EX} \left[ \sqrt{I_{f_i,\red}^{-}(\bx)} \right] + \underset{\by}{\EX} \left[ \sqrt{I_{f_i,\blue}^{-}(\by)} \right] \right)\label{eq:4} \\
	&\geq \sum_{i=1}^k C' \cdot \eps(f_i) \label{eq:5} \\
	&\geq \frac{C' \cdot \eps(f)}{2} \text{.} \label{eq:6}
\end{align}

The inequality \Eqn{1} holds simply because $\bigcup_{i=1}^k \cH_i$ is a subgraph of the $d$-dimensional hypercube, while the equality \Eqn{2} holds because the $\cH_i$'s are disjoint. The inequality \Eqn{3} holds since $\cS_{f_i}^- \subseteq \cS_f^-$ and the equality \Eqn{4} holds since $\cS_{f_i}^- \subseteq E(\cH_i)$ (these are both by item 2 of \Thm{main1}). Finally, \Eqn{5} is due to Theorem 2.7 of \cite{PallavoorRW20} and \Eqn{6} is due to item 1 of \Thm{main1}.

\section{Boolean Decomposition: Proof of \Thm{main1}}
\label{sec:decomposition} 

In this section, we prove the Boolean Decomposition \Thm{main1}. Our results consider any partially ordered domain, which we represent by a DAG $\cG$. The {\em transitive closure} of $\cG$, denoted $\text{TC}(\cG)$, is the graph $(V(\cG),E)$, where $E=\{(x,y) \colon x\prec y\}$. The {\em violation graph} of $f$ is the graph $(V(\cG),E')$, where $E'$ is the set of edges in $E$ violated by $f$. 

In \Sec{cover}, we define the key notion of sweeping graphs 
and identify some of their important properties. In \Sec{matching-decomp}, we prove a general lemma that shows how to use a matching $M$ in $\text{TC}(\cG)$ to find disjoint sweeping graphs in $\cG$ satisfying a ``matching rearrangement'' property. The techniques in \Sec{cover} and \Sec{matching-decomp} are inspired by the techniques of \cite{BlackCS18} used to analyze Boolean functions on the hypergrid domain, $[n]^d$. In \Sec{subgraphs}, we apply our matching decomposition lemma to a carefully chosen matching to obtain the subgraphs $\cH_1,\ldots,\cH_k$. Finally, in \Sec{functions}, we define the Boolean functions $f_1,\ldots,f_k$ and complete the proof of \Thm{main1}.





\subsection{Sweeping Graphs and Their Properties} \label{sec:cover}


\begin{definition} [$(S,T)$-Sweeping Graphs] \label{def:cover} Given a DAG $\cG$ and $s,t \in V(\cG)$, define $\cH(s,t)$ to be the subgraph of $\cG$ formed by the union of all directed paths in $\cG$ from $s$ to $t$. Given two disjoint subsets $S,T \subseteq V(\cG)$, define the {\em $(S,T)$-sweeping graph}, denoted $\cH(S,T)$, to be the union of directed paths in $\cG$ that start from some $s \in S$ and end at some $t \in T$. That is, 
\[\cH(S,T) = \bigcup_{(s,t) \in S \times T} \cH(s,t) \text{.}\]
\end{definition}
Note that if $s \npreceq t$ then $\cH(s,t) = \emptyset$. 
\smallskip

We now prove three properties of sweeping graphs which we use in \Sec{functions} to analyze our functions $f_1,\ldots,f_k$. Given disjoint sets $S,T \subseteq V(\cG)$ and $z \in V(\cH(S,T))$, define the sets
\begin{align*} \label{eq:reachable-sets}
    S(z) = \{s \in S \colon s \preceq z \} \text{ and } T(z) = \{t \in T \colon z \preceq t \} \text{.}    
\end{align*}

\begin{claim} [Properties of Sweeping Graphs] \label{clm:cg-props} Let $\cG$ be a DAG and $S,T \subseteq V(\cG)$ be disjoint sets.
\begin{enumerate}
    \item \emph{(Property of Nodes in a Sweeping Graph):} If $z \in V(\cH(S,T))$ then $S(z) \neq \emptyset$ and $T(z) \neq \emptyset$. 
    \item \emph{(Property of Nodes Outside of a Sweeping Graph):} If $z \in V(\cG) \setminus V(\cH(S,T))$ then at most one of the following is true: (a) $\exists y \in V(\cH(S,T))$ such that $z \prec y$, (b) $\exists x \in V(\cH(S,T))$ such that $x \prec z$.
    \item \emph{(Sweeping Graphs are Induced):} If $x,y \in V(\cH(S,T))$ and $(x,y) \in E(\cG)$ then $(x,y) \in E(\cH(S,T))$.
\end{enumerate}
\end{claim}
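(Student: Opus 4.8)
The plan is to prove the three properties of sweeping graphs essentially by unwinding the definition of $\cH(S,T)$ as a union of directed $s$-to-$t$ paths, and arguing carefully about where a vertex $z$ sits relative to such paths. Throughout, recall $\cH(S,T) = \bigcup_{(s,t)\in S\times T} \cH(s,t)$, where $\cH(s,t)$ is the union of directed $\cG$-paths from $s$ to $t$; in particular a vertex $z$ lies in $V(\cH(S,T))$ iff $z$ lies on some directed path from some $s\in S$ to some $t\in T$.

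\textbf{Property 1.} If $z\in V(\cH(S,T))$, then by definition $z$ lies on a directed path $P$ from some $s\in S$ to some $t\in T$. The portion of $P$ from $s$ to $z$ witnesses $s\preceq z$, so $s\in S(z)$ and $S(z)\neq\emptyset$; symmetrically the portion from $z$ to $t$ gives $z\preceq t$, so $t\in T(z)$ and $T(z)\neq\emptyset$. This is immediate.

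\textbf{Property 2.} This is the main obstacle — I expect it to require the most care. Suppose, for contradiction, that $z\in V(\cG)\setminus V(\cH(S,T))$ but both (a) and (b) hold: there exist $x,y\in V(\cH(S,T))$ with $x\prec z$ and $z\prec y$. Applying Property 1 to $x$, there is some $s\in S$ with $s\preceq x$; applying Property 1 to $y$, there is some $t\in T$ with $y\preceq t$. Chaining, $s\preceq x\prec z\prec y\preceq t$, so in particular $s\prec z\prec t$ and there is a directed path in $\cG$ from $s$ through $z$ to $t$ — concretely, concatenate a witnessing $s$-to-$x$ path, an $x$-to-$z$ path, a $z$-to-$y$ path, and a $y$-to-$t$ path. (One subtlety: concatenating directed paths in a DAG yields a directed walk; since $\cG$ is acyclic this walk visits each vertex at most once, hence is a directed path — and in any case, the union of directed $s$-to-$t$ walks through $z$ still contains $z$, which is all we need, since $\cH(s,t)$ as defined is the union of directed $s$-to-$t$ paths and every vertex on such a walk lies on some such path.) Thus $z$ lies on a directed $s$-to-$t$ path with $s\in S,\ t\in T$, so $z\in V(\cH(s,t))\subseteq V(\cH(S,T))$, contradicting $z\notin V(\cH(S,T))$. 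Hence at most one of (a),(b) holds.

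\textbf{Property 3.} Let $x,y\in V(\cH(S,T))$ with $(x,y)\in E(\cG)$. By Property 1, pick $s\in S$ with $s\preceq x$ and $t\in T$ with $y\preceq t$ (using $S(x)\neq\emptyset$ and $T(y)\neq\emptyset$). Concatenating a directed $s$-to-$x$ path, the edge $(x,y)$, and a directed $y$-to-$t$ path gives a directed $s$-to-$t$ path in $\cG$ that uses the edge $(x,y)$; since $\cG$ is acyclic this is a genuine path. Therefore $(x,y)\in E(\cH(s,t))\subseteq E(\cH(S,T))$, which is exactly the claim that $\cH(S,T)$ is an induced subgraph of $\cG$. (If one is worried about the concatenation producing a repeated vertex, note that a repeated vertex would force a directed cycle in $\cG$, which is impossible; alternatively, any shortest $s$-to-$t$ walk through the edge $(x,y)$ is automatically a path.)

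The only genuinely delicate point across all three parts is ensuring that concatenations of directed paths remain directed paths (or at least that the target vertex/edge still lies in $V(\cH(s,t))$ / $E(\cH(s,t))$); this is handled uniformly by the acyclicity of $\cG$, so no case analysis beyond the above is needed.
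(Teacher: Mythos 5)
Your proof is correct and follows essentially the same route as the paper: all three properties are derived by unwinding the definition of $\cH(S,T)$ and chaining $s \preceq x \prec z \prec y \preceq t$ (respectively $s \preceq x \prec y \preceq t$) to place the vertex or edge on an $s$-to-$t$ path. Your extra remark that acyclicity turns the concatenated walk into a genuine directed path is a fine (and correct) elaboration of a step the paper states without comment.
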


\begin{proof} 
Property 1 holds by definition of the sweeping graph $\cH(S,T)$. If $z \in V(\cH(S,T))$, then, by definition of $\cH(S,T),$ there exist $s \in S$ and $t \in T$ for which $z$ belongs to some directed path from $s$ to $t$. That is, $z \in V(\cH(s,t))$. Thus $s \in S(z)$ and $t \in T(z),$ and property 1 holds. 

We now prove property 2. Suppose, for the sake of contradiction, that there exist $x,y,z \in V(\cG)$ for which $x,y \in V(\cH(S,T))$, $z \notin V(\cH(S,T))$, and $x \prec z \prec y$. By property 1, there exist some $s \in S(x)$ and some $t \in T(y)$. Then $s \preceq x \prec z \prec y \preceq t$ and, consequently, $z$ belongs to some directed path from $s$ to $t$. Thus $z \in V(\cH(s,t)),$ and so $z \in V(\cH(S,T))$. This is a contradiction. 

We now prove property 3. Suppose $x,y \in V(\cH(S,T))$ and $(x,y) \in E(\cG)$. By property 1, there exist $s \in S$ and $t \in T$ for which $s \preceq x$ and $y \preceq t$. Since $(x,y) \in E(\cG)$, we have $x \prec y$ and so $s \preceq x \prec y \preceq t$. Thus, the edge $(x,y)$ belongs to a directed path from $s$ to $t$. That is, $(x,y) \in E(\cH(s,t))$ and so $(x,y) \in E(\cH(S,T))$.\end{proof}



\subsection{Matching Decomposition Lemma for DAGs} \label{sec:matching-decomp}

In this section, we prove the following matching decomposition lemma. Recall that $\text{TC}(\cG) = (V(\cG),E)$ denotes the transitive closure of the DAG $\cG$, where $E = \{(x,y) \colon x \prec y\}$. Consider a matching $M$ in $\text{TC}(\cG)$. We represent $M \colon S \to T$ as a bijection between two disjoint sets $S,T \subseteq V(\cG)$ of the same size for which $s\prec M(s)$ for all $s \in S$. For a set $S'\subseteq S,$ define $M(S')=\{M(s) \colon s\in S'\}$. Note that for convenience we will sometimes abuse notation and represent $M$ as the set of pairs, $\{(s,M(s)) \colon s \in S\}$, instead of as a bijection.

\begin{lemma} [Matching Decomposition Lemma for DAGs] \label{lem:general-partition} For every DAG $\cG$ and every matching $M \colon S \to T$ in $\emph{TC}(\cG)$, there exist partitions $(S_i \colon i \in [k])$ of $S$ and $(T_i \colon i \in [k])$ of $T$, where $M(S_i) = T_i$ for all $i \in [k]$, and the following hold.

\begin{enumerate} [noitemsep]
    \item \emph{(Sweeping Graph Disjointness):} $V(\cH(S_i,T_i)) \cap V(\cH(S_j,T_j)) = \emptyset$ for all $i \neq j$, where $i,j \in [k]$.
    \item \emph{(Matching Rearrangement Property):} For all $i \in [k]$ and $(x,y) \in S_i \times T_i$, if $x \prec y$ then there exists a matching $\hM \colon S_i \to T_i$ in $\emph{TC}(\cG)$ for which $(x,y) \in \hM$.
\end{enumerate}
\end{lemma}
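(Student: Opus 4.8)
The plan is to define the parts $S_i, T_i$ as the connected components of a suitable "influence" relation, and then verify the two properties using the structural facts about sweeping graphs from Claim 3.4. Concretely, consider the matching $M\colon S\to T$ as a set of pairs and build an auxiliary graph $\mathcal{J}$ on vertex set $M$ (one node per matched pair $(s,M(s))$): connect two pairs $(s,M(s))$ and $(s',M(s'))$ whenever the associated single-pair sweeping graphs $\cH(s,M(s))$ and $\cH(s',M(s'))$ share a vertex, i.e. $V(\cH(s,M(s)))\cap V(\cH(s',M(s')))\neq\emptyset$. Let the equivalence classes be the connected components of $\mathcal{J}$; for the $i$-th component, let $S_i$ be the set of lower endpoints and $T_i$ the set of upper endpoints of the pairs in that component. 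By construction $M(S_i)=T_i$ and $(S_i\colon i\in[k])$, $(T_i\colon i\in[k])$ partition $S$ and $T$ respectively.

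For Property 1 (Sweeping Graph Disjointness), I would argue that $V(\cH(S_i,T_i)) = \bigcup_{s\in S_i} V(\cH(s,M(s)))$ up to the vertices that genuinely lie on $s$-to-$M(s')$ paths for $s\neq s'$ in the same component — here I need to be slightly careful, since $\cH(S_i,T_i)$ by Definition 3.3 is the union over all pairs in $S_i\times T_i$, not just the matched pairs. The key claim is: if $z\in V(\cH(s,t'))$ for some $s\in S_i$, $t'\in T_i$ with $t'=M(s')$, $s'\neq s$, then $s\preceq z\preceq t'$, and using Property 1 of Claim 3.4 applied within the component one shows $z$ already lies in $V(\cH(S_i,T_i))$ as built from matched pairs — so no "new" vertices are introduced. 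Then if $V(\cH(S_i,T_i))\cap V(\cH(S_j,T_j))\neq\emptyset$ for $i\neq j$, there is a common vertex $z$ lying on a path through some matched pair of component $i$ and a path through some matched pair of component $j$; chasing $z$ down to the matched endpoints (again via Property 1 of Claim 3.4, which guarantees $S_i(z),T_i(z)\neq\emptyset$) would force an edge in $\mathcal{J}$ between a pair of component $i$ and a pair of component $j$, contradicting that they are distinct components.

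For Property 2 (Matching Rearrangement), fix $i$ and $(x,y)\in S_i\times T_i$ with $x\prec y$, where $x=s_a$ (lower endpoint of pair $a$) and $y=M(s_b)$ (upper endpoint of pair $b$), both in component $i$. Since $\mathcal{J}$ restricted to this component is connected, take a path $a = p_0, p_1, \dots, p_m = b$ in $\mathcal{J}$; I want to "rotate" the matching along this path. Consecutive pairs $p_\ell, p_{\ell+1}$ share a sweeping-graph vertex $z_\ell$, and by Property 1 of Claim 3.4 we get $s_{p_\ell}\preceq z_\ell$, $z_\ell \preceq M(s_{p_\ell})$ and likewise $s_{p_{\ell+1}}\preceq z_\ell \preceq M(s_{p_{\ell+1}})$, so in particular $s_{p_\ell}\prec M(s_{p_{\ell+1}})$ and $s_{p_{\ell+1}}\prec M(s_{p_\ell})$ — each consecutive pair is "swappable" in $\mathrm{TC}(\cG)$. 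Starting from $M$ and performing the swap $s_{p_0}\mapsto M(s_{p_1})$, then absorbing $s_{p_1}$ by swapping with $p_2$, and so on down the path, produces a matching $\hM\colon S_i\to T_i$ in $\mathrm{TC}(\cG)$ in which $s_a=x$ is matched to $y=M(s_b)$ while all other elements of $S_i$ stay matched (just permuted among the $T_i$ targets). One should double-check the telescoping carefully: the cleanest formulation is to show by induction on $m$ that whenever $a$ and $b$ are in the same component and $s_a\prec M(s_b)$, there is a valid matching on $S_i$ sending $s_a$ to $M(s_b)$ and fixing everything outside the connecting path.

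The main obstacle I anticipate is Property 1: one must rule out that $\cH(S_i,T_i)$, defined as a union over all of $S_i\times T_i$ (including cross pairs that are not in $M$), sweeps in vertices that create spurious overlaps between components. Handling this requires showing that every vertex of $\cH(S_i,T_i)$ is "witnessed" by a matched pair within the component (via a connectivity/induction argument on $\mathcal{J}$ using Property 1 of Claim 3.4), so that the component structure of $\mathcal{J}$ exactly controls vertex-disjointness. Property 2 is more of a careful bookkeeping exercise in matching rotation, and Property 3 of Claim 3.4 (sweeping graphs are induced) is what lets us later treat the $\cH_i$ as induced subgraphs — though it is used in Section 3.4 rather than directly here.
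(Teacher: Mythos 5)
There is a genuine gap, and it sits exactly at the point you flagged as the main obstacle: your key claim that every vertex of $\cH(S_i,T_i)$ is already witnessed by a matched-pair sweeping graph $\cH(s,M(s))$ with $s\in S_i$ is false, and with it Property 1 fails for the partition you construct. Concretely, take $\cG$ with vertices $s_1,s_2,s_3,z,w,t_1,t_2,t_3$, edges $s_1\to z$, $s_2\to z$, $z\to t_1$, $z\to t_2$, $s_1\to w$, $w\to t_2$, $s_3\to w$, $w\to t_3$, and the matching $M=\{(s_1,t_1),(s_2,t_2),(s_3,t_3)\}$. Then $V(\cH(s_1,t_1))=\{s_1,z,t_1\}$, $V(\cH(s_2,t_2))=\{s_2,z,t_2\}$, $V(\cH(s_3,t_3))=\{s_3,w,t_3\}$, so in your auxiliary graph $\mathcal{J}$ the first two pairs form one connected component and the third pair is isolated. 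But for that first component, $S_1=\{s_1,s_2\}$ and $T_1=\{t_1,t_2\}$, and $\cH(S_1,T_1)$ contains the cross-pair path $s_1\to w\to t_2$, hence $w\in V(\cH(S_1,T_1))\cap V(\cH(S_2,T_2))$ where $(S_2,T_2)=(\{s_3\},\{t_3\})$. The two sweeping graphs intersect even though no two matched-pair sweeping graphs from different components do, so the components of $\mathcal{J}$ are simply the wrong partition; the ``chasing $z$ down to matched endpoints'' step cannot be repaired, because the offending vertex $w$ lies on no matched-pair sweeping graph at all.

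The fix is what the paper does: conflicts must be re-tested at the level of the current merged groups, not only of the original matched pairs. The paper's algorithm repeatedly merges two groups $(X,Y)$ and $(X',Y')$ whenever $V(\cH(X,Y))\cap V(\cH(X',Y'))\neq\emptyset$; after a merge, the sweeping graph of the new group can acquire additional vertices (precisely through cross pairs such as $(s_1,t_2)$ above) and thereby create new conflicts, so the merging must be iterated until none remain --- in the example it ends with all three pairs in a single group, and disjointness is then immediate from the termination condition. The rearrangement property is proved by induction on the merge steps (the paper's Rematching Claim): when two groups are merged through a common vertex $z$ with $x_1\preceq z\preceq y_1$ and $x_2\preceq z\preceq y_2$, the cross pair $(x_2,y_1)$ splices together the two inductively obtained matchings. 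Your rotation-along-a-path argument for Property 2 is sound for the structure you defined, but it is tied to paths in $\mathcal{J}$ and does not transfer as stated to the coarser, iteratively merged groups that are actually required.
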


\begin{proof} In \Alg{merge}, we show how to construct partitions $(S_i \colon i \in [k])$ for $S$ and $(T_i \colon i \in [k])$ for $T$ from a matching $M$ in $\text{TC}(\cG)$. \Alg{merge} uses the following notion of conflicting pairs.

\begin{definition} [Conflicting Pairs] \label{def:conflict} Given a DAG $\cG$ and four disjoint sets $X,Y,X',Y' \subset V(\cG)$, we say that the two pairs $(X,Y)$ and $(X',Y')$ conflict if $V(\cH(X,Y)) \cap V(\cH(X',Y')) \neq \emptyset$. \end{definition}

\begin{algorithm}
  \setstretch{1.1}
  \caption{Algorithm for constructing conflict-free pairs from a matching $M$ \label{alg:merge}}
  \begin{algorithmic}[1]
    \Require{A DAG $\cG$ and a matching $M \colon S \to T$ in $\text{TC}(\cG)$.}
    \Statex
    \Let{$\cQ_0$}{$\left\{(\{x\}, \{y\}) \colon (x,y) \in M\right\}$} \Comment{Initialize pairs using $M$}
    \For{$s \geq 0$}
        \If{two pairs $(X,Y) \neq (X',Y') \in \cQ_s$ \emph{conflict}}
          \Let{$\cQ_{s+1}$}{$\left(\cQ_s \setminus \{(X,Y), (X',Y')\}\right) \cup \{(X \cup X',Y \cup Y')\}$}
          \Comment{Merge conflicting pairs}
        \Else{}
          \Let{$s^{\ast}$}{$s$} and \Return{$\cQ_{s^{\ast}}$} \label{step:termination}
          \Comment{Terminate when there are no conflicts}
        \EndIf
    \EndFor
  \end{algorithmic}
\end{algorithm}

\begin{figure}[h]
    \begin{center}  \includegraphics[scale=0.4]{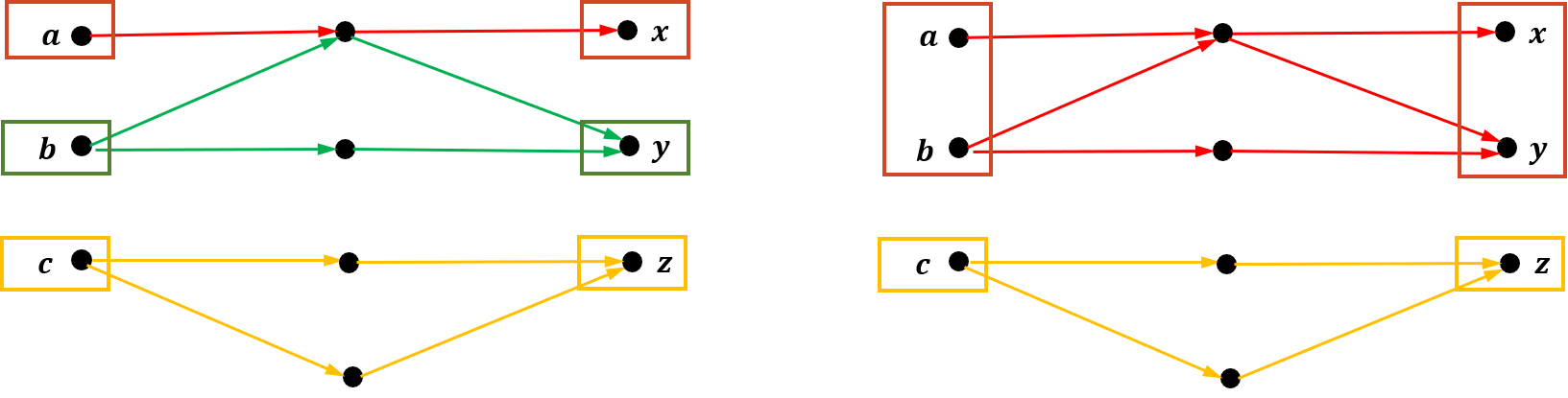}
    \end{center}
    \caption{An illustration for \Alg{merge} with input matching $M = \{(a, x), (b, y), (c, z)\}$. We initialize $\cQ_0 = \{(\{a\}, \{x\}), (\{b\}, \{y\}), (\{c\}, \{z\})\}$. The pairs $(\{a\}, \{x\})$ and $(\{b\}, \{y\})$ conflict, so we merge them to obtain a new and final collection $Q_1 = \{(\{a,b\}, \{x,y\}), (\{c\}, \{z\})\}$. }
    \label{fig:merge}
\end{figure}

The following observation is apparent and by design of \Alg{merge}.

\begin{observation} [Loop Invariants of \Alg{merge}] \label{obs:invariants} For all $s \in \{0,1,\ldots,s^{\ast}\}$, (a) $M(X) = Y$ for all $(X,Y) \in \cQ_s$, (b) $(X \colon (X,\cdot) \in \cQ_s)$ is a partition of $S$, and (c) $(Y \colon (\cdot,Y) \in \cQ_s)$ is a partition of $T$. \end{observation}

Given a matching $M \colon S \to T$ in $\text{TC}(\cG)$, we run \Alg{merge} to obtain the set $\cQ_{s^{\ast}}$. See \Fig{merge} for an illustration. Define $k = |\cQ_{s^{\ast}}|$ and let $\{(S_i,T_i) \colon i \in [k]\}$ be the set of pairs in $\cQ_{s^{\ast}}$. By \Obs{invariants}, $(S_i \colon i \in [k])$ is a partition of $S$, $(T_i \colon i \in [k])$ is a partition of $T$, and $M(S_i) = T_i$ for all $i \in [k]$. Item 1 of \Lem{general-partition} holds since \Alg{merge} terminates at step $s$ only when all pairs in $\cQ_{s}$ are non-conflicting (recall \Def{conflict}). Thus, to prove \Lem{general-partition} it only remains to prove item 2. To do so, we prove the following \Clm{rematch1}, that easily implies item 2. Note that while we only require \Clm{rematch1} to hold for the special case of $s = s^\ast$, using an inductive argument on $s$ allows us to give a proof for all $s \in \{0,1,\ldots,s^{\ast}\}$.

\begin{claim} [Rematching Claim] \label{clm:rematch1} For all $s \in \{0,1,\ldots,s^{\ast}\}$, pairs $(X,Y) \in \cQ_s$, and $(x,y) \in X \times Y$, there exists a matching $\hM \colon X \setminus \{x\} \to Y \setminus \{y\}$ in $\emph{TC}(\cG)$. \end{claim} 

\begin{proof} The proof is by induction on $s$. For the base case, if $s = 0$, then, by inspection of \Alg{merge}, for $(X, Y) \in \cQ_0$, we must have $X = \{x\}$ and $Y = \{y\}$. Thus, setting $\hM = \emptyset$ trivially proves the claim.

Now let $s > 0$. Fix some $(X, Y) \in \cQ_s$ and $(x, y) \in X \times Y$. Let $(X_1,Y_1),(X_2,Y_2) \in \cQ_{s-1}$ be the pairs of sets in $\cQ_{s-1}$ for which $x \in X_1$ and $y \in Y_2$. First, if $(X_1,Y_1) = (X_2,Y_2)$, then by induction there exists a matching $\hM' \colon X_1 \setminus \{x\} \to Y_1 \setminus \{y\}$ in $\text{TC}(\cG)$. Note that by definition of \Alg{merge}, we must have $X_1 \subseteq X$ and $Y_1 \subseteq Y$. Then the required matching is $\hM = \hM' \cup M|_{X \setminus X_1}$ where $M|_{(\cdot)}$ denotes the restriction of the original matching $M$ to the set $(\cdot)$. 
Suppose $(X_1,Y_1) \neq (X_2,Y_2)$. This is the interesting case, and we give an accompanying illustration in \Fig{rematching}. By definition of \Alg{merge}, it must be that $(X_1,Y_1)$ and $(X_2,Y_2)$ \emph{conflict} (recall \Def{conflict}) and were merged to form $X = X_1 \cup X_2$ and $Y = Y_1 \cup Y_2$. Thus, there exists some vertex $z \in V(\cH(X_1,Y_1)) \cap V(\cH(X_2,Y_2))$ and $x_1 \in X_1, y_1 \in Y_1, x_2 \in X_2, y_2 \in Y_2$ for which $x_1 \preceq z \preceq y_1 \text{ and } x_2 \preceq z \preceq y_2$.


We now invoke the inductive hypothesis to get matchings $\hM_1 \colon X_1 \setminus \{x\} \to Y_1 \setminus \{y_1\}$ and $\hM_2 \colon X_2 \setminus \{x_2\} \to Y_2 \setminus \{y\}$ in $\text{TC}(\cG)$. Observe that $x_2 \preceq z \preceq y_1$ and thus we can match $x_2$ and $y_1$. The required matching in $\text{TC}(\cG)$ is $\hM = \hM_1 \cup \hM_2 \cup \{(x_2,y_1)\}$. \end{proof}

\begin{figure}[h]
    \begin{center}
    \includegraphics[scale=0.4]{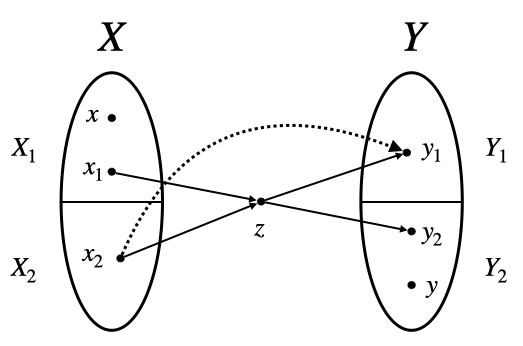}
    \end{center}
    \caption{An illustration for the case of $(X_1,Y_1) \neq (X_2,Y_2)$ in the proof of \Clm{rematch1}. The solid lines represent directed paths. 
    The dotted line represents the pair $(x_2,y_1)$ added to obtain the final matching $\hM$. The only vertices of $X \cup Y$ not participating in $\hM$ are $x$ and $y$.}
    \label{fig:rematching}
\end{figure}

We conclude the proof of \Lem{general-partition} by showing that \Clm{rematch1} implies item 2. We are given $(S_i,T_i) \in \cQ_{s^{\ast}}$ for some $i \in [k]$ and $(x,y) \in S_i \times T_i$ where $x \prec y$. By \Clm{rematch1} there exists a matching $\hM' \colon S_i \setminus \{x\} \to T_i \setminus \{y\}$ in $\text{TC}(\cG)$. We then set $\hM = \hM' \cup \{(x,y)\}$. Since $x \prec y$, the final matching $\hM \colon S_i \to T_i$ is a matching in $\text{TC}(\cG)$ which contains the pair $(x,y)$.
\end{proof}

\subsection{Specifying a Matching to Construct the Subgraphs $\cH_1,\ldots,\cH_k$} \label{sec:subgraphs}

In this section, we apply \Lem{general-partition} to a carefully chosen matching $M$ in order to construct our disjoint subgraphs $\cH_1,\ldots,\cH_k$. 


\begin{definition} [Max-weight, Min-cardinality Matching] \label{def:maxminmatching} A matching $M$ in $\emph{TC}(\cG)$ is a max-weight, min-cardinality matching if $M$ maximizes $\sum_{(x,y) \in M} (f(x) - f(y))$ and among such matchings minimizes $|M|$. 
\end{definition}


Henceforth, let $M$ denote a max-weight, min-cardinality matching. Let $S$ and $T$ denote the set of lower and upper endpoints, respectively, of $M$. We use the following well-known fact on matchings in the violation graph.

\begin{fact} [Corollary 2 \cite{FLNRRS02}] \label{fact:flnrrs} For a $\mathrm{DAG}$ $\cG$ and function $f\colon V(\cG) \to \R$, the distance to monotonicity $\eps(f)$ is equal to the size of the minimum vertex cover of the violation graph of $f$ divided by $|V(\cG)|$. 
\end{fact}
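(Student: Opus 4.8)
The plan is to prove the two inequalities separately. Write $n = \abs{V(\cG)}$, let $G_f$ denote the violation graph of $f$ (on vertex set $V(\cG)$, with an edge for each pair $x \prec y$ having $f(x) > f(y)$), and let $\mathrm{VC}$ be the size of a minimum vertex cover of $G_f$. I will show $\mathrm{VC} \le \eps(f)\cdot n$ and $\eps(f)\cdot n \le \mathrm{VC}$, which together give $\eps(f) = \mathrm{VC}/n$.

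\textbf{Showing $\mathrm{VC} \le \eps(f)\cdot n$.} First I would take a monotone function $g \colon V(\cG) \to \R$ achieving the minimum in the definition of $\eps(f)$, and set $B = \{v \in V(\cG) \colon f(v) \neq g(v)\}$, so that $\abs{B} = \eps(f)\cdot n$. I claim $B$ is a vertex cover of $G_f$: if $(x,y)$ were a violated edge with $x,y \notin B$, then $g(x) = f(x) > f(y) = g(y)$ while $x \prec y$, contradicting monotonicity of $g$. Hence every edge of $G_f$ meets $B$, so $\mathrm{VC} \le \abs{B} = \eps(f)\cdot n$.

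\textbf{Showing $\eps(f)\cdot n \le \mathrm{VC}$.} Next I would let $W$ be a minimum vertex cover of $G_f$ and put $A = V(\cG)\setminus W$. Since $W$ is a vertex cover, $A$ is an independent set in $G_f$, so $f$ violates no pair within $A$; equivalently, $f|_A$ is monotone with respect to $\preceq$. The key step is a standard extension claim: a partial function that is monotone on $A$ extends to a total monotone function on $V(\cG)$ agreeing with it on $A$. Concretely (assuming $A \neq \emptyset$; if $A = \emptyset$ then $\mathrm{VC} = n \ge \eps(f)\cdot n$ trivially), set $m_0 = \min_{a \in A} f(a)$ and define, for every $v \in V(\cG)$,
\[
g(v) = \max\Bigl(\{m_0\} \cup \{f(a) \colon a \in A,\ a \preceq v\}\Bigr).
\]
For $v \in A$, monotonicity of $f|_A$ makes every element of this set at most $f(v)$, while $f(v)$ itself occurs (taking $a = v$), so $g(v) = f(v)$; and $g$ is monotone since $u \preceq v$ implies $\{a \in A \colon a \preceq u\} \subseteq \{a \in A \colon a \preceq v\}$. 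Thus $g$ is monotone and differs from $f$ only on $W$, giving $\eps(f)\cdot n \le \abs{W} = \mathrm{VC}$.

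The main obstacle is the extension claim: one must handle vertices $v$ with no element of $A$ below them (the purpose of the constant $m_0$) and verify that the defining maximum genuinely equals $f(v)$ for $v \in A$ — which is exactly the place where independence of $A$ in $G_f$, i.e., monotonicity of $f|_A$, is used. Acyclicity of $\cG$ (so that $\preceq$ is a genuine partial order) is used implicitly throughout. Everything else is routine, and combining the two bounds completes the proof.
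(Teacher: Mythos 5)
Your proof is correct and complete; both directions (a minimum-distance monotone correction gives a vertex cover, and the complement of a vertex cover is violation-free and extends to a monotone function via the running-maximum construction) are exactly the standard argument behind Corollary 2 of Fischer et al., which the paper simply cites without reproving. Nothing is missing — the $m_0$ term and the $A=\emptyset$ case are handled properly, and the extension step correctly uses only that $f$ restricted to the independent set has no violated comparable pairs.
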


\begin{fact} \label{fact:matching} $M$ is a matching in the violation graph of $f$ that is also maximal. That is, (a) $f(x) > f(y)$ for all $(x,y) \in M$ and (b) $|M| \geq (\eps(f) \cdot |V(\cG)|)/2$. \end{fact}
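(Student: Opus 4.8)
The plan is to prove \Fact{matching} directly from \Def{maxminmatching} and \Fact{flnrrs}. Part (a) says every edge $(x,y)\in M$ is violated, i.e., $f(x)>f(y)$, and part (b) is the lower bound $|M|\geq (\eps(f)\cdot |V(\cG)|)/2$.

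For part (a): I would argue by contradiction using the min-cardinality clause. Suppose some pair $(x,y)\in M$ has $f(x)\leq f(y)$. Since $(x,y)$ is an edge of $\mathrm{TC}(\cG)$ we only know $x\prec y$, so the term this pair contributes to the weight $\sum_{(x',y')\in M}(f(x')-f(y'))$ is $f(x)-f(y)\leq 0$. Consider $M' = M\setminus\{(x,y)\}$. Then $M'$ is still a matching in $\mathrm{TC}(\cG)$, its weight is $\sum_{(x',y')\in M}(f(x')-f(y')) - (f(x)-f(y)) \geq \sum_{(x',y')\in M}(f(x')-f(y'))$, so $M'$ has weight at least that of $M$, hence $M'$ is also a max-weight matching (it cannot have strictly larger weight since $M$ is max-weight). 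But $|M'| = |M|-1 < |M|$, contradicting that $M$ has minimum cardinality among max-weight matchings. Therefore $f(x)>f(y)$ for every $(x,y)\in M$, which also shows $M$ is a matching in the violation graph.

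For part (b): First note that $M$, viewed as a matching in the violation graph, is \emph{maximal} — i.e., it cannot be extended by adding another violated edge disjoint from all current edges. Indeed, if there were a violated edge $(x,y)$ (so $f(x)>f(y)$, and in particular $x\prec y$ so $(x,y)\in E(\mathrm{TC}(\cG))$) with $x,y\notin S\cup T$, then $M\cup\{(x,y)\}$ would be a matching in $\mathrm{TC}(\cG)$ of strictly larger weight (the new pair contributes $f(x)-f(y)>0$), contradicting that $M$ is max-weight. Hence $M$ is a maximal matching in the violation graph. Now the standard relationship between maximal matchings and vertex covers applies: the set $S\cup T$ of $2|M|$ endpoints of a maximal matching forms a vertex cover of the violation graph, because any violated edge must share a vertex with some edge of $M$ (otherwise we could add it). Therefore the minimum vertex cover of the violation graph has size at most $2|M|$. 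By \Fact{flnrrs}, $\eps(f)\cdot|V(\cG)|$ equals the minimum vertex cover size, so $\eps(f)\cdot |V(\cG)| \leq 2|M|$, which rearranges to $|M|\geq (\eps(f)\cdot|V(\cG)|)/2$.

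I do not expect a serious obstacle here; the only subtlety worth stating carefully is why maximality holds — namely that extending $M$ by a violated edge would strictly increase the weight, contradicting max-weightness — and that one must invoke max-weightness (not min-cardinality) for maximality, while min-cardinality is only needed for part (a). The rest is the textbook fact that the endpoint set of a maximal matching is a vertex cover, combined with the vertex-cover characterization of $\eps(f)$ from \Fact{flnrrs}.
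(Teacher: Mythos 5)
Your proof is correct and follows essentially the same route as the paper: part (a) by deleting a non-violated pair to contradict max-weight/min-cardinality, and part (b) by observing that max-weightness forces $M$ to be a maximal matching in the violation graph and then combining the standard maximal-matching/vertex-cover bound with \Fact{flnrrs}. Your extra care in spelling out why maximality follows from max-weightness is a fair elaboration of the paper's one-line justification, not a different argument.
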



\begin{proof} First, for the sake of contradiction, suppose $f(x) \leq f(y)$ for some pair $(x,y) \in M$. Then we can set $M = M \setminus \{(x,y)\}$, which can only increase $\sum_{(x,y) \in M} (f(x) - f(y))$ and will decrease $|M|$ by $1$. This contradicts the definition of $M$. Thus, $f(x) > f(y)$ for all $(x,y) \in M$ and so $M$ is a matching in the \emph{violation graph of $f$}. Second, since $M$ maximizes $\sum_{(x,y) \in M} (f(x) - f(y))$, it must also be a \emph{maximal} matching in the violation graph of $f$. Thus, (b) follows from \Fact{flnrrs} and the fact that the size of any maximal matching is at least half the size of the minimum vertex cover. \end{proof}

We now apply \Lem{general-partition} to $M$, obtaining the partitions $(S_i \colon i \in [k])$ and $(T_i \colon i \in [k])$ for $S$ and $T$, respectively, for which $M(S_i) = T_i$ for all $i \in [k]$. For each $i \in [k]$, let $\cH_i = \cH(S_i,T_i)$. We use the collection of sweeping graphs $\cH_1,\ldots,\cH_k$ to prove \Thm{main1}. Note that these subgraphs are all vertex disjoint by item 1 of \Lem{general-partition}. We use item 2 of \Lem{general-partition} to prove the following lemma regarding the $(S_i,T_i)$ pairs. The proof crucially relies on the fact that $M$ is a max-weight, min-cardinality matching. 

\begin{lemma} [Property of the Pairs $(S_i,T_i)$]
\label{lem:greater} For all $i \in [k]$ and $(x,y) \in S_i \times T_i$, if $x \prec y$ then $f(x) > f(y)$. \end{lemma}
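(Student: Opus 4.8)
The plan is to argue by contradiction using the matching rearrangement property (item 2 of \Lem{general-partition}) together with the fact that $M$ is a max-weight, min-cardinality matching. Suppose there exist $i \in [k]$ and $(x,y) \in S_i \times T_i$ with $x \prec y$ but $f(x) \leq f(y)$. By item 2 of \Lem{general-partition}, there is a matching $\hM \colon S_i \to T_i$ in $\text{TC}(\cG)$ that contains the pair $(x,y)$. The idea is to replace the portion of $M$ living on $S_i \cup T_i$ (which is exactly the matching $M|_{S_i}$, i.e., the pairs $\{(s, M(s)) \colon s \in S_i\}$) with $\hM$, obtaining a new matching $M' = (M \setminus M|_{S_i}) \cup \hM$ on the same vertex set $S \cup T$.

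Next I would compare the weights of $M$ and $M'$. Since $M$ and $M'$ agree outside of $S_i \cup T_i$, it suffices to compare $\sum_{(s,t) \in M|_{S_i}}(f(s)-f(t))$ with $\sum_{(s,t) \in \hM}(f(s)-f(t))$. Here the key observation is that both $M|_{S_i}$ and $\hM$ are bijections from $S_i$ to $T_i$, so $\sum_{s \in S_i} f(s)$ and $\sum_{t \in T_i} f(t)$ are the same whichever bijection we use; hence $\sum_{(s,t) \in M|_{S_i}}(f(s)-f(t)) = \sum_{s \in S_i} f(s) - \sum_{t \in T_i} f(t) = \sum_{(s,t) \in \hM}(f(s)-f(t))$. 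Therefore $M'$ has exactly the same total weight as $M$, so $M'$ is also a max-weight matching. But now I can delete the pair $(x,y)$ from $M'$: since $f(x) \leq f(y)$, we have $f(x) - f(y) \leq 0$, so removing $(x,y)$ does not decrease the weight, i.e., $M' \setminus \{(x,y)\}$ is still a max-weight matching, yet it has cardinality $|M| - 1 < |M|$. This contradicts the min-cardinality of $M$, completing the proof.

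\textbf{Main obstacle.} The only subtle point is ensuring $M'$ is genuinely a valid matching in $\text{TC}(\cG)$: this holds because $\hM$ matches $S_i$ to $T_i$, $M|_{S_i}$ matches $S_i$ to $T_i$ as well, the partition structure from \Lem{general-partition} guarantees $(S_i \colon i \in [k])$ and $(T_i \colon i \in [k])$ are partitions, and the $S$'s and $T$'s are disjoint, so splicing $\hM$ in place of $M|_{S_i}$ leaves a bijection $S \to T$ all of whose pairs are comparable in $\prec$. Everything else — the weight equality and the cardinality decrease — is a short computation, so I expect the write-up to be brief. Note we never need the case $x \succ y$ or $x,y$ incomparable, since the hypothesis of the lemma restricts to $x \prec y$.
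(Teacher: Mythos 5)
Your proposal is correct and is essentially the paper's own argument: use the matching rearrangement property to obtain $\hM$ containing $(x,y)$, observe the total weight is unchanged because the lower and upper endpoint sets are identical, then delete $(x,y)$ to contradict the max-weight, min-cardinality choice of $M$. Your only addition is to spell out explicitly the splicing of $\hM$ into $M$ outside $S_i \cup T_i$, which the paper handles implicitly by treating the rearranged matching as a matching from $S$ to $T$.
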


\begin{proof} Suppose there exists $i \in [k]$, $x \in S_i$, and $y \in T_i$ for which $x \prec y$ and $f(x) \leq f(y)$. 
By item 2 of \Lem{general-partition} there exists a matching $\hM \colon S \to T$ in $\text{TC}(\cG)$ for which $(x,y) \in \hM$. 
In particular, since $M$ and $\hM$ have identical sets of lower and upper endpoints,
\[
\sum_{(s,t) \in \hM} (f(s) - f(t)) = \sum_{(s,t) \in M} (f(s) - f(t)) \text{ and } |\hM| = |M|.
\]
Now set $\hM' = \hM \setminus \{(x,y)\}$ and observe that since $f(x) \leq f(y)$, 
\[
\sum_{(s,t) \in \hM'} (f(s) - f(t)) \geq \sum_{(s,t) \in M} (f(s) - f(t)) \text{ and } |\hM'| < |M| \text{.}
\]
Therefore, $M$ is not a max-weight, min-cardinality matching and this is a contradiction. \end{proof}

%

\subsection{Tying it Together: Defining the Boolean Functions $f_1,\ldots,f_k$} \label{sec:functions}

We are now equipped to define the functions $f_1,\ldots,f_k \colon V(\cG) \to \{0,1\}$ and complete the proof of \Thm{main1}. First, given $i \in [k]$ and $z \in V(\cG) \setminus V(\cH_i)$, we say that $z$ is \emph{below} $\cH_i$ if there exists $y \in V(\cH_i)$ for which $z \prec y$, and $z$ is \emph{above} $\cH_i$ if there exists $x \in V(\cH_i)$ for which $x \prec z$. Since $\cH_i$ is the $(S_i,T_i)$-sweeping graph, by item 2 of \Clm{cg-props}, vertex $z$ cannot be both below and above $\cH_i$, simultaneously. 
Second, given $z \in V(\cH_i)$, we define the set $T_i(z) = \{t \in T_i \colon z \preceq t\}$. Note that by item 1 of \Clm{cg-props}, $T_i(z) \neq \emptyset$ for all $z \in V(\cH_i)$,  and so the quantity $\max_{t \in T_i(z)} f(t)$ is always well-defined.

\begin{definition} \label{def:f_i} For each $i \in [k]$, define the function $f_i \colon V(\cG) \to \{0,1\}$ as follows. For every $z \in V(\cG)$,
\begin{align*} \label{eq:f_i}
    f_i(z) =
    \begin{cases}
      1, & \text{if}\ z \in V(\cH_i) \emph{ and } f(z) > \max_{t \in T_i(z)} f(t), \\
      0, & \text{if}\ z \in V(\cH_i) \emph{ and } f(z) \leq \max_{t \in T_i(z)} f(t), \\
      1, & \text{if}\ z \notin V(\cH_i) \emph{ and } z \emph{ is above } \cH_i, \\
      0, & \text{if}\ z \notin V(\cH_i) \emph{ and } z \emph{ is not above } \cH_i \text{.}
    \end{cases}
\end{align*}
%
\end{definition}

See \Fig{diamond} for an illustration of the values of $f_i$. We first prove item 1 of \Thm{main1}. Recall that $M(S_i) = T_i$ for all $i \in [k]$. Let $M_i = M|_{S_i}$ denote the matching $M$ restricted to $S_i$. Consider $x \in S_i$. By \Lem{greater}, $f(x) > f(y)$ for all $y \in T_i$ such that $x \prec y$. Thus $f(x) > \max_{t \in T_i(x)} f(t)$ and so $f_i(x) = 1$. Now consider $y \in T_i$. Observe that $y \in T_i(y)$. Thus, clearly, $f(y) \leq \max_{t \in T_i(y)} f(t)$, and so $f_i(y) = 0$. Therefore, $f_i(x) = 1$ for all $x \in S_i$ and $f_i(y) = 0$ for all $y \in T_i$. In particular, $f_i(x) = 1 > 0 = f_i(M(x))$ for all $x \in S_i$ and so $M_i$ is a matching in the violation graph of $f_i$. Thus, $\eps(f_i) \geq \frac{|M_i|}{|V(\cG)|}$ for all $i \in [k]$. It follows,
\begin{align*}
    \sum_{i=1}^k \eps(f_i) \geq |V(\cG)|^{-1} \sum_{i=1}^k |M_i| = |V(\cG)|^{-1} \cdot |M| \geq |V(\cG)|^{-1} \cdot \frac{\eps(f) \cdot |V(\cG)|}{2} = \frac{\eps(f)}{2}
\end{align*}
by the above argument and \Fact{matching}. Thus, item 1 of \Thm{main1} holds. \smallskip

\begin{figure}[h]
    \begin{center}
    \includegraphics[scale=0.4]{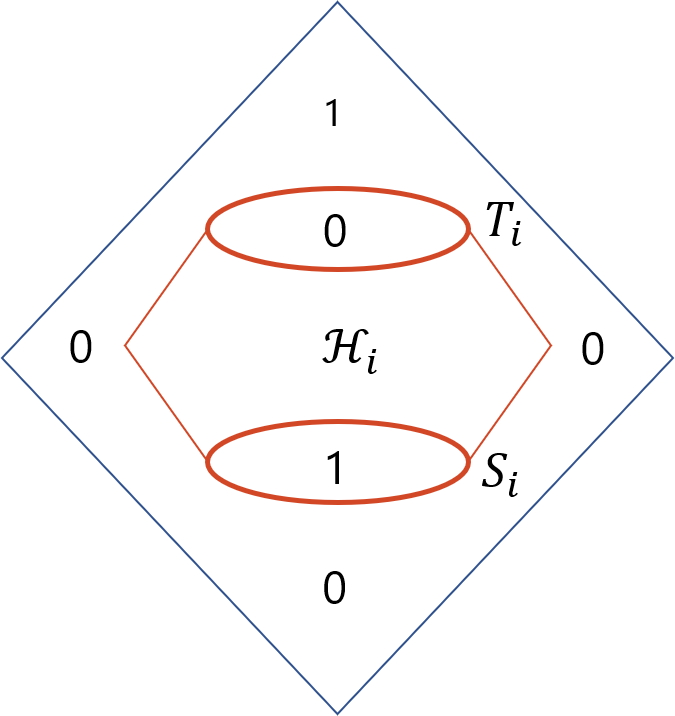}
    \end{center}
    \caption{An illustration for the Boolean function $f_i$ of \Def{f_i}. The diamond represents the DAG $\cG$ whose paths are directed from bottom to top. The hexagon represents the sweeping graph $\calH_i = \calH(S_i, T_i)$.  The value of $f_i$ is 1 for  the vertices in $S_i$ and  0 for the vertices in $T_i$. For vertices outside of $\cH_i$, its value is 1 for those vertices which are above $\calH_i$ and 0 for vertices which are not above $\cH_i$. }
    \label{fig:diamond}
\end{figure}



To prove item 2 of \Thm{main1}, we need to show that for all $i \in [k]$ the following hold: 
\[\cS_{f_i}^- \subseteq E(\cH_i) \text{ and } \cS_{f_i}^- \subseteq \cS_f^-{.}\]
We first prove that $\cS_{f_i}^- \subseteq E(\cH_i)$. Consider an edge $(x,y) \in E(\cG) \setminus E(\cH_i)$. We need to show that $f_i(x) \leq f_i(y)$. First, observe that if both $x,y \in V(\cH_i)$, then by item 3 of \Clm{cg-props}, we have $(x,y) \in E(\cH_i)$. Thus, we only need to consider the following three cases. Recall that $f_i(x),f_i(y) \in \{0,1\}$.
\begin{enumerate} [noitemsep]
    \item $x \in V(\cH_i)$, $y \notin V(\cH_i)$: In this case, $y$ is above $\cH_i$, and so $f_i(y) = 1$. Thus, $f_i(x) \leq f_i(y)$.
    \item $x \notin V(\cH_i)$, $y \in V(\cH_i)$: In this case, $x$ is below $\cH_i$, and so $x$ is \emph{not} above $\cH_i$ by item 2 of \Clm{cg-props}. Thus, $f_i(x) = 0$, and so $f_i(x) \leq f_i(y)$.
    \item $x \notin V(\cH_i)$, $y \notin V(\cH_i)$: If $x$ is above $\cH_i$, then $y$ is above $\cH_i$ as well, and so $f_i(x) = f_i(y) = 1$. Otherwise, $x$ is \emph{not} above $\cH_i$ and so $f_i(x) = 0$. Thus, $f_i(x) \leq f_i(y)$.
\end{enumerate}
Therefore, $S_{f_i}^- \subseteq E(\cH_i)$.

We now prove that $\cS_{f_i}^- \subseteq \cS_f^-$. Consider an edge $(x,y) \in \cS_{f_i}^-$. Then $f_i(x) = 1$ and $f_i(y) = 0$. Since $S_{f_i}^- \subseteq E(\cH_i)$, we have $(x,y) \in E(\cH_i)$ and so $x,y \in V(\cH_i)$. By definition of the functions $f_i$, it holds that $f(x) > \max_{t \in T_i(x)} f(t)$ and $f(y) \leq \max_{t \in T_i(y)} f(t)$. Since $x \prec y$, then $T_i(y) \subseteq T_i(x)$, because all vertices reachable from $y$ are also reachable from $x$. Therefore, 
\[f(x) > \max_{t \in T_i(x)} f(t) \geq \max_{t \in T_i(y)} f(t) \geq f(y) \text{.}\]
\noindent Thus $f(x) > f(y)$, and so $(x,y) \in \cS_f^-$. As a result, $S_{f_i}^- \subseteq \cS_f^-$ and item 2 of \Thm{main1} holds. This concludes the proof of \Thm{main1}.

\section{Testing Monotonicity of Real-Valued Functions}\label{sec:tester}

In this section, we prove \Thm{tester}. We show that the  tester of \cite{KhotMS18} for Boolean functions can be employed to test monotonicity of real-valued functions. The tester is simple: it queries two comparable vertices $x$ and $y$ and rejects if the pair exhibits a violation to monotonicity for $f$. The tester tries different values $\tau$ for the distance between $x$ and $y$, that is, the number of coordinates on which they differ. The key step in the analysis of \cite{KhotMS18} (and in our analysis) is to show that for some choice of $\tau$, the tester will detect a violation to monotonicity with high enough probability. The extra factor of $r$ in the query complexity of our tester arises because we are forced to choose $\tau$ which is a factor of $(r-1)$ smaller than for the Boolean case. Intuitively, the reason for this is that as the walk length $\tau$ increases, the probability that the function value stays below a certain threshold decreases. We make this precise in \Sec{persistence}. 

We first define the distribution from which the tester samples $x$ and $y$. Following this, we present the tester as \Alg{tester}. Let $p$ denote the largest integer for which $2^p \leq \sqrt{d / \log d}$. In \Alg{tester}, we sample pairs of vertices at distance $\tau$, where $\tau$ ranges over the powers of two up to $2^p$.

\begin{definition} [Pair Test Distribution] \label{def:pt_distribution} Given parameters $b \in \{0,1\}$ and a positive integer $\tau$, define the following distribution $\cD_{\texttt{pair}}(b,\tau)$ over pairs $(x,y) \in (\{0,1\}^d)^2$. Sample $\bx$ uniformly from $\{0,1\}^d$. Let $\bS = \{i \in [d] \colon \bx_i = b\}$. If $\tau > |\bS|$, then set $\by = \bx$. Otherwise, sample a uniformly random set $\bT \subseteq \bS$ of size $|\bT| = \tau$. Obtain $\by$ by setting $\by_i = 1-\bx_i$ if $i \in \bT$ and $\by_i = \bx_i$ otherwise. \end{definition} 



\begin{algorithm}
  \setstretch{1.3}
  \caption{Monotonicity Tester for $f \colon \{0,1\}^d \to \RR$ \label{alg:tester}}
  \begin{algorithmic}[1]
    \Require{Parameters $\eps \in (0,1)$, dimension $d$, and image size $r$; oracle access to function $f \colon \{0,1\}^d \to \RR$.}
    \Statex 
    \State \textbf{for all} $b\in\{0,1\}$ and $\tau\in\{1,2,4,\dots,2^p\}$ \textbf{do}
    \State \indent \textbf{repeat} $\widetilde{O}\left(\min\big(\frac{r\sqrt{d}}{\eps^2},\frac{d}{\eps}\big)\right)$ times:
        \State \indent\indent Sample $(\bx,\by) \sim \cD_{\texttt{pair}}(b,\tau)$.
        \State \indent\indent \textbf{if} $b=0$ and $f(\bx) > f(\by)$ \textbf{then} \textbf{reject}. \Comment{if $b=0$ then $\bx \preceq \by$}
        \State \indent\indent \textbf{if} $b=1$ and $f(\bx) < f(\by)$ \textbf{then} \textbf{reject}. \Comment{if $b=1$ then $\bx \succeq \by$}
    \State \textbf{accept}.
  \end{algorithmic}
\end{algorithm}

Our tester only uses comparisons between function values, not the values themselves. Thus, for the purposes of our analysis we can consider functions with the range $[r]$ w.l.o.g. 

When $\tau = 1$, the algorithm is simply sampling edges from the $d$-dimensional hypercube. The distribution from which we sample is not the uniform distribution on edges, but following an argument from \cite{KhotMS18}, we can assume that for $\tau=1$, our tester has the same guarantees as the edge tester.

The choice of the distance parameter $\tau$ for which the rejection probability of the tester is high depends on the existence of a certain ``good'' bipartite subgraph of violated edges. Our analysis differs from the analysis of \cite{KhotMS18} both in how we obtain the ``good'' subgraph of violated edges and in the choice of the optimal distance parameter $\tau$. 

We extend the following definitions from \cite{KhotMS18}. Let $G(A,B,E_{AB})$ 
denote a directed bipartite graph with vertex sets $A$ and $B$ and all edges in $E_{AB}$ \emph{directed} from $A$ to $B$.

\begin{definition}[$(K,\Delta)$-Good Graphs] A directed bipartite graph $G(A, B, E_{AB})$ is $(K, \Delta)$-good if for $X, Y$ such that either $X = A$, $Y=B$ or $X=B$, $Y = A$, we have: (a) $|X| = K$. (b) Vertices in $X$ have degree exactly $\Delta$. (c) Vertices in $Y$ have degree at most $2\Delta$. The graph $G$ is $(K, \Delta)$-left-good if $X=A$ and $(K, \Delta)$-right-good if $X=B$. 
\end{definition}

The \emph{weight} of $x \in \{0,1\}^d$, denoted by $|x|$, is the number of coordinates of $x$ with value 1.  

\begin{definition}[Persistence] 
Given a function $f \colon \{0,1\}^d \to [r]$ and an integer $\tau \in \left[1,\sqrt{\frac{d}{\log d}}\right]$, a vertex $x \in \{0,1\}^d$ of weight in the range $\frac{d}{2} \pm O(\sqrt{d\log d})$ is {\em $\tau$-right-persistent} for $f$ if
\[
\Pr_{\by}[f(\by) \leq f(x)] > \frac{9}{10}\,,
\]
where $\by$ is obtained by choosing a uniformly random set $\bT \subset \{ i \in [d] \colon x_i=0 \}$ of size $\tau$ and setting $\by_i = 1-x_i$ if $i \in \bT$ and $\by_i = x_i$ otherwise\footnote{Note that $\tau \geq |\{i \in [d] \colon x_i = 0\}|$ by our assumption on $x$ and $\tau$.}. We define $\tau$-left-persistence symmetrically. 
\end{definition}

We use 
the following technical claim implicitly shown in the analysis of the tester of \cite{KhotMS18}. 

\begin{claim}[\cite{KhotMS18}]\label{clm:kms_blackbox} Suppose there exists a $(K, \Delta)$-right-good subgraph $G(A,B,E_{AB})$ of the directed $d$-dimensional hypercube, such that (a) $E_{AB} \subseteq \cS_f^-$, (b) $K \sqrt{\Delta} = \Theta(\frac{\eps(f) \cdot 2^d}{\log d})$, and (c) at least $\frac{99}{100}|B|$ of the vertices in $B$ are $(\tau'-1)$-right-persistent for some $\tau'$ such that $\tau' \cdot \Delta \ll d$. Then there exists a constant $C' > 0$, such that for $(\bx, \by) \sim \cD_{\texttt{pair}}(0, \tau')$,
\[\Pr_{\bx,\by}[f(\bx) > f(\by)] \geq \frac{C' \cdot \tau'}{d} \cdot \frac{K}{2^d} \cdot \Delta\text{.}\] 
\end{claim}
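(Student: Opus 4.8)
Claim 4.7 (\Clm{kms_blackbox}) is attributed to \cite{KhotMS18} and the excerpt explicitly says it is ``implicitly shown in the analysis of the tester of \cite{KhotMS18}.'' So strictly speaking we are not asked to reprove it from scratch; we are asked to extract it as a black box. The plan below describes how one would verify/derive it by tracing through the KMS argument, adapted to our pair-test distribution.

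\medskip

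The plan is to lower-bound the probability that a sample $(\bx,\by)\sim\cD_{\texttt{pair}}(0,\tau')$ lands on a violated edge of the directed hypercube, using the given $(K,\Delta)$-right-good subgraph $G(A,B,E_{AB})$ together with the persistence hypothesis. First I would recall how $\cD_{\texttt{pair}}(0,\tau')$ generates pairs: pick $\bx$ uniformly, look at its set of $0$-coordinates $\bS$, pick a uniformly random $\tau'$-subset $\bT\subseteq\bS$, and flip those coordinates up to get $\by\succeq\bx$. The heart of the argument is to describe the event that the walk from $\bx$ to $\by$ ``uses'' an edge of $E_{AB}$: concretely, there is an ordering of the $\tau'$ flipped coordinates that produces an intermediate pair $(\bx',\by')$ with $\bx'\in A$, $\by'\in B$, and $(\bx',\by')\in E_{AB}$. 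Rather than track intermediate vertices directly, I would follow the KMS ``two-step'' framing: condition on the endpoint $\bx'$ of the walk being a vertex of $A$ at ``time'' one step before the end — equivalently, think of $\by'\in B$ being hit by the length-$(\tau'-1)$ walk started from a random neighbor $\bx'\in A$ of $\by'$ in $E_{AB}$ — and use $(\tau'-1)$-right-persistence of the vertices in $B$ to argue the walk actually realizes the value-drop $f(\bx')>f(\by')\ge f(\by)$ whp (since $\by'$ is $(\tau'-1)$-right-persistent, continuing the walk from $\by'$ for $\tau'-1$ more steps to $\by$ keeps $f(\by)\le f(\by')$, and $(\bx',\by')\in\cS_f^-$ gives $f(\bx')>f(\by')$).

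\medskip

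Next I would do the counting. The number of vertices $\bx'\in A$ reachable as the ``first'' vertex is controlled by the left side; since $G$ is $(K,\Delta)$-right-good, $|B|=K$ and each $b\in B$ has degree exactly $\Delta$ into $A$, so $|E_{AB}|=K\Delta$. The probability that a uniformly random $\bx$ equals a specified vertex of the right weight is $2^{-d}$; the probability that the random $\tau'$-subset of $\bS$, read in a random order, has a designated coordinate as its first flip is $\Theta(\tau'/d)$ (here $|\bS|\approx d/2$ and $\tau'\Delta\ll d$ ensures the lower-order flips don't interfere), and more generally the probability of realizing a specific single edge of $E_{AB}$ as a step of the walk is $\Theta(\tau'/d)\cdot 2^{-d}$. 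Summing over the $K\Delta$ edges of $E_{AB}$, but correcting for the fact that a given $\bx$ can have up to $2\Delta$ edges of $E_{AB}$ incident on the $A$-side of its walk (that is the ``degree at most $2\Delta$'' condition on $Y=A$) so edges get overcounted by at most a $2\Delta$ factor, I get a lower bound of order
\[
\frac{1}{\Delta}\cdot K\Delta\cdot\frac{\tau'}{d}\cdot\frac{1}{2^d}\;=\;\frac{K\tau'}{d\,2^d},
\]
and re-inserting the lost $\Delta$ because each of the $K$ vertices in $B$ is hit via $\Delta$ distinct in-edges (which land the walk on $\bx'$ via genuinely different coordinate flips, hence disjoint events) boosts this to $\Theta\!\big(\frac{\tau'}{d}\cdot\frac{K}{2^d}\cdot\Delta\big)$, which is exactly the claimed bound. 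Finally I would multiply by the $\tfrac{9}{10}$-type constant coming from persistence (the walk from $\by'$ to $\by$ keeps the value down with probability $>9/10$, and at least $\tfrac{99}{100}|B|$ of $B$ is persistent, so a further constant fraction of the contribution survives), folding all constants into $C'$.

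\medskip

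The main obstacle, and the place that needs genuine care rather than bookkeeping, is making the ``edge of $E_{AB}$ appears as a step of the $\tau'$-step walk'' event precise and showing the relevant single-edge events are essentially disjoint (or overlap by at most the $O(\Delta)$ factor licensed by the good-graph degree bounds), so that inclusion–exclusion / a second-moment-free union bound actually yields a lower bound and not just an upper bound on the collision probability. This is precisely where the $(K,\Delta)$-right-good structure is used: the exact degree $\Delta$ on the $B$-side gives the clean $K\Delta$ count of ``target'' edges, while the at-most-$2\Delta$ degree on the $A$-side caps the overcounting. The condition $\tau'\Delta\ll d$ is what guarantees that the remaining $\tau'-1$ random flips almost never collide with the $\Delta$ ``dangerous'' coordinates, so conditioning on realizing one edge does not appreciably distort the persistence event or the other edges' events; without it the walk could revisit the good-subgraph coordinates and the disjointness estimate would break. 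Since \Clm{kms_blackbox} is quoted verbatim from the analysis in \cite{KhotMS18} (with $\cD_{\texttt{pair}}$ in place of their walk distribution, which behaves identically up to the $\tau=1$ edge-sampling caveat already noted), I would present this as a short derivation citing \cite{KhotMS18} for the detailed disjointness estimates and only re-checking that our pair-test distribution satisfies the same hitting-probability bounds.
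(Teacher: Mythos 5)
The paper itself states \Clm{kms_blackbox} without proof, as a black box extracted from the analysis in \cite{KhotMS18}, so your overall stance (cite KMS, check that $\cD_{\texttt{pair}}$ matches their walk distribution) is the same as the paper's. However, your reconstruction of the underlying argument has two concrete problems. First, the event you describe -- an $E_{AB}$ edge realized as an ``intermediate pair'' $(\bx',\by')$ somewhere along the walk -- does not give a violation unless $\bx'=\bx$: if the edge is crossed mid-walk, the chain $f(\bx')>f(\by')\geq f(\by)$ says nothing about $f(\bx)$, so you cannot conclude $f(\bx)>f(\by)$. The factor $\tau'$ does \emph{not} come from letting the edge appear at any of the $\tau'$ steps; it comes from only requiring that some edge-coordinate $i$ of $\bx\in A$ lie in the flipped set $\bT$ (probability $\Theta(\tau'/d)$ per coordinate), and then using commutativity of the flips to view $\by=\bx^{(\bT)}=b^{(\bT\setminus\{i\})}$ as a $(\tau'-1)$-step persistence walk from $b=\bx^{(i)}\in B$, which yields $f(\by)\leq f(b)<f(\bx)$. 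Your statement that a designated coordinate is the \emph{first} flip with probability $\Theta(\tau'/d)$ is wrong (it is $\Theta(1/d)$); insisting on ``first flip'' would lose exactly the factor $\tau'$ you need.

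Second, the counting step is internally inconsistent: you divide by $2\Delta$ to ``correct for overcounting'' and then ``re-insert the lost $\Delta$'' because the events are ``disjoint.'' If the events were disjoint there would be nothing to correct; in fact, for a fixed lower endpoint $a$ the per-edge events (different coordinates $i$) genuinely overlap, but the overlap is an $O(\Delta\tau'/d)$ fraction and hence negligible precisely because $\tau'\Delta\ll d$ (this is the role of hypothesis (c), together with the degree bound $2\Delta$ on the $A$-side). So the union bound loses only a constant factor, and the correct tally is: at least $\frac{99}{100}K\Delta$ edges of $E_{AB}$ have a persistent upper endpoint (using the exact degree $\Delta$ on the $B$-side), each contributing $\Theta(\tau'/d)\cdot 2^{-d}$ times the $9/10$ persistence probability, giving $\Omega\bigl(\frac{\tau'}{d}\cdot\frac{K}{2^d}\cdot\Delta\bigr)$ directly -- no factor of $\Delta$ is ever lost or needs to be re-inserted. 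Your closing paragraph actually states the correct near-disjointness mechanism, which contradicts the $1/(2\Delta)$ correction earlier; as written, the quantitative part of your sketch does not constitute a valid derivation, although the final expression coincides with the claim.
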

The analogous claim holds given a $(K, \Delta)$-left-good subgraph with many $(\tau'-1)$-left-persistent vertices in $A$ and $(\bx,\by)$  drawn from $\cD_{\texttt{pair}}(1,\tau')$. 

In \Sec{goodsubgraph}, we prove \Lem{good_subgraph} which obtains a good subgraph for $f$ satisfying conditions (a) and (b) of \Clm{kms_blackbox}. In \Sec{persistence}, we prove \Lem{persistent} which gives an upper bound on the fraction of non-persistent vertices, enabling us to satisfy condition (c). Finally, in \Sec{final_analysis}, we use \Lem{good_subgraph} and \Lem{persistent} to show that the conditions of \Clm{kms_blackbox} are satisfied. Finally, we use it to prove \Thm{tester}.

\subsection{Existence of a Good Bipartite Subgraph} \label{sec:goodsubgraph}

In this section, we prove \Lem{good_subgraph} on the existence of good bipartite subgraphs for real-valued functions, which was proved in~\cite{KhotMS18} for the special case of Boolean functions. This lemma crucially relies on our isoperimetric inequality for real-valued functions (\Thm{real_tal_robust}). We first state (without proof) a combinatorial result of~\cite{KhotMS18}, which we need for our lemma. 

\begin{lemma}[Lemma 6.5 of \cite{KhotMS18}]\label{lem:combo} Let  $G(A, B, E_{AB})$ be a directed bipartite graph whose vertices have degree at most $2^s$. Suppose in addition, that for any 2-coloring of its edges $\mathtt{col}: E_{AB} \to \{\mathrm{red}, \mathrm{blue}\}$ we have
\begin{align}\label{eq:l_robust}
    \sum_{x \in A} \sqrt{\deg_{\mathrm{red}}(x)} + \sum_{y \in B} \sqrt{\deg_{\mathrm{blue}}(y)} \geq L, 
\end{align}
where $\deg_{\red}(x)$ denotes the number of red edges incident on $x$ and $\deg_{\blue}(y)$ denotes the number of blue edges incident on $y$. Then $G(A,B,E_{AB})$ contains a subgraph that is $(K, \Delta)$-good with $K \sqrt{\Delta} \geq \frac{L}{8s}$. 
\end{lemma}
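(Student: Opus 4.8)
The plan is to use the one feature of the hypothesis we have not yet exploited — that the $\sqrt{\cdot}$-degree inequality \eqref{eq:l_robust} holds for \emph{every} $2$-coloring of $E_{AB}$ — by feeding it a single well-chosen coloring and then harvesting the good subgraph by dyadic degree-bucketing. First I would fix the coloring $\col$ that orients each edge according to its endpoints' degrees: $\col(x,y)=\red$ when $\deg_G(x)\ge\deg_G(y)$ and $\col(x,y)=\blue$ otherwise. Plugging this $\col$ into \eqref{eq:l_robust} gives $\sum_{x\in A}\sqrt{\deg_\red(x)}+\sum_{y\in B}\sqrt{\deg_\blue(y)}\ge L$, so at least one of the two sums is $\ge L/2$. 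Assume it is the first, $\sum_{x\in A}\sqrt{\deg_\red(x)}\ge L/2$ — the aim is then a \emph{left}-good subgraph; the other case is symmetric under $A\leftrightarrow B$, $\red\leftrightarrow\blue$ and yields a right-good one. The reason for this particular coloring is that a red edge always points to a vertex of \emph{no larger} $G$-degree, which is exactly what will let the coloring control the degrees on the side we do not regularize.

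Next I would bucket the heavy side dyadically. Discarding isolated vertices, write $A=\bigcup_{i=0}^{s}A_i$ with $A_i=\{x:2^i\le\deg_G(x)<2^{i+1}\}$; there are only $s+1$ classes, so some $A_{i^\ast}$ satisfies $\sum_{x\in A_{i^\ast}}\sqrt{\deg_\red(x)}\ge\frac{L}{2(s+1)}\ge\frac{L}{4s}$ (we may assume $s\ge1$). Put $\Delta:=2^{i^\ast}$ and let $H_0$ be the bipartite subgraph consisting of all red edges of $G$ incident on $A_{i^\ast}$. Every such edge $(x,y)$ has $\deg_G(y)\le\deg_G(x)<2^{i^\ast+1}=2\Delta$, so every $B$-vertex of $H_0$ already has degree below $2\Delta$ — property (c) of a good graph, which is preserved by any later edge deletions. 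Also, since $\sqrt{\deg_\red(x)}<\sqrt{2\Delta}$ for each $x\in A_{i^\ast}$, we get $|A_{i^\ast}|\sqrt{2\Delta}\ge\frac{L}{4s}$, hence $|A_{i^\ast}|\sqrt{\Delta}\ge\frac{L}{4\sqrt2\,s}\ge\frac{L}{8s}$, the lower bound we want for $K\sqrt\Delta$.

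It remains to make the heavy side \emph{exactly} $\Delta$-regular, and this is the step I expect to be the main obstacle, since it forces a reconciliation of two degree notions that the argument pulls in opposite directions: bucketing by the \emph{total} degree $\deg_G$ is what makes the coloring bound the $B$-side, whereas $\Delta$-regularity of the $A$-side is naturally phrased through the \emph{red} degree $\deg_\red$. If every $x\in A_{i^\ast}$ happened to have $\ge\Delta$ red edges, I would just trim each down to exactly $\Delta$ of them — deleting edges only lowers $B$-degrees, so (c) survives — and the subgraph on $A_{i^\ast}$ and the touched $B$-vertices is $(|A_{i^\ast}|,\Delta)$-left-good, completing the proof. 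In general some $x\in A_{i^\ast}$ has $\deg_\red(x)<\Delta$ and cannot be trimmed up; the fix is to pass to $A'=\{x\in A_{i^\ast}:\deg_\red(x)\ge\Delta\}$ and show $A'$ still carries an $\Omega(L/s)$ share of the $\sqrt{\deg_\red}$-weight — via a careful averaging over the red-degree levels inside $A_{i^\ast}$ that avoids an extra logarithmic factor — or, when $A'$ is light, to recover the good subgraph from the leftover (mostly-blue) vertices, whose remaining edges necessarily run into the higher-degree part of $B$ and can be re-bucketed there into a right-good piece. In every case, once one has $K$ heavy vertices each with at least $\Delta$ (and fewer than $2\Delta$) available edges to $B$-vertices of degree $<2\Delta$, that $<2\Delta$ bound yields the expansion $|N(S)|\ge|S|/2$ for every set $S$ of heavy vertices, so a routine max-flow argument selects exactly $\Delta$ edges per heavy vertex while keeping every $B$-vertex at degree $\le2\Delta$, producing the desired $(K,\Delta)$-good subgraph with $K\sqrt\Delta\ge L/(8s)$.
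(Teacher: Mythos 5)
First, a point of reference: the paper does not prove this statement at all --- it is quoted as Lemma 6.5 of \cite{KhotMS18} and used as a black box --- so your attempt is being measured against the original KMS proof, not anything internal to this paper. Judged on its own, your argument has a genuine gap, and it is not merely the unfinished regularization step you flag as ``the main obstacle'': the overall strategy cannot be completed. You use the hypothesis only through the single degree-comparison coloring and then try to extract a good subgraph of value at least that coloring's weight divided by $8s$. That stronger claim is false. Take the bipartite graph in which every $x \in A$ has total degree $\Delta$, consisting of one edge to a $B$-vertex of degree exactly $\Delta$ and $\Delta - 1$ edges to $B$-vertices of degree $2^s$, with parameters satisfying $s^2 \leq \sqrt{\Delta}$ and $s^4 \Delta^2 \leq 2^s$. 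Your coloring makes the first edge of each $x$ red and all others blue, so $\sum_{x \in A} \sqrt{\deg_{\red}(x)} = |A|$, this is the dominant side, all of $A$ lands in your bucket at scale $\Delta$, and your argument would have to deliver a good subgraph of value $\Omega(|A|/s)$. But every $(K,\Delta')$-good subgraph of this graph has $K\sqrt{\Delta'} = O(|A|/s^2)$: for a left-good piece the $K\Delta'$ chosen edges must fit into only $|A|/\Delta$ low-degree plus $|A|\Delta/2^s$ high-degree $B$-vertices, each absorbing at most $2\Delta'$, giving $K\sqrt{\Delta'} \leq 2|A|/\sqrt{\Delta} + 2|A|\Delta^{3/2}/2^s$; for a right-good piece the low-degree $B$-vertices give at most $(|A|/\Delta)\sqrt{\Delta}$ and the high-degree ones at most $(|A|\Delta/2^s)\sqrt{2^s}$. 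All of these are $O(|A|/s^2) \ll |A|/(8s)$. This does not contradict the lemma --- for this graph the all-blue coloring has weight $O(|A|/s^2)$, so the hypothesis can only hold with $L = O(|A|/s^2)$ --- but it shows that knowing one fixed coloring has weight at least $L$ is strictly weaker than the hypothesis and cannot yield the conclusion. In this same example your proposed repairs also provably fail: $A' = \{x : \deg_{\red}(x) \geq \Delta\}$ is empty (every red degree is $1$), and re-bucketing the $B$-endpoints of the red edges into a right-good piece tops out at $(|A|/\Delta)\sqrt{\Delta} = |A|/\sqrt{\Delta}$.

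What is salvageable are the two local observations: under the degree-comparison rule, red edges automatically bound the relevant $B$-side degrees by $2\Delta$, and a single dyadic bucketing costs only a factor $O(s)$; your final flow argument is also fine \emph{if} each selected vertex really had $\Delta$ available same-scale edges, which is exactly what need not hold. The real content of the lemma is reconciling the two degree notions you identify without paying a second factor of $s$, and this is where the universal quantification over colorings must be used in earnest: KMS argue in the contrapositive, using the assumed absence of good subgraphs at every scale to construct a single adversarial coloring of total weight less than $L$, so the coloring is adapted to the fine structure of the graph (in particular to which vertices lack enough same-scale edges) rather than fixed in advance. Any correct proof must engage with colorings in that adaptive way; committing to one a priori coloring, as your write-up does, cannot be pushed through.
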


We can now generalize Lemma 7.1 of \cite{KhotMS18}.

\begin{lemma}\label{lem:good_subgraph}
For all functions $f \colon \{0,1\}^d \to \RR$, there exists a subgraph $G(A,B,E_{AB})$ of the directed, $d$-dimensional hypercube which is $(K, \Delta)$-good, where $K \sqrt{\Delta} = \Theta(\frac{\eps(f) \cdot 2^d}{\log d})$ and $E_{AB} \subseteq \cS_f^-$. 
\end{lemma}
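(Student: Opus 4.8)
\textbf{Proof proposal for \Lem{good_subgraph}.}

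The plan is to combine our robust isoperimetric inequality (\Thm{real_tal_robust}) with the combinatorial extraction lemma of \cite{KhotMS18} (\Lem{combo}), exactly mirroring the argument that \cite{KhotMS18} used in the Boolean case but now feeding in the real-valued inequality. The skeleton is: (1) apply \Thm{real_tal_robust} to $f$ to get a lower bound of the form $\EX_{\bx}[\sqrt{I_{f,\red}^-(\bx)}] + \EX_{\by}[\sqrt{I_{f,\blue}^-(\by)}] \geq C\cdot\eps(f)$ for every coloring $\col$ of $\cS_f^-$; (2) pass from expectations over the cube to sums, so that $\sum_{x}\sqrt{I_{f,\red}^-(x)} + \sum_{y}\sqrt{I_{f,\blue}^-(y)} \geq C\cdot\eps(f)\cdot 2^d$; (3) bucket the violated edges by the (rounded) degree of their endpoints so that we may pass to a bipartite subgraph $G(A,B,E_{AB})$ in which all degrees lie in a dyadic range $[2^{s-1}, 2^s]$ and $E_{AB}\subseteq\cS_f^-$, while retaining an $\Omega(\eps(f)\cdot 2^d / \log d)$ fraction of the isoperimetric mass; (4) invoke \Lem{combo} with $L = \Theta(\eps(f)\cdot 2^d/\log d)$ and $s = O(\log d)$ to extract the $(K,\Delta)$-good subgraph with $K\sqrt{\Delta} \geq L/(8s) = \Omega(\eps(f)\cdot 2^d/\log^2 d)$ — and then check that the bucketing can be arranged so the final bound is $\Theta(\eps(f)\cdot 2^d/\log d)$ as claimed (this is where one must be slightly careful about exactly how many buckets one loses to).

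In more detail: for step (3), note that in the directed hypercube every vertex has degree at most $d$, so $I_f^-(x) \leq d$ for all $x$; hence $\sqrt{I_f^-(x)}$ takes values in $[1,\sqrt d]$ on vertices with at least one outgoing violated edge. I would dyadically partition the violated edges according to which of the $O(\log d)$ ranges $[2^{j}, 2^{j+1})$ the quantity $I_f^-$ (or, for the robust bound, the relevant red/blue degree) of the relevant endpoint falls into; one of these $O(\log d)$ buckets carries an $\Omega(1/\log d)$ fraction of the total isoperimetric sum. Restricting to that bucket gives a bipartite graph whose $X$-side vertices all have degree in a factor-$2$ range — this is what lets us regularize degrees to exactly $\Delta$ on one side (discarding a constant fraction of edges at worst) and at most $2\Delta$ on the other, which is precisely the $(K,\Delta)$-good shape. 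The robust inequality is essential here rather than the plain one: when we throw away edges to regularize, an edge $e=(x,y)$ we keep should be "charged" to whichever of $x,y$ still has bounded degree after the pruning, and the coloring $\col$ is exactly the device that lets us make that choice edge-by-edge, so that the surviving contribution is still lower-bounded by $C\cdot\eps(f)$ via \Thm{real_tal_robust}.

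For step (4), once we have a bipartite $G(A,B,E_{AB})$ with all degrees at most $2^s$ for $s = O(\log d)$ and with $\sum_{x\in A}\sqrt{\deg_{\red}(x)} + \sum_{y\in B}\sqrt{\deg_{\blue}(y)} \geq L$ for all colorings, \Lem{combo} immediately yields the $(K,\Delta)$-good subgraph with $K\sqrt{\Delta}\geq L/(8s)$, and since $E_{AB}\subseteq\cS_f^-$ to begin with, the good subgraph's edge set is still contained in $\cS_f^-$. Chasing the constants, $L = \Theta(\eps(f)\cdot 2^d)$ from \Thm{real_tal_robust} and $s = \Theta(\log d)$ gives $K\sqrt\Delta = \Theta(\eps(f)\cdot 2^d/\log d)$, matching the statement. (One subtlety: the inequality in \Lem{combo} is required for \emph{every} coloring of $E_{AB}$, whereas \Thm{real_tal_robust} gives a bound for every coloring of the \emph{full} $\cS_f^-$; since $E_{AB}$ is a fixed subset of $\cS_f^-$, a coloring of $E_{AB}$ extends to $\cS_f^-$ by coloring the leftover edges arbitrarily, and the leftover edges only \emph{add} nonnegative terms, so the bound for $E_{AB}$ alone requires re-deriving the inequality directly on the subgraph — here is where the bucketing must be done so that the isoperimetric sum restricted to $E_{AB}$, not $\cS_f^-$, is what is bounded below.)

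\textbf{Main obstacle.} The genuinely delicate part is step (3)/the parenthetical above: ensuring that after restricting to a single degree-bucket and regularizing, the \emph{restricted} robust sum $\sum_{x\in A}\sqrt{\deg_{\red,E_{AB}}(x)} + \sum_{y\in B}\sqrt{\deg_{\blue,E_{AB}}(y)}$ is still $\Omega(\eps(f)\cdot 2^d/\log d)$ for every coloring of $E_{AB}$. This cannot be read off from \Thm{real_tal_robust} as a black box applied to $f$; rather, one fixes an adversarial coloring of $E_{AB}$, extends it (say all-red) on $\cS_f^-\setminus E_{AB}$, applies \Thm{real_tal_robust} to that global coloring, and then argues that the edges outside the chosen bucket contribute only an $O(1/\log d)$-fraction so can be absorbed — the concavity of $\sqrt{\cdot}$ and the fact that we chose the \emph{heaviest} of $O(\log d)$ buckets are what make this work, but the bookkeeping (especially interleaving the per-vertex concavity losses with the per-coloring quantifier) is the crux and is exactly the place where \cite{KhotMS18}'s Boolean argument must be reproduced faithfully. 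I expect the remaining steps to be essentially routine given \Lem{combo}.
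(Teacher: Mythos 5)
There is a genuine gap, and it is exactly the point your proposal never confronts: to invoke \Lem{combo} you need a \emph{directed bipartite} graph $G(A,B,E_{AB})$ with $A$ and $B$ disjoint vertex sets, all edges directed from $A$ to $B$, and with $\deg_{\red}$, $\deg_{\blue}$ matching the quantities $I_{f,\red}^-$, $I_{f,\blue}^-$ appearing in \Thm{real_tal_robust}. In the Boolean case this bipartition is free ($A=f^{-1}(1)$, $B=f^{-1}(0)$), but for real-valued $f$ a single vertex can be incident to both outgoing and incoming violated edges, so there is no partition of $\{0,1\}^d$ into sides with all of $\cS_f^-$ crossing in one direction. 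Your step (3) says ``pass to a bipartite subgraph'' but defines $A$ and $B$ only implicitly via degree buckets, which does nothing to resolve this; a vertex with both in- and out-violations would still need to sit on both sides. The paper's proof is built around solving precisely this issue: it splits $\cS_f^-$ by the parity of the weight of the lower endpoint into $E_0,E_1$, forms the two bipartite graphs $G_0,G_1$, takes a \emph{disjoint-copy union} $\widehat G$ whose edge set is exactly $\cS_f^-$ (each violated edge once, every vertex possibly duplicated into a ``tail copy'' and a ``head copy''), applies \Lem{combo} to $\widehat G$ with $L=C\cdot\eps(f)\cdot 2^d$ and $s=\log d$ (no pre-bucketing at all — the hypothesis ``for every coloring of $E_{AB}$'' is literally \Thm{real_tal_robust} because the edge sets coincide), and only \emph{afterwards} splits the extracted good subgraph between $\widehat G_0$ and $\widehat G_1$, keeping the half that contains at least $K/2$ of the regular side; that half is a genuine hypercube subgraph and the loss is only a factor $2$.

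Your ``main obstacle'' paragraph is a symptom of this: the restricted-coloring problem (needing the robust bound for every coloring of $E_{AB}\subsetneq\cS_f^-$ after bucketing and pruning) is created by your unnecessary pre-bucketing step, which duplicates work \Lem{combo} already does internally, and your proposed fix does not go through — the heaviest bucket is chosen before the adversarial coloring of $E_{AB}$ is fixed, and under an all-red extension the mass of \Thm{real_tal_robust} can sit entirely on the discarded edges, so you cannot absorb them as an $O(1/\log d)$ fraction coloring-by-coloring. Relatedly, your accounting in step (4) first yields only $K\sqrt{\Delta}=\Omega(\eps(f)2^d/\log^2 d)$, a bound too weak for the tester analysis, and the way to recover the claimed $\Theta(\eps(f)2^d/\log d)$ is simply not to bucket: feed all of $\cS_f^-$ (via the disjoint parity copies) into \Lem{combo} once, with $s=\log d$.
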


\begin{proof} 
Our proof relies on \Lem{combo}. Condition \Eqn{l_robust} is clearly reminiscent of the isoperimetric inequality in \Thm{real_tal_robust}.  We want to partition the vertices in $\{0,1\}^d$ into sets $A$ and $B$ such that all the violated edges are directed from $A$ to $B$ and apply  \Thm{real_tal_robust} to the resulting graph. In addition, we want \Eqn{l_robust} to hold for a big enough value of $L$.  In the Boolean case, we can simply partition the vertices by function values.  In contrast, for real-valued functions, a vertex $x \in \{0,1\}^d$ can be incident on both incoming and outgoing violated edges. 
To overcome this challenge we resort to the bipartiteness of the directed hypercube, where each edge is between a vertex with an odd weight and a vertex with an even weight.  Partition $\sfm$ into two sets:
\begin{align*}
    E_{0} &= \{(x,y) \in \cS_f^- \colon |x| \text{ is even}\}; \\
    E_{1} &= \{(x,y) \in \cS_f^- \colon |x| \text{ is odd}\}{.}    
\end{align*}
For $j \in \{0,1\}$, let $V_j$ and $W_j$ denote the set of lower and upper endpoints, respectively, of the edges in $E_j$. We consider the two subgraphs $G_j(V_j,W_j,E_j)$ for $j \in \{0,1\}$. Notice that the vertices in $V_0 \cup W_1$ have even weight and the vertices in $V_1 \cup W_0$ have odd weight. Obviously, $V_0$ and $W_1$ may not be disjoint, and similarly $V_1$ and $W_0$ may not be disjoint, and thus $G_0$ and $G_1$ may not be vertex disjoint.

We quickly explain why we cannot simply use \Lem{combo} with either $G_0$ or $G_1$. Fix a 2-coloring of the edges $E_0 \cup E_1$. By averaging, one of the graphs will have a high enough contribution to left-hand side of the isoperimetric inequality of \Thm{real_tal_robust}. Assume this graph is $G_0$. As a result, condition \Eqn{l_robust} will hold for $G_0$ with $L = \Omega(\eps \cdot 2^d)$. However, one cannot guarantee that condition \Eqn{l_robust} holds for \emph{all} possible colorings of the edges of $G_0$. Our construction below describes how to combine $G_0$ and $G_1$ so that we can jointly ``feed'' them into \Lem{combo}.



We construct copies $\widehat{G}_0$ and $\widehat{G}_1$ of $G_0$ and $G_1$, so that  $\widehat{G}_0$  contains a vertex labelled $(x, 0)$ for each vertex $x$ of $G_0$, and $\widehat{G}_1$ contains a vertex $(x, 1)$ for each vertex $x$ of $G_1$. For each edge $(x, y)$ in $G_0$ we add an edge from $(x, 0)$ to $(y,0)$ in $\widehat{G}_0$. We do the same for the edges of $G_1$. Note that each edge of $\sfm$ has exactly one copy, either in $\widehat{G}_0$ or $\widehat{G}_1$. 

Let $\widehat{G}(\widehat{V},\widehat{W},\cS_f^-)$ denote the union of the two disjoint graphs $\widehat{G}_0$ and  $\widehat{G}_1$. That is, 
\begin{align*}
    \widehat{V} &= \{ (x, 0) \: | \: x \in V_0\} \cup \{ (x, 1) \: | \: x \in V_1\}, \\
    \widehat{W} &= \{ (y, 0) \: | \: y \in W_0\} \cup \{ (y, 1) \: | \: y \in W_1\}.
\end{align*}
All the edges of $\widehat{G}$ are directed from $\widehat{V}$ to $\widehat{W}$. Although imprecise, we think of the edges of $\widehat{G}$ as $\sfm$, since each edge in $\sfm$ has exactly one copy in $\widehat{G}$. 

Consider a 2-coloring $\mathtt{col} \colon \cS_f^- \to \{\mathrm{red},\mathrm{blue}\}$. Observe that
\begin{align*}
    \sum_{(x,\cdot) \in \widehat{V}} \sqrt{I_{f,\red}^-(x)} &+ \sum_{(y, \cdot) \in \widehat{W}} \sqrt{I_{f,\blue}^-(x)} =  \sum_{x \in V_0 \cup V_1} \sqrt{I_{f,\red}^-(x)} + \sum_{y \in W_0 \cup W_1} \sqrt{I_{f,\blue}^-(y)} \\
    &=\sum_{\substack{x \in \{0,1\}^d \\ |x|  \text{ is even}}} \sqrt{I_{f,\red}^-(x)} + \sqrt{I_{f,\blue}^-(x)} +  \sum_{\substack{x \in \{0,1\}^d \\ |x|  \text{ is odd}}} \sqrt{I_{f,\red}^-(x)} + \sqrt{I_{f,\blue}^-(x)} \\
    &= \sum_{x \in \{0,1\}^d} \sqrt{I_{f,\red}^-(x)} + \sum_{y \in \{0,1\}^d} \sqrt{I_{f,\blue}^-(y)} \geq C \cdot \eps(f) \cdot 2^d,
\end{align*}
where the inequality holds by \Thm{real_tal_robust}.

%
%
By construction, $I_{f,\red}^-(x) = \deg_{\red}((x, \cdot))$ for all $(x, \cdot) \in \widehat{V}$ and $I_{f,\blue}^-(y) = \deg_{\blue}((y, \cdot))$ for all $(y, \cdot) \in \widehat{W}$. We have that condition \Eqn{l_robust} of \Lem{combo} holds with $L  = C \cdot \eps(f) \cdot 2^d$. Thus, $\widehat{G}$ contains a subgraph $G_{\text{good}}(A,B,E_{AB})$ that is $(K,\Delta)$-good with $K\sqrt{\Delta} \geq \frac{L}{8 \log d}$. Without loss of generality, assume $G_{\text{good}}(A,B,E_{AB})$ is $(K,\Delta)$-right-good. 

Let $G_{\text{good}, 0} = (A_0,B_0,E_{A_0B_0})$ denote the subgraph of $G_\text{good}$ lying in $\widehat{G}_0$ and let $G_{\text{good},1} = (A_1,B_1,E_{A_1B_1})$ denote the subgraph of $G_{\text{good}}$ lying in $\widehat{G}_1$. Since $B_0 \cap B_1 = \emptyset$, we know that either $|B_0| \geq K/2$ or $|B_1| \geq K/2$. Suppose $|B_0| \geq K/2$. Moreover, since $\widehat{G}_0$ and $\widehat{G}_1$ are vertex and edge disjoint subgraphs, the degree of a vertex of $A_0 \cup B_0$ in $G_{\text{good},0}$ is the same its degree in $G_{\text{good}}$. Thus, $G_{\text{good},0}$ is a $(K/2,\Delta)$-right-good subgraph of the $d$-dimensional directed hypercube for which $\frac{K}{2}\sqrt{\Delta} \geq \frac{L}{16\log d}$.

By removing some vertices from $B_0$, and redefining $K$ if necessary, we may assume that $K\sqrt{\Delta} = \Theta\left(\frac{\eps(f) \cdot 2^d}{\log d}\right)$. This completes the proof of \Lem{good_subgraph}. \end{proof}



\subsection{Bounding the Number of Non-Persistent Vertices} \label{sec:persistence}

We prove \Lem{persistent} that bounds the number of non-persistent vertices for a function $f$ and a given distance parameter $\tau$. All results in this section also hold for $\tau$-left-persistence. 

For a function $f\colon \{0,1\}^d \to \R$, we define $I_f^-$ as $\frac{|\sfm|}{2^d}$. 

\begin{corollary} [Corollary of Theorem 6.6, Lemma 6.8 of \cite{KhotMS18}] \label{cor:KMS_persistence} Consider a function $h \colon \{0,1\}^d \to \{0,1\}$ and an integer $\tau \in \left[1,\sqrt{\frac{d}{\log d}}\right]$. If $I_h^- \leq \sqrt{d}$ then
\begin{align} \label{eq:KMS_persist}
    \Pr_{\bx\sim \{0,1\}^d}\big[\bx \text{ is not } \tau\text{-right-persistent for } h\big] = O\left(\frac{\tau}{\sqrt{d}}\right)\text{.}
\end{align}
\end{corollary}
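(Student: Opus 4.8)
The statement is phrased as a corollary of Theorem~6.6 and Lemma~6.8 of~\cite{KhotMS18}, so the plan is to invoke those two results, the only genuinely new input being that our hypothesis $I_h^- \le \sqrt d$ controls the ``violation term'' appearing in their bound. I would first reduce non-persistence to the event that the $\tau$-step upward random walk changes the value of $h$. Since $h$ is Boolean and the walk $\bx \to \by$ in the definition of $\tau$-right-persistence only flips $0$-coordinates of $\bx$ to $1$'s (so $\bx \preceq \by$), the persistence condition $h(\by) \le h(\bx)$ can fail only when $h(\bx) = 0$ and $h(\by) = 1$; in particular, a failure implies $h(\by)\ne h(\bx)$. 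Also, all but a $1/\poly(d)$ fraction of $\bx\sim\{0,1\}^d$ have weight in the range $\tfrac d2 \pm O(\sqrt{d\log d})$ to which the definition restricts, so that restriction is negligible for the stated bound. By Markov's inequality applied to $\Pr_{\by}[h(\by)=1\mid\bx]$, we get $\Pr_{\bx}[\bx\text{ not }\tau\text{-right-persistent}] \le 10\,\Pr_{\bx,\by}[h(\bx)=0,\ h(\by)=1] + 1/\poly(d)$, and it suffices to bound the first term by $O(\tau/\sqrt d)$ (the $\tau$-left-persistence version is symmetric).

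To bound $\Pr_{\bx,\by}[h(\bx)=0,\ h(\by)=1]$, I would union-bound over the $\tau$ steps of the walk. Writing $\bx = z_0 \to z_1 \to \cdots \to z_\tau = \by$, if $h(z_0)=0$ and $h(z_\tau)=1$ then some step $i$ has $h(z_{i-1})\ne h(z_i)$, so the probability is at most $\sum_{i=1}^\tau \Pr[h(z_{i-1})\ne h(z_i)]$. Conditioned on $\bx$ lying in the middle layers (a $1-1/\poly(d)$ event), each $z_{i-1}$ also has weight $\tfrac d2 \pm O(\sqrt{d\log d})$ since $\tau \le \sqrt{d/\log d}$, and a direct computation shows that the directed edge $(z_{i-1},z_i)$ equals any fixed directed edge with lower endpoint in the middle layers with probability $\Theta\!\big(1/(d\cdot 2^d)\big)$; hence $\Pr[h(z_{i-1})\ne h(z_i)] = O\!\big(B/(d\,2^d)\big)$, where $B$ is the number of bichromatic edges of $h$. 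It remains to bound $B$: split $B = R + F$ into ``rising'' edges ($h(u)=0,h(v)=1$) and ``falling'' (i.e., violated) edges, where $F = |\cS_h^-| = I_h^-\cdot 2^d \le \sqrt d\,2^d$ by hypothesis. The identity $R - F = \sum_{v:\,h(v)=1}(2|v|-d) \le \sum_{v}\max(2|v|-d,0) = \tfrac12\,\EX_{\bv}\bigl[\,\bigl|2|\bv|-d\bigr|\,\bigr]\cdot 2^d = O(\sqrt d\,2^d)$ then gives $B = R+F = O(\sqrt d\,2^d)$. Plugging in, $\sum_{i=1}^\tau \Pr[h(z_{i-1})\ne h(z_i)] = O\!\big(\tau\,\sqrt d\,2^d/(d\,2^d)\big) = O(\tau/\sqrt d)$, as required.

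This is precisely the reasoning behind Theorem~6.6 and Lemma~6.8 of~\cite{KhotMS18}: their Theorem~6.6 supplies the concentration and uniformity needed to carry out the per-step estimate for the correlated $\tau$-walk restricted to the middle layers, and their Lemma~6.8 assembles the union bound together with the edge count into a bound on the number of non-persistent vertices in terms of $I_h^-$ and $\sqrt d$; the corollary follows by substituting the hypothesis $I_h^- \le \sqrt d$. The main obstacle, and the reason to cite \cite{KhotMS18} rather than reprove it, is exactly this per-step uniformity estimate — showing that one step of the correlated, not-quite-uniform walk hits each middle-layer directed edge with probability $\Theta(1/(d\,2^d))$, uniformly enough that summing over bichromatic edges yields $O(B/(d\,2^d))$, while the rare excursions outside the middle layers contribute only $1/\poly(d)$. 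Everything else (the Boolean reduction, the Markov step, the union bound, and the $R-F$ edge identity) is elementary.
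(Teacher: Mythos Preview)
Your proposal is correct and, in fact, goes well beyond what the paper itself does: the paper states this corollary without proof, simply citing Theorem~6.6 and Lemma~6.8 of~\cite{KhotMS18} as its source. Your sketch accurately reconstructs the argument underlying those results --- the Markov reduction from non-persistence to the value-change probability, the union bound over the $\tau$ steps, the per-step near-uniformity over middle-layer edges (which is indeed the content of their Theorem~6.6), and the bound $B = O(\sqrt d\,2^d)$ on the total number of bichromatic edges via the identity $R-F = \sum_{v:h(v)=1}(2|v|-d)$ combined with the hypothesis $I_h^- \le \sqrt d$ (which is their Lemma~6.8). You also correctly flag the one nontrivial technical point, namely the near-uniformity of the correlated without-replacement walk over middle-layer edges, and rightly defer it to~\cite{KhotMS18}.
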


We generalize the above result to functions with image size $r \geq 2$.

\begin{lemma} \label{lem:persistent} 
Consider a function $f \colon \{0,1\}^d \to [r]$ and an integer $\tau \in \left[1,\sqrt{\frac{d}{\log d}}\right]$. If $I_f^- \leq \sqrt{d}$, then
\[\Pr_{\bx\sim \{0,1\}^d}\big[\bx \text{ is not } \tau\text{-right-persistent for } f \big] = (r-1) \cdot O\left(\frac{\tau}{\sqrt{d}}\right)\text{.}\]
\end{lemma}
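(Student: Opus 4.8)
The idea is to reduce the real-valued persistence bound to the Boolean case of \Cor{KMS_persistence} via thresholding. For each threshold $t \in [r-1]$, define the Boolean function $h_t \colon \{0,1\}^d \to \{0,1\}$ by $h_t(x) = 1$ if $f(x) > t$ and $h_t(x) = 0$ otherwise. The two key observations I would establish are: (i) a violated edge of $h_t$ is always a violated edge of $f$, and more precisely, each violated edge $(x,y) \in \sfm$ (with $f(x) > f(y)$) is violated by $h_t$ for exactly the thresholds $t$ with $f(y) \leq t < f(x)$, i.e., for exactly $f(x) - f(y) \geq 1$ values of $t$ — but in any case at least one, so $\sfm = \bigcup_{t \in [r-1]} \cS_{h_t}^-$; and (ii) if $f$ is not $\tau$-right-persistent at $x$, then $h_t$ is not $\tau$-right-persistent at $x$ for some $t \in [r-1]$. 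Claim (ii) is the crux: if $\Pr_{\by}[f(\by) \leq f(x)] \leq 9/10$, then taking $t = f(x)$ (note $t \in [r-1]$ whenever $f(x) < r$; the case $f(x) = r$ is vacuous since then $f(\by) \leq f(x)$ always) we have $f(\by) > f(x)$ iff $f(\by) > t$ iff $h_t(\by) = 1$, and $f(x) = t$ so $h_t(x) = 0$; thus $\Pr_{\by}[h_t(\by) \leq h_t(x)] = \Pr_{\by}[h_t(\by) = 0] = \Pr_{\by}[f(\by) \leq f(x)] \leq 9/10$, so $x$ is not $\tau$-right-persistent for $h_t$.

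\textbf{Main steps.} First I would verify that each $h_t$ satisfies the hypothesis $I_{h_t}^- \leq \sqrt{d}$ of \Cor{KMS_persistence}: since by (i) the violated-edge sets $\cS_{h_t}^-$ are subsets of $\sfm$, we have $|\cS_{h_t}^-| \leq |\sfm| = I_f^- \cdot 2^d \leq \sqrt{d} \cdot 2^d$, hence $I_{h_t}^- \leq \sqrt{d}$ for every $t$. Second, apply \Cor{KMS_persistence} to each of the $r-1$ functions $h_t$ to get
\[
\Pr_{\bx}\big[\bx \text{ is not } \tau\text{-right-persistent for } h_t\big] = O\!\left(\frac{\tau}{\sqrt{d}}\right).
\]
Third, combine: by observation (ii), the event that $\bx$ is not $\tau$-right-persistent for $f$ is contained in the union over $t \in [r-1]$ of the events that $\bx$ is not $\tau$-right-persistent for $h_t$. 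A union bound over these $r-1$ events then yields
\[
\Pr_{\bx}\big[\bx \text{ is not } \tau\text{-right-persistent for } f\big] \leq \sum_{t=1}^{r-1} \Pr_{\bx}\big[\bx \text{ is not } \tau\text{-right-persistent for } h_t\big] = (r-1)\cdot O\!\left(\frac{\tau}{\sqrt{d}}\right),
\]
which is exactly the claimed bound. The symmetric argument with $h_t(x) = 1$ iff $f(x) \geq t$ (or with the walk going down) handles $\tau$-left-persistence.

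\textbf{Main obstacle.} I expect the only real subtlety to be getting the thresholding definitions and boundary cases exactly right in observation (ii) — specifically making sure that for the $x$ witnessing non-persistence, the threshold $t = f(x)$ lands in the valid range $[r-1]$ (handled by noting $f(x) = r$ is vacuous) and that the persistence event for $h_t$ genuinely fails with the same $9/10$ constant. There is also a small point to be careful about: the persistence definition restricts to vertices $x$ of weight $d/2 \pm O(\sqrt{d\log d})$, and this weight constraint is identical for $f$ and for each $h_t$, so it is preserved under the reduction and causes no trouble. Everything else — the hypothesis check and the union bound — is routine, so the proof should be short.
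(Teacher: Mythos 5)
Your proposal is correct and matches the paper's own proof essentially step for step: the paper defines the same threshold functions $h_t$, observes $\cS_{h_t}^- \subseteq \cS_f^-$ so that $I_{h_t}^- \leq I_f^- \leq \sqrt{d}$, notes that $x$ is $\tau$-right-persistent for $f$ exactly when it is for $h_{f(x)}$ (with the $t=r$ case trivially persistent), and concludes by the same union bound over the $r-1$ thresholds.
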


\begin{proof} For all $t \in [r]$, define the threshold function $h_t \colon \{0,1\}^d \to \{0,1\}$ as:
\begin{align*} \label{eq:f_i}
    h_t(x) =
    \begin{cases}
      1 & \text{if }  f(x) > t,\\
      0 & \text{otherwise}.
    \end{cases}
\end{align*}
Observe that for all $t \in [r]$, we have $\cS_{h_t}^- \subseteq \sfm$, and thus $I_{h_t}^- \leq I_f^- \leq \sqrt{d}$. By \Cor{KMS_persistence}, we have that \Eqn{KMS_persist} holds for $h = h_t$ for all $t \in [r]$. Next, we point out that a vertex $x \in \{0,1\}^d$ is $\tau$-right-persistent for $f$ if and only if $x$ is $\tau$-right-persistent for the Boolean function $h_{f(x)}$. Too see this, consider a vertex $z$ such that $x \prec z$. First, note that $h_{f(x)}(x) = 0$. Second, note that $h_{f(x)}(z) = 1$ if and only if $f(z) > f(x)$ by definition of $h_{f(x)}$. Therefore, $f(z) \leq f(x)$ if and only if $h_{f(x)}(z) \leq h_{f(x)}(x)$. Finally, note that all vertices are persistent for $h_r$ since $h_r(x) = 0$ for all $x \in \{0,1\}^d$.
%
%
%
%
%
Using these observations, we have  
\begin{align*}
    \Pr_{\bx\sim\{0,1\}^d}\left[\bx \text{ is not } \tau\text{-right-persistent for } f\right] 
    &= \Pr_{\bx\sim\{0,1\}^d}\left[\bx \text{ is not } \tau\text{-right-persistent for } h_{f(\bx)}\right] \nonumber \\
    &\leq \Pr_{\bx\sim\{0,1\}^d}\left[\exists t \in [r-1] \colon \bx \text{ is not } \tau\text{-right-persistent for } h_t\right] \nonumber \\
    &\leq \sum_{t=1}^{r-1} \Pr_{\bx\sim\{0,1\}^d}\left[\bx \text{ is not } \tau\text{-right-persistent for } h_t\right] \nonumber \\
    &= \sum_{t=1}^{r-1} O\left(\frac{\tau}{\sqrt{d}}\right) = (r-1) \cdot O\left(\frac{\tau}{\sqrt{d}}\right)\,, 
\end{align*}
where the second inequality is by the union bound and the last equality is due to the fact that \Eqn{KMS_persist} holds for all $h_t$, $t \in [r]$. \end{proof}


\subsection{Proof of \Thm{tester}}\label{sec:final_analysis}

In this section, we show how to use \Lem{good_subgraph} and \Lem{persistent} to ensure that the conditions of \Clm{kms_blackbox} hold. Once the conditions are met, we prove \Thm{tester}. 

\begin{proof}[Proof of \Thm{tester}]
Let $G(A,B,E_{AB})$ be the $(K, \Delta)$-good subgraph for $f$ which we obtain from \Lem{good_subgraph}. Then $K \sqrt{\Delta} = \Theta(\frac{\eps(f)\cdot 2^d}{\log d})$ and $E_{AB} \subseteq \sfm$. Without loss of generality, suppose that $G(A,B,E_{AB})$ is a $(K,\Delta)$-right-good subgraph. Note that $G(A,B,E_{AB})$ satisfies the conditions (a) and (b) of \Clm{kms_blackbox}. We define $\sigma = K/2^d$, so that $\sigma\sqrt{\Delta} =  \Theta(\frac{\eps(f)}{\log d})$. Before proceeding with the main analysis, we rule out some simple cases with the following claim.

\begin{claim} \label{clm:parameter_bounds} Suppose any of the following hold: (a) $I_f^- \geq \sqrt{d}$. (b) $r \geq \frac{\sqrt{d}}{\log d}$. (c) $\sigma \leq \frac{r \cdot \log d}{\sqrt{d}}$. Then, for $(\bx,\by) \sim \cD_{\texttt{pair}}(0,1)$, we have $\Pr_{\bx,\by}[f(\bx) > f(\by)] \geq \widetilde{\Omega}(\frac{\eps(f)^2}{r\sqrt{d}})$. \end{claim}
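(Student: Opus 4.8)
The plan is to handle each of the three conditions (a), (b), (c) by showing that in each case the simple \emph{edge tester} (i.e.\ the $\tau=1$ case of \Alg{tester}) already detects a violation with the claimed probability $\widetilde\Omega(\eps(f)^2/(r\sqrt d))$. The key tool is the Boolean-decomposition corollary highlighted in the introduction, namely the lower bound $|\sfm| \geq \eps(f)\cdot 2^{d-1}$ on the number of violated \emph{edges} of the directed hypercube, which holds for every real-valued $f$ with no dependence on $r$. Together with the fact (stated in the text just after \Alg{tester}) that for $\tau=1$ our pair-test distribution gives the tester the same guarantees as the uniform edge tester, this means that sampling a hypercube edge uniformly at random detects a violation with probability $\Omega(|\sfm|/(d\cdot 2^d)) = \Omega(\eps(f)/d)$.

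First I would dispose of case (a): if $I_f^- \geq \sqrt d$, then by definition $|\sfm| = I_f^-\cdot 2^d \geq \sqrt d\cdot 2^d$, so a uniformly random edge is violated with probability at least $\Omega(\sqrt d\cdot 2^d/(d\cdot 2^d)) = \Omega(1/\sqrt d)$. Since $\eps(f)\le 1$ and $r\ge 1$, we have $\eps(f)^2/(r\sqrt d) \le 1/\sqrt d$, so this probability is certainly $\widetilde\Omega(\eps(f)^2/(r\sqrt d))$. For case (b): if $r \geq \sqrt d/\log d$, then $\eps(f)^2/(r\sqrt d) \leq \eps(f)^2\log d/d \leq \widetilde O(\eps(f)/d)$, so the generic edge-tester bound $\Omega(\eps(f)/d)$ from the paragraph above already suffices. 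For case (c): if $\sigma \le r\log d/\sqrt d$, I would combine this with $\sigma\sqrt\Delta = \Theta(\eps(f)/\log d)$ to extract a lower bound on $\Delta$, hence an upper bound on $K$, and feed the resulting parameters into the $\tau'=1$ instance of \Clm{kms_blackbox} — which is exactly the edge-tester case. Concretely, $\sqrt\Delta = \Theta(\eps(f)/(\sigma\log d)) \ge \widetilde\Omega(\eps(f)\sqrt d/(r\,(\log d)^2))$, and since by definition $\sigma\sqrt\Delta = \Theta(\eps(f)/\log d)$ we also have $\sigma\cdot\Delta = (\sigma\sqrt\Delta)\cdot\sqrt\Delta$, so that $\Pr[f(\bx)>f(\by)] \ge \Omega(\tfrac{1}{d}\cdot\sigma\cdot\Delta) = \Omega(\tfrac{1}{d}\cdot\tfrac{\eps(f)}{\log d}\cdot\sqrt\Delta) \ge \widetilde\Omega(\tfrac{\eps(f)^2}{r\sqrt d})$ using the lower bound on $\sqrt\Delta$; here I need $\tau'=1$ to trivially satisfy condition (c) of \Clm{kms_blackbox} (all vertices are $0$-persistent) and $\tau'\cdot\Delta = \Delta \ll d$, which holds since $\Delta \le 2\Delta \le$ degree of a vertex, and the good subgraph is chosen inside the relevant weight band so degrees are $o(d)$ — if not, one can shrink $\Delta$ first without hurting $K\sqrt\Delta$ asymptotically.

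The main obstacle I anticipate is case (c): I need to be careful that the good subgraph $G(A,B,E_{AB})$ actually satisfies the persistence/degree side-condition of \Clm{kms_blackbox} when $\tau'=1$, and that the arithmetic manipulating $\sigma$, $\Delta$, $K$ and the $\widetilde O(\cdot)$ factors genuinely lands at $\widetilde\Omega(\eps(f)^2/(r\sqrt d))$ rather than something weaker. The degree issue is minor since with $\tau'=1$ the constraint $\tau'\Delta\ll d$ is essentially free (or can be forced by passing to a subgraph), but the bookkeeping of logarithmic factors and the exact exponents on $\eps(f)$ and $r$ is where an error could creep in, so I would track the $\widetilde O/\widetilde\Omega$ notation explicitly through each inequality. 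A secondary point is to double-check that the ``$\tau=1$ behaves like the uniform edge tester'' reduction from \cite{KhotMS18} is legitimately applicable here for real-valued $f$; this follows because that reduction only uses the structure of the hypercube and the pair-test distribution, not the range of $f$, so it carries over verbatim.
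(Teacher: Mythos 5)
Your cases (a) and (b) coincide with the paper's proof, and in case (c) your arithmetic ($\sqrt\Delta \geq \widetilde\Omega(\eps(f)\sqrt d/r)$ up to logs, hence $\sigma\Delta = \widetilde\Omega(\eps(f)^2\sqrt d/r)$, then divide by $d$) is exactly the paper's computation, which writes it as $\sigma\Delta = (\sigma\sqrt\Delta)^2/\sigma$. Where you diverge is the connective step: the paper never invokes \Clm{kms_blackbox} here. It simply notes that since $E_{AB}\subseteq\sfm$ and the subgraph is $(K,\Delta)$-right-good, $|\sfm| \geq |E_{AB}| = K\Delta$, so $I_f^- \geq K\Delta/2^d = \sigma\Delta$, and then the plain edge-tester guarantee $\Pr[f(\bx)>f(\by)]\geq I_f^-/d$ finishes the case with no side conditions at all. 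Your detour through \Clm{kms_blackbox} with $\tau'=1$ reaches the same inequality $\Pr \geq \Omega(\sigma\Delta/d)$ but forces you to discharge the persistence and $\tau'\Delta\ll d$ hypotheses, and this is where your write-up is not sound as stated: the good subgraph of \Lem{good_subgraph} is not confined to a weight band, and even if it were, hypercube degrees are $d$ at every weight, so nothing prevents $\Delta=\Theta(d)$; moreover, shrinking $\Delta$ \emph{does} decrease $K\sqrt\Delta$ (with $K$ fixed), so "without hurting $K\sqrt\Delta$ asymptotically" is false, although the final bound can still be salvaged by instead using the lower bound $\sigma \geq \Omega(\eps(f)/(\sqrt d\log d))$ that follows from $\sigma\sqrt\Delta=\Theta(\eps(f)/\log d)$ and $\Delta\leq d$. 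In short, your approach is correct in substance and lands on the right bound, but the paper's direct edge-counting step is both simpler and avoids the side conditions you had to wave away; if you keep your route, replace the weight-band/shrinking justification with the explicit observation that for $\tau'=1$ one does not need \Clm{kms_blackbox} at all, since a uniformly random edge is violated with probability at least $\sigma\Delta/d$ by direct counting.
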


\begin{proof} 
As we remarked, for $\tau = 1$, \Alg{tester} has the same guarantees as the edge tester. By definition, the edge tester rejects with probability at least $\frac{I_f^-}{d}$. Therefore, (a) implies the conclusion, since if $I_f^- \geq \sqrt{d}$, then the edge tester succeeds with probability $\Omega(\frac{1}{\sqrt{d}})$. In addition, the edge tester rejects with probability $\Omega(\frac{\eps(f)}{d})$ for all real-valued functions. Thus, (b) implies the conclusion, since if $r \geq \frac{\sqrt{d}}{\log d}$, then $\frac{\eps(f)}{d} \geq \frac{\eps(f)^2}{r\sqrt{d}\log d}$. 

To see that (c) implies the conclusion, suppose $\sigma \leq \frac{r \cdot \log d}{\sqrt{d}}$. Recall that $\sigma \sqrt{\Delta} = \Theta(\frac{\eps(f)}{\log d})$. Thus,
\begin{align*} 
    \sigma \cdot \Delta = \frac{(\sigma \sqrt{\Delta})^2}{\sigma} = \sigma^{-1} \cdot \Theta\left(\Big(\frac{\eps(f)}{\log d}\Big)^2\right) = \Omega\left(\frac{\eps(f)^2\sqrt{d}}{r (\log d)^3}\right)\text{.}
\end{align*}
Next, recall that $E_{AB} \subseteq \sfm$ and since $G$ is $(K,\Delta)$-right-good, we have $|E_{AB}| = |B| \cdot \Delta = K \cdot \Delta$. Thus, $I_f^- \geq \frac{K \Delta}{2^d} = \sigma \cdot \Delta$. Therefore, the edge tester rejects with probability $\frac{I_f^-}{d} \geq \frac{\sigma \Delta}{d} \geq \Omega\left(\frac{\eps(f)^2}{r\sqrt{d} \cdot (\log d)^3}\right)$. \end{proof}

In light of \Clm{parameter_bounds}, we henceforth assume that $I_f^- \leq \sqrt{d}$, $r \leq \frac{\sqrt{d}}{\log d}$, and $\sigma \geq \frac{r \cdot \log d}{\sqrt{d}}$. Note that this implies $\frac{r \cdot \log d}{\sqrt{d}} \leq 1$ and $\frac{r \cdot \log d}{\sqrt{d}} \leq \sigma \leq 1$. Since the tester iterates through all values of $\tau$ that are powers of $2$ and at most $\sqrt{\frac{d}{\log d}}$, we can fix the unique value $\tau'$ satisfying
\[
\tau' \leq \frac{\sigma}{r-1}\sqrt{\frac{d}{\log d}} \leq 2\tau'.
\]
Note that these bounds imply that $\tau' \geq \frac{1}{2} \cdot \sqrt{\log d}$. Moreover, since $I_f^- \leq \sqrt{d}$, we can apply \Lem{persistent} to conclude that the fraction of vertices in $\{0,1\}^d$ which are not $(\tau'-1)$-right-persistent for $f$ is at most $\frac{c \cdot \tau' \cdot (r-1)}{\sqrt{d}}$ for some constant $c > 0$. Using our upper bound on $\tau'$, this value is at most $\frac{c \cdot \sigma}{\sqrt{\log d}}\leq \frac{\sigma}{100}$ for sufficiently large $d$. Since $|B| = \sigma \cdot 2^d$, we conclude that at least $\frac{99}{100}|B|$ vertices in $B$ are $(\tau'-1)$-right-persistent. Finally, we show that $\Delta \cdot \tau' \ll d$. 
\begin{align*}
    \Delta \cdot \tau' \leq \Delta \cdot \frac{\sigma}{r-1}\sqrt{\frac{d}{\log d}} = \frac{1}{r-1} \cdot \sigma\sqrt{\Delta} \sqrt{\frac{d \Delta}{\log d}} \leq \frac{1}{r-1} \cdot \Theta\left(\frac{\eps(f)}{\log d}\right) \frac{d}{\sqrt{\log d}} \ll d, 
\end{align*}
and therefore condition (c) of \Clm{kms_blackbox} holds. We have shown that all conditions, (a), (b), and (c) of \Clm{kms_blackbox} hold. Therefore, for $(\bx,\by) \sim \cD_{\texttt{pair}}(0,\tau')$, we have
\[
\Pr_{\bx,\by}[f(\bx) > f(\by)] \geq \frac{C' \cdot \tau'}{d} \cdot \sigma \cdot \Delta \text{ for some constant } C' > 0.
\]
Using our lower bound on $\tau'$, it follows that 
\begin{align*}
    \Pr_{\bx,\by}[f(\bx) > f(\by)] \geq \frac{C' \cdot \tau' \cdot \sigma \cdot \Delta}{d} \geq \frac{1}{2} \cdot \frac{\sigma}{r-1}\sqrt{\frac{d}{\log d}} \cdot \frac{C' \cdot \sigma \cdot \Delta}{d} = \frac{C' \cdot \sigma^2 \cdot \Delta}{2(r-1)\sqrt{d\log d}}{.}
\end{align*}
Since $(\sigma \sqrt{\Delta})^2 = \Theta\left(\big(\frac{\eps(f)}{\log d}\big)^2\right)$, then:
\begin{align*}
    \Pr_{(\bx,\by)\sim \cD_{\texttt{pair}}(0,\tau')}[f(\bx) > f(\by)] \geq \frac{C'\eps(f)^2}{2(r-1)\sqrt{d}(\log d)^{5/2}} = \widetilde{O}\left(\frac{\eps(f)^2}{r\sqrt{d}}\right){.}
\end{align*}
Therefore, $\widetilde{O}(\frac{r\sqrt{d}}{\eps(f)^2})$ iterations of the tester with $(\bx,\by) \sim \cD_{\texttt{pair}}(0,\tau')$ will suffice for the tester to detect a violation to monotonicity and reject with high probability. This concludes the proof of \Thm{tester}. 
\end{proof}

\section{Approximating the Distance to Monotonicity of Real-Valued Functions}\label{sec:distance-approximation}
In this section, we prove \Thm{approx_dist_mono_real} by showing that the algorithm of Pallavoor et al.~\cite{PallavoorRW20} can be employed to approximate distance to monotonicity of real-valued functions. 

To prove \Thm{approx_dist_mono_real}, it is sufficient to give a tolerant tester for monotonicity of functions $f:\{0,1\}^d\to\RR$. A {\em tolerant} tester for monotonicity gets two parameters $\eps_1,\eps_2\in(0,1)$, where $\eps_1<\eps_2$, and oracle access to a function $f$. It has to accept with probability at least 2/3 if $f$ is $\eps_1$-close to monotone and reject with probability at least 2/3 if $f$ is $\eps_2$-far from monotone.  Our tester distinguishes functions that are $\widetilde{O}(\eps/\sqrt{d})$-close to monotone from those that are $\eps$-far. Suppose this tolerant tester has query complexity $q(\eps,d)$. Then, by~\cite[Theorem A.1]{PallavoorRW20}, it can be converted to a distance approximation algorithm with the required approximation guarantee and query complexity $O(q(\alpha,d)\log\log(1/\alpha)).$ The following lemma, proved by Pallavoor et al.\ for the special case of Boolean functions, states our result on tolerant testing of monotonicity. Together with the conversion procedure from tolerant testing to distance approximation discussed above, it implies \Thm{approx_dist_mono_real}.

\begin{lemma}
\label{lem:approx_dist_mono_tolerant}
There exists a fixed universal constant $c \in (0,1)$ and a nonadaptive algorithm, $\mathtt{ApproxMono}$, that gets a parameter $\eps \in (0, 1/2)$ and oracle access to a function $f\colon \{0,1\}^d \rightarrow \mathbb{R}$, makes $\mathrm{poly}(n, 1/\eps)$ queries and returns $\mathtt{close}$ or $\mathtt{far}$ as follows:
\begin{enumerate}
    \item If $\eps(f) \leq \frac{c \cdot  \eps}{\sqrt{d \log d}}$ it outputs $\mathtt{close}$ with probability at least 2/3.
     \item If $\eps(f) \geq \eps$ it outputs $\mathtt{far}$ with probability at least 2/3.
\end{enumerate}
\end{lemma}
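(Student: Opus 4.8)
The plan is to run the nonadaptive tolerant monotonicity tester $\mathtt{ApproxMono}$ of Pallavoor et al.~\cite{PallavoorRW20} essentially verbatim, and to argue that its analysis --- written for Boolean functions --- survives the passage to real-valued functions once the Boolean isoperimetric inequality it invokes is replaced by our \Thm{real_tal_robust}. The crucial enabling observation is that $\mathtt{ApproxMono}$ queries $f$ only in order to compare the resulting values (as in \Alg{tester}), and never inspects the numerical values themselves. Hence the algorithm is well defined for $f \colon \{0,1\}^d \to \RR$, stays nonadaptive, and still makes $\poly(d,1/\eps)$ queries. Granting \Lem{approx_dist_mono_tolerant}, \Thm{approx_dist_mono_real} then follows from the black-box reduction of tolerant testing to distance approximation in \cite[Theorem A.1]{PallavoorRW20}, exactly as indicated above.

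To see that the two halves of the correctness proof go through, consider first the ``far'' direction: if $\eps(f) \geq \eps$, the algorithm must output $\mathtt{far}$. In \cite{PallavoorRW20} the quantity estimated by the algorithm is shown to be large in this case by appealing to the robust Talagrand-type inequality for Boolean functions; we substitute \Thm{real_tal_robust}, which guarantees $\EX_{\bx}[\sqrt{I_{f,\red}^-(\bx)}] + \EX_{\by}[\sqrt{I_{f,\blue}^-(\by)}] \geq C \cdot \eps(f)$ for every coloring of $\sfm$, with no dependence on the image size. To the extent that the argument also reuses the KMS good-subgraph and persistence machinery, those ingredients have already been lifted to the real-valued setting in \Sec{tester}: \Lem{good_subgraph} extracts the required $(K,\Delta)$-good subgraph directly from \Thm{real_tal_robust}, and \Lem{persistent} bounds the number of non-persistent vertices; the spurious $(r-1)$ factor in the latter is harmless, since (as in \Clm{parameter_bounds}) it is only relevant when $r \leq \sqrt{d}/\log d$, in which regime it is absorbed into $\poly(d)$. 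For the ``close'' direction --- if $\eps(f) \leq c\eps/\sqrt{d\log d}$ then the algorithm outputs $\mathtt{close}$ --- the estimated quantity is shown to be small by a counting argument that is insensitive to the range of $f$: every violated edge of $f$ is incident to one of the at most $\eps(f)\cdot 2^d$ vertices on which $f$ disagrees with its closest monotone function, so, coloring each violated edge according to whether its lower or upper endpoint lies in a minimum vertex cover of the violation graph (\Fact{flnrrs}), one obtains $\min_{\mathtt{col}}\big(\EX_{\bx}[\sqrt{I_{f,\red}^-(\bx)}] + \EX_{\by}[\sqrt{I_{f,\blue}^-(\by)}]\big) \leq 2\sqrt{d}\cdot\eps(f)$. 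Together with the robust inequality, this pins $\eps(f)$ down to within an $O(\sqrt{d\log d})$ factor, the extra $\sqrt{\log d}$ and the concentration analysis of the sampled statistic being exactly as in \cite{PallavoorRW20}.

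The step I expect to be the main obstacle is the bookkeeping of verifying that no part of the \cite{PallavoorRW20} analysis tacitly exploits Booleanness beyond what \Thm{real_tal_robust}, \Lem{good_subgraph}, and \Lem{persistent} already supply. The one genuinely non-syntactic point is the same one that arises in the proof of \Lem{good_subgraph}: for real-valued $f$ a single vertex may be incident to both incoming and outgoing violated edges, so the ``partition the cube by function value'' step available in the Boolean proof must be replaced --- consistently, wherever it appears --- by the split of $\sfm$ according to the parity of the weight of the lower endpoint together with the two-copies construction of \Lem{good_subgraph}. A secondary point is that $\mathtt{ApproxMono}$ is not given the image size $r$; this costs nothing, since one may simply run over all $O(\log d)$ dyadic scales $\tau \leq \sqrt{d/\log d}$, which the analysis already tolerates. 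Once these substitutions are threaded through, the remainder of the \cite{PallavoorRW20} argument --- the persistence estimates (reduced to the Boolean case via the threshold functions $h_t$, as in \Lem{persistent}), the choice of scale, and the concentration bounds --- carries over unchanged.
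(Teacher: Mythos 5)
Your proposal takes essentially the same route as the paper: run $\mathtt{ApproxMono}$ of Pallavoor et al.\ unchanged (it only compares function values, so it remains well defined and nonadaptive for real-valued $f$), substitute \Thm{real_tal_robust} for the Boolean robust inequality, and repair the one genuinely non-Boolean step---a vertex may now be incident to both red and blue violated edges---by splitting according to the parity of the vertex weight, which is exactly how the paper proves its generalized version of Lemma 2.8 of \cite{PallavoorRW20} (\Lem{lemma_2.6}). Two minor inaccuracies in your write-up do not affect correctness but are worth noting: the good-subgraph and persistence machinery of \Sec{tester} (\Lem{good_subgraph}, \Lem{persistent}) plays no role in the analysis of $\mathtt{ApproxMono}$, whose ``far'' case is handled by the capture-event/bucketing argument with the coloring that assigns each violated edge to its endpoint of larger violation degree; and the ``close'' case rests not on an upper bound for the Talagrand objective but on the fact that the two statistics the algorithm actually estimates---the fraction of violated edges and $\mu_f(S)$---are both at most $2\eps(f)$ (\Obs{lb-on-dist}), whose Boolean proof extends verbatim.
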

\begin{proof}
We show that Algorithm $\mathtt{ApproxMono}$ of Pallavoor et al.~\cite{PallavoorRW20}, presented as \Alg{approx_mono}, works for real-valued functions.
At a high level, the algorithm uses the fact that a function that is far from monotone violates many edges or has a large matching of violated edges of a special type.  The first subroutine estimates the number of edges violated by the function by sampling edges uniformly at random and checking if they violate monotonicity. The second subroutine estimates the size of the special type of matching of violated edges. If either of these estimates is large enough, the algorithm outputs $\mathtt{far}$. Otherwise, it outputs $\mathtt{close}$. 
 
 The class of matchings sought by the algorithm is parametrized by a subset of the coordinates $S \subseteq [d]$. The special property of these matchings is that 
 one can verify locally whether a given point is matched by querying its neighbors and their neighbors. 

 To estimate the size of the matching parametrized by $S$, the algorithm estimates the probability of the following event $\mathtt{Capture}(x, S, f)$. We denote by $x^{(i)}$ the point in $\{0,1\}^d$ whose $i$-th coordinate is equal to $1-x_i$ and the remaining coordinates are the same as in $x$. 

\begin{definition}[Capture Event]\label{def:capture}
For a function $f\colon \{0,1\}^d \rightarrow \mathbb{R}$, a set $S \subseteq [d]$, and a point $x \in \{0,1\}^d$, let $\mathtt{Capture}(x, S, f)$ be the following event:
\begin{enumerate}
\item There exists an index $i \in S$ such that, for  $y = x^{(i)}$, the edge between $x$ and $y$ is violated by $f$. (Note that the edge between $x$ and $y$ is $(x,y)$ if $x_i=0$; otherwise, it is $(y,x).$)
\item Neither the edges of the form $(y, y^{(j)})$ nor the edges of the form  $(y^{(j)},y)$, where  $j \in S \setminus \{i\}$, are violated by $f.$ 
\end{enumerate}
Denote  $\Pr_{\bx \sim \{0,1\}^d}\left[\mathtt{Capture}(\bx, S, f) \right]$ by $\mu_f(S)$.
\end{definition}

Observe that $\mu_f(S)$ can be estimated nonadaptively, by sampling vertices $x$ uniformly and independently at random and querying $f$ on $x$ and all points that differ from $x$ in at most two coordinates.

\begin{figure}[h]
\centering
    \includegraphics[scale=.6]{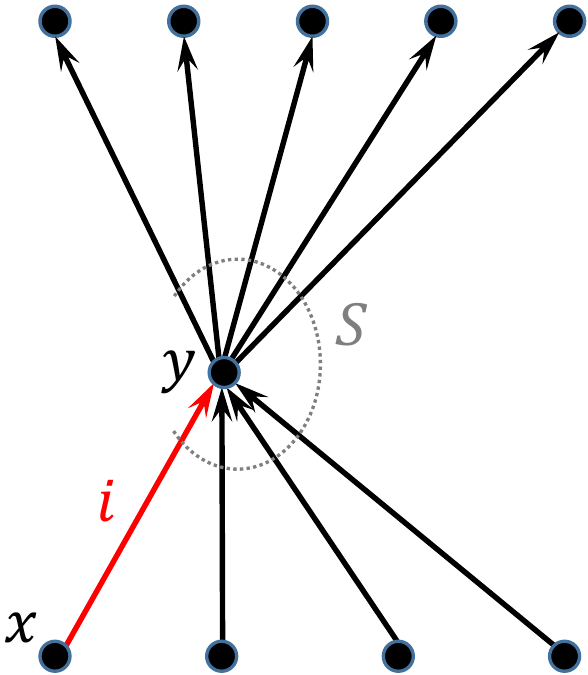}
        \hspace*{2.5cm}
    \includegraphics[scale=.6]{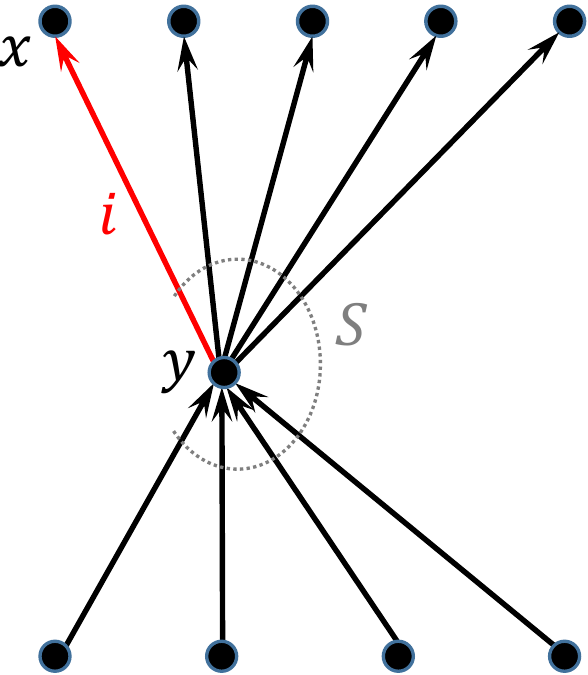}
    \caption{An illustration to \Def{capture}.  Two cases are depicted: when $x\prec y$ and when $y\prec x$.}
    \label{fig:capture-event}
\end{figure}

\begin{algorithm}
  \setstretch{1.1}
  \caption{Algorithm $\mathtt{ApproxMono}$ \label{alg:approx_mono}}
  \begin{algorithmic}[1]
    \Require{Parameters $\eps \in (0, 1/2)$ and dimension $d$; oracle access to function $f\colon \{0,1\}^d \to \mathbb{R}$}.
    \Statex
    \State Calculate $\hat{\nu}$, an estimate of the fraction of the hypercube edges that are violated by $f$, up to an additive error $ \frac{\eps}{4\sqrt{d \log d}}$.  
    \If{$\hat{\nu} \geq 3\eps/(4\sqrt{d\log d})$}{\: return $\mathtt{far}$.} \EndIf \For{$t \in \{1, 2, 4, \dots, 2^{\lfloor \log_2 d\rfloor}\}$}\label{step:params-for-s}
        \State Sample $S \subseteq [d]$ by including each coordinate $i \in [d]$ independently with probability $1/t$. 
        \State Calculate  $\hat{\mu}$, an estimate of
           $\mu_f(S)=\Pr_{\bx \sim \{0,1\}^d}\left[\mathtt{Capture}(\bx, S, f) \right]$
        up to an additive error $\frac{c' \cdot \eps}{4\sqrt{d \log d}}$ for some constant $c' > 0$. 
        \If{$\hat{\mu} \geq \frac{3c' \cdot \eps}{4\sqrt{d\log d}}$}{\:return $\mathtt{far}$}. \EndIf
    \EndFor
    \State Return $\mathtt{close}$. 
  \end{algorithmic}
\end{algorithm}


The first component of the analysis is the observation that both the fraction of violated edges and $\mu_f(S)$, for every $S\subseteq[d],$ provide a good lower bound on the distance to monotonicity. We state this observation without proof because the proof for the Boolean case from~\cite{PallavoorRW20} extends to the general case verbatim. Intuitively, it tells us that, assuming that the two estimates computed by \Alg{approx_mono} are accurate, if one of the estimates is large enough then the input function is far from monotone.
\begin{observation}[\cite{PallavoorRW20}]\label{obs:lb-on-dist}
For every function $f:\{0,1\}^d\to\RR$, the distance $\eps(f)$ is at least half the fraction of the hypercube edges that are violated by $f$ and $\eps(f)\geq \mu_f(S)/2$ for all $S\subseteq[d].$
\end{observation}
The second (and the main) component of the analysis for the Boolean case is \cite[Lemma 2.8]{PallavoorRW20}, which relies on the robust isoperimetric inequality of~\cite{KhotMS18}. 
We generalize this lemma to real-valued functions in \Lem{lemma_2.6} below. Intuitively, it states that, if function $f$ violates few edges then, for one of the $O(\log d)$ choices of the parameter tried in Step~\ref{step:params-for-s} of  \Alg{approx_mono} for sampling set $\bS$, the expectation of $\mu_f(\bS)$ is large in terms of $\eps(f)$. That is, again assuming that
the estimates computed by \Alg{approx_mono} are accurate, if none of the estimates is large enough then the input function is close to monotone.

Equipped with \Obs{lb-on-dist} and \Lem{lemma_2.6}, it is easy to convert the intuition above into the formal proof that the algorithm satisfies the guarantees of \Lem{approx_dist_mono_tolerant}. This part of the proof uses standard techniques and is the same as for the case of Boolean functions described in~\cite{PallavoorRW20}, so we omit it. This completes the proof of \Lem{approx_dist_mono_tolerant}.
\end{proof}

It remains to prove the following lemma, which crucially relies on our robust isoperimetric inequality for real-valued functions.  
We generalize the quantities used by Pallavoor et al.\ so that the proof is syntactically similar to that for the case of Boolean functions. One subtlety that arises in the case of real-valued functions is that a vertex can be incident to violated edges of both colors. In constrast, in the case of Boolean functions, each vertex can be adjacent either to violated edges going to higher-weight vertices or to violated edges going to lower-weight vertices, that is, it cannot be incident on both blue and red violated edges.

\begin{lemma}[Generalized Lemma 2.8 of \cite{PallavoorRW20}] \label{lem:lemma_2.6}
Let $f\colon \{0,1\}^d \rightarrow \mathbb{R}$ be $\eps$-far from monotone, with fraction of violated edges smaller than  $\frac{\eps}{\sqrt{d \log d}}$. Then, for some $t \in \{1, 2, 4, \dots, 2^{\lfloor \log_2 d\rfloor}\}$, it holds
$$
\underset{{\substack{\bS \subseteq [d] \\ i \in \bS \text{ w.p. } 1/t}}}{    \EX}
[ \mu_f(\bS)]
= \Omega\left( \frac{ \eps}{\sqrt{d \log d}}\right).
$$
\end{lemma}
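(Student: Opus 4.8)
The plan is to mirror the structure of the Boolean proof of \cite[Lemma 2.8]{PallavoorRW20}, using our robust isoperimetric inequality (\Thm{real_tal_robust}) in place of the robust KMS inequality, and handling the extra subtlety that a single vertex may be incident to violated edges of both colors. First I would fix a max-weight, min-cardinality matching $M$ in the violation graph of $f$ (or simply a maximal matching), which by \Fact{matching} has size at least $\eps(f)\cdot 2^{d-1}$. I would then $2$-color the edges of this matching, say by a fair coin per edge, and extend this to a coloring $\col\colon\cS_f^-\to\{\red,\blue\}$ of all violated edges arbitrarily; the point of the coloring is to split each matched pair's ``responsibility'' between its lower endpoint (red) and upper endpoint (blue). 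Applying \Thm{real_tal_robust} gives
\[
\underset{\bx}{\EX}\bigl[\sqrt{I_{f,\red}^-(\bx)}\bigr] + \underset{\by}{\EX}\bigl[\sqrt{I_{f,\blue}^-(\by)}\bigr] \;\geq\; C\cdot\eps(f),
\]
so at least one of the two terms is $\Omega(\eps(f))$; by symmetry (the red case uses outgoing edges and sampling from $\cD_{\texttt{pair}}(0,\cdot)$, the blue case uses incoming edges and $\cD_{\texttt{pair}}(1,\cdot)$) I would assume without loss of generality that $\EX_{\bx}[\sqrt{I_{f,\red}^-(\bx)}] \geq C\eps(f)/2$.

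Next I would do a dyadic bucketing of the vertices $x$ by the value of $I_{f,\red}^-(x)$: for $j\in\{0,1,\dots,\lfloor\log_2 d\rfloor\}$ let $A_j$ be the set of $x$ with $I_{f,\red}^-(x)\in[2^j,2^{j+1})$, and note that since the fraction of violated edges is small — at most $\eps/\sqrt{d\log d}$, i.e. $I_f^-\leq \eps\sqrt{d}/\sqrt{\log d} = O(\sqrt d)$ — we can ignore buckets with $2^j$ larger than roughly $\sqrt d$ (their total $\sqrt{\deg}$-contribution is negligible, exactly as in the Boolean argument), so some bucket $A_j$ with $2^j\leq \text{poly-ish bound}$ carries an $\Omega(\eps(f)/\log d)$ fraction of the sum $\sum_x \sqrt{I_{f,\red}^-(x)}/2^d$. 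Fix such a $j$ and set $t = 2^{j}$ (this is the value of $t$ in the statement, one of the powers of two tried in Step~\ref{step:params-for-s}). The heart of the argument is then: for a set $\bS$ formed by including each coordinate independently with probability $1/t$, and for a vertex $x\in A_j$, with constant probability exactly one of the $\approx t$ red outgoing violated edges at $x$ has its flipped coordinate land in $\bS$, while none of the (few, since $f$ violates few edges) other violated edges incident to $x$'s relevant neighbors land in $\bS$ — and when this happens $x$ (or the endpoint of that edge) witnesses $\mathtt{Capture}(\cdot,\bS,f)$. Summing over $x\in A_j$ and taking expectations over $\bS$ gives $\EX_{\bS}[\mu_f(\bS)] = \Omega(|A_j|\cdot(\text{const})/2^d) = \Omega(\eps(f)/\sqrt{d\log d})$ after plugging in $|A_j|2^{j/2}/2^d = \Omega(\eps(f)/\log d)$ together with $2^j = O(\sqrt d)$.

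The main obstacle — and the place where the real-valued case genuinely differs from the Boolean one — is the bookkeeping in the ``exactly one violated edge survives'' step when a vertex may be incident to both red (outgoing) and blue (incoming) violated edges, since the definition of $\mathtt{Capture}(x,S,f)$ in \Def{capture} requires that \emph{no} edge of the form $(y,y^{(j)})$ or $(y^{(j)},y)$ with $j\in S\setminus\{i\}$ be violated, regardless of color. I would handle this by charging: for $x\in A_j$, condition on one red outgoing violated edge $(x,x^{(i)})$ having $i\in\bS$ (probability $\approx t\cdot(1/t)\cdot(1-1/t)^{t-1} = \Theta(1)$ for a single designated edge, or use inclusion–exclusion over the $\Theta(t)$ red edges to get that exactly one survives with constant probability), and then bound the probability that some \emph{other} violated edge at $x$ or at $y=x^{(i)}$ also has its coordinate in $\bS$ by a union bound: the expected number of such offending edges is at most $(\text{number of violated edges touching }x\text{ or }y)/t$, which is small because $I_f^-$ is small and, more carefully, because the total number of violated edges incident to the relevant two-neighborhood, summed appropriately, is controlled by the global bound on $|\cS_f^-|$. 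One extra care point relative to \cite{PallavoorRW20}: I should phrase the capture witness as possibly being $y=x^{(i)}$ rather than $x$ itself, to accommodate the orientation, but this only affects which vertex we name and not the probability estimate. Apart from this, every step is syntactically parallel to the Boolean proof, so I would state the reduction, cite \Thm{real_tal_robust} at the split, and then carry out the bucketing and the two-neighborhood union bound in detail.
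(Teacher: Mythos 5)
There is a genuine gap, and it is exactly at the point your proposal treats as bookkeeping: the choice of the 2-coloring. In the paper's proof the coloring is not arbitrary — each violated edge $(x,y)$ is colored so that it is charged to the endpoint with the \emph{larger} total violated degree $U_f^-$ (red if $U_f^-(x)\geq U_f^-(y)$, blue otherwise), and vertices of the chosen set $B$ are bucketed by \emph{two} parameters, $t\leq U_f^-(x)<2t$ and $s\leq I_{f,b}^-(x)<2s$, with the sampling rate $1/t$ tied to the total violated degree. This is what makes condition 2 of $\mathtt{Capture}(x,\bS,f)$ (\Def{capture}) hold with constant conditional probability: for a charged edge in direction $i$, the neighbor $y=x^{(i)}$ satisfies $U_f^-(y)\leq U_f^-(x)<2t$, so the probability that none of $y$'s other violated edges has its direction in $\bS$ is at least roughly $(1-1/t)^{2t}=\Omega(1)$. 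In your proposal the coloring is a coin flip extended arbitrarily and the bucketing uses only $I_{f,\red}^-(x)$ with $t\approx 2^j$; then nothing controls $U_f^-(y)$ for the neighbors $y$ of the vertices that carry the weight of the isoperimetric sum. Your proposed fix — a union bound using the global bound on $|\cS_f^-|$ — does not work: the hypothesis only bounds the \emph{average} violated degree (about $\eps\sqrt{d/\log d}$), while an individual neighbor $y$ can have violated degree as large as $d$, in which case the conditional probability of condition 2 is $(1-1/t)^{U_f^-(y)}$, which can be exponentially small precisely for the vertices you need. (Also note condition 2 constrains edges at $y$, not at $x$, so inclusion–exclusion over the red edges at $x$ does not address it.)

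Two further, smaller divergences from the paper's route. First, the paper must exhibit a set $B$ (taken to be the even-weight or odd-weight vertices, by bipartiteness of the hypercube) containing no violated edge with both endpoints in it, together with a color $b$ achieving $\frac{1}{2^d}\sum_{x\in B}\sqrt{I_{f,b}^-(x)}=\Omega(\eps)$ after a factor-4 averaging over parity and color; your sketch has no analogue of this step, and the downstream argument of Pallavoor et al.\ relies on it. Second, your cutoff ``discard buckets with $2^j$ larger than roughly $\sqrt d$'' is not justified by the average-degree bound (the correct threshold from $\sum_{I\geq T} \sqrt{I}\leq |\cS_f^-|/\sqrt{T}$ is $T\approx d/\log d$), though this is only a quantitative slip and the final exponent still comes out to $\Omega(\eps/\sqrt{d\log d})$. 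The essential missing idea is the degree-comparison coloring together with the double bucketing by $(U_f^-(x),I_{f,b}^-(x))$; without it the capture-probability step fails.
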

\begin{proof}
For $x \in \{0,1\}^d$, let $U_f^-(x)$ denote the number of violated edges incident on $x$ (both incoming and outgoing).  Consider the following 2-coloring of the edges in $\sfm$: 
\[ \mathtt{col}((x, y)) = \left\{ \begin{array}{cc} 
\mathrm{red} & \text{ if  } U_f^-(x) \geq U_f^-(y); \\
\mathrm{blue} & \text{ if  } U_f^-(x) < U_f^-(y).
\end{array} \right. \]
This coloring ensures that, in the isoperimetric inequality, each edge is counted towards the endpoint incident on the largest number of violated edges (and, in case of a tie, towards the lower endpoint).

The proof of \cite[Lemma 2.8]{PallavoorRW20} relies on the existence of a set $B \subseteq \{0,1\}^d$ and a color $b \in \mathrm{\{red, blue\}}$ that satisfy the following two properties:
\begin{enumerate}
    \item no edge violated by $f$ has both endpoints in the set $B$; 
    \item  $\frac{1}{2^d} \sum_{x \in  B} \sqrt{I_{f, b}^-(x)} = \Omega(\eps)$.
\end{enumerate}
To obtain the set $B$ and the color $b$, we partition $\{0,1\}^d$ into two sets: 
\begin{align*}
    B_{\mathrm{even}} &= \{x\in \{0,1\}^d :  |x| \textnormal{ is even} \}\,, \\
    B_{\mathrm{odd}} &= \{x\in \{0,1\}^d : |x| \textnormal{ is odd} \}\,.
\end{align*}
The sets $B_{\mathrm{even}}$ and $B_{\mathrm{odd}}$ clearly satisfy property 1.
Note that, for the case of Boolean functions, Pallavoor et al.\ partition the domain points according to their function values instead of the parity of their weight to
guarantee property 1.

By \Thm{real_tal_robust},
\begin{equation*}
 \sum_{x \in  B_{\mathrm{even}}} \sqrt{I_{f, \mathrm{red}}^-(x)} + \sqrt{I_{f, \mathrm{blue}}^-(x)} +     \sum_{x \in  B_{\mathrm{odd}}} \sqrt{I_{f, \mathrm{red}}^-(x)} + \sqrt{I_{f, \mathrm{blue}}^-(x)} \geq C \cdot \eps \cdot 2^d\:. 
\end{equation*}
By averaging, there exist a color $b\in\{\mathrm{red, blue}\}$ and a set $B\in\{B_{\mathrm{even}}, B_{ \mathrm{odd}}\}$ that satisfy
\begin{equation}\label{eq:talagrand-objective-for-B}
      \sum_{x \in  B} \sqrt{I_{f, b}^-(x)} \geq \frac{C}{4} \cdot \eps \cdot 2^d\:. 
\end{equation}
Therefore, property 2 also holds. Note that due to the partition into even-weight and odd-weight points we loose an extra factor of 2 as compared to Pallavoor et al.~in the contribution of the set $B$ and the color $b$ to the isoperimetric inequality. This results in a loss by a factor of 2  (hidden in the $\Omega$-notation) in the lower bound in \Lem{lemma_2.6}. 

The rest of the proof is the same as in \cite{PallavoorRW20}, so we only summarize the key steps. We proceed by partitioning the points $x \in B$ into buckets $B_{t, s}$ for $t, s \in \{1, 2, 4, \dots, 2^{\lfloor \log_2 d\rfloor}\}$, where $t \geq s$, as follows:
\[
B_{t,s} = \{ x \in B \: : \: t \leq U_f^-(x) < 2t \textnormal{ and }   s \leq I_{f, b}^-(x) < 2s \}\,.
\]
Each vertex $x \in B_{t, s}$ is incident on between $t$ and $2t$ violated edges and between $s$ and $2s$ edges colored $b$, which are counted towards $x$ in property 2. 

 When the set $\bS$ is chosen so that each coordinate is included with probability $1/t$, it holds for all $x \in B_{t,s}$ that the event $\mathtt{Capture}(x, \bS, f)$ occurs with probability $\Omega(s/t)$. Using this claim, one can lower bound the contribution of each bucket towards $\EX_{\bS \subseteq [d]}[ \mu_f(\bS)].$
By combining the contributions of the buckets with the same value $s$ and applying the Cauchy-Schwartz inequality, one obtains
\begin{align}\label{eq:final-dist-approx}
    \sum_{t\in\{1,2,4,\dots,2^{\floor{\log_2 d}}\}} \ \ \ 
    \underset{{\substack{\bS \subseteq [d] \\ i \in \bS \text{ w.p. } 1/t}}}{    \EX}
[ \mu_f(\bS)]
= \Omega\left(\frac{1}{2^d} \cdot  \frac{(\sum_{t,s: t\geq s}|B_{t,s}|\sqrt{s})^2}{\sum_{t,s: t\geq s}|B_{t,s}|t}\right).
\end{align}
We lower bound the sum in the numerator using \Eqn{talagrand-objective-for-B} and upper bound the sum in the denominator using the assumed upper bound on the number of violated edges. As a result, we get that the left-hand side of \Eqn{final-dist-approx} is 
$\Omega(\eps\sqrt{\log d}/\sqrt{d})$. Averaging over the $O(\log d)$ possible values of $t$ yields \Lem{lemma_2.6}.
 \end{proof}

\section{Our Lower Bound for Testing Monotonicity} \label{sec:lower_bound}
In this section, we prove \Thm{lower_bound} which gives a lower bound on the query complexity of testing monotonicity of real-valued functions with 1-sided error nonadaptive testers. Fischer et al. proved \Thm{lower_bound} for the special case of $r=2$ ~\cite[Theorem 19]{FLNRRS02}. Our proof of \Thm{lower_bound} is a natural extension of their construction to the more general case of $r \in [2,\sqrt{d}]$. 

\begin{proof} 
Fix $r \in [2,\sqrt{d}]$. We show that every nonadaptive, 1-sided error tester for functions over $\{0,1\}^d$ with image size $r$ must make $\Omega(r\sqrt{d})$ queries. This implies \Thm{lower_bound}, since Blais et al.~\cite[Theorem 1.6]{BBM12} proved an $\Omega(\min(d,r^2))$ lower bound for all testers. 

For convenience, assume $d$ is an odd perfect square and $r$ divides $2\sqrt{d}+1$. We partition the points $z\in\{0,1\}^{d-1}$ into levels, according to their weight~$|z|$. We group levels from the middle of the $(d-1)$-dimensional hypercube into $r$ blocks of width $w$, where $w= \frac{2\sqrt{d}+1}{r}.$
Specifically, for each $j \in [r]$, we define the set
\[
Z_j = \left\{z \in \{0,1\}^{d-1} \colon (j-1)w \leq |z| - \Big(\frac{d-1}{2} - \sqrt{d}\Big) \leq j w \right\} \text{.}
\]
Observe that
\[\bigcup_{j=1}^r Z_j = \left\{z \in \{0,1\}^{d-1} \colon -\sqrt{d} \leq \Big||z| - \frac{d-1}{2}\Big| \leq \sqrt{d}\right\}\]
and $Z_j$ is a block of $w$ consecutive levels from the middle of the $(d-1)$-dimensional hypercube. For each $i \in [d]$, we define function $f_i \colon \{0,1\}^d \to [r]$ as follows. 
For $x \in \{0,1\}^d$ and $i \in [d]$, let $x_{-i}$ be the point in $\{0,1\}^{d-1}$ obtained by removing the $i$'th coordinate from $x$. Given $x \in \{0,1\}^d$, we define
\[
f_i(x) =
\begin{cases}
r & \text{if $|x_{-i}| > \frac{d-1}{2} + \sqrt{d}$},\\
1 & \text{if $|x_{-i}| < \frac{d-1}{2} - \sqrt{d}$}, \\
j + (1 - x_i) & \text{if $x_{-i} \in Z_j$} \text{.}
\end{cases}
\]
\begin{claim} \label{clm:distance} 
For all $i \in [d]$, $\eps(f_i) = \Omega(1)$. 
\end{claim}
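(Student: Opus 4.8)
The goal is to show that each function $f_i$ is constant distance from monotone. The natural strategy is to exhibit a large matching of violated edges in the violation graph of $f_i$ and invoke \Fact{flnrrs} (which says $\eps(f_i)$ equals the minimum vertex cover of the violation graph, divided by $2^d$). Since a maximal matching is at least half the size of a minimum vertex cover, it suffices to find $\Omega(2^d)$ vertex-disjoint violated pairs. First I would identify which edges are violated: by construction, for a point $x$ with $x_{-i} \in Z_j$, flipping the $i$-th coordinate from $0$ to $1$ changes the value from $j+1$ to $j$, so the edge in direction $i$ is violated whenever $x_{-i}$ stays inside the same block $Z_j$. That is, for every $z \in \bigcup_{j=1}^r Z_j$, the pair $(x,y)$ with $x_{-i}=y_{-i}=z$, $x_i=0$, $y_i=1$ is violated by $f_i$ (as long as $z$ lies in some $Z_j$, which by the definition of the blocks it does).

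The key point is that these violated edges, as $z$ ranges over $\bigcup_j Z_j \subseteq \{0,1\}^{d-1}$, are automatically vertex-disjoint — they differ only in coordinate $i$ and are indexed by distinct $z$'s — so they form a matching $M_i$ in the violation graph of size exactly $\big|\bigcup_{j=1}^r Z_j\big|$. Then I would lower bound this cardinality: $\bigcup_{j=1}^r Z_j = \{z \in \{0,1\}^{d-1} : \big||z| - \tfrac{d-1}{2}\big| \leq \sqrt{d}\}$, which is the set of points within $\sqrt{d}$ of the middle layer of the $(d-1)$-cube. By standard binomial tail / anti-concentration estimates (e.g. a Chernoff bound or the Berry–Esseen-type fact that the middle $\Theta(\sqrt{d})$ layers of a $d$-cube capture a constant fraction of the mass), this set has size $\Omega(2^{d-1})$. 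Hence $|M_i| = \Omega(2^{d-1}) = \Omega(2^d)$, and so $\eps(f_i) \geq |M_i|/2^d \geq \Omega(1)$ using the maximal-matching-versus-vertex-cover bound together with \Fact{flnrrs} (padding $M_i$ to a maximal matching only increases its size).

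The main obstacle — really the only nontrivial point — is the anti-concentration estimate showing that $\Theta(\sqrt{d})$ consecutive middle layers of the $(d-1)$-dimensional hypercube contain a constant fraction of all $2^{d-1}$ points. This is where the choice of width $\sqrt{d}$ (rather than a smaller window) matters: the weight of a uniform random point in $\{0,1\}^{d-1}$ is a sum of $d-1$ i.i.d.\ Bernoulli$(1/2)$ variables with standard deviation $\Theta(\sqrt{d})$, so by the central limit theorem (or an explicit computation with binomial coefficients) a window of radius $\sqrt{d}$ around the mean captures probability bounded below by an absolute constant. One should be slightly careful that the bound is uniform in $i$ (it is, since the estimate does not depend on $i$ at all) and that $r \leq \sqrt{d}$ ensures the blocks $Z_j$ are nonempty and the construction is well defined, but these are routine. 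Everything else — that the listed edges are violated, that they are disjoint, that a maximal matching beats half the vertex cover — is immediate from the definitions and \Fact{flnrrs}.
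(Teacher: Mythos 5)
Your proposal is correct and takes essentially the same approach as the paper, whose proof is exactly this one-liner: it exhibits the matching $M=\{(x,y)\colon x_i=0,\ y_i=1,\ x_{-i}=y_{-i}\in\bigcup_{j=1}^r Z_j\}$, observes that every pair in it is violated by $f_i$, and that $|M|=\Omega(1)\cdot 2^d$ by the same middle-layers (binomial anti-concentration) estimate. One small correction to your justification of $\eps(f_i)\geq |M|/2^d$: the inequality you need is the trivial direction $\mathrm{VC}\geq |M|$ (any vertex cover must contain a distinct endpoint of each matched violated edge), not ``a maximal matching is at least half the minimum vertex cover,'' which points the wrong way; with that, the bound follows from \Fact{flnrrs}, or even more directly from the observation that any monotone function must change the value on at least one endpoint of each violated matched pair.
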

\begin{proof} 
Consider the matching of edges $M = \left\{(x,y) \colon x_i = 0 \text{, } y_i = 1 \text{, and } x_{-i} = y_{-i} \in \bigcup_{j=1}^r Z_j \right\}$. Observe that all pairs in $M$ are edges violated by $f_i$ and $|M| = \Omega(1) \cdot 2^d$. 
\end{proof}

Every 1-sided error tester must accept if the function values on the points it queried are consistent with a monotone function. 
We say that a set $Q\subseteq \{0,1\}^d$ of queries {\em contains a violation for a function $f$} if there exist $x,y\in Q$ such that $x\prec y$ and $f(x)>f(y)$. 
If $Q$ does not contain a violation, then the function values on $Q$ are consistent with a monotone function.

\begin{claim} \label{clm:prob} 
For all sets $Q \subseteq \{0,1\}^d$ of queries,
\[\big|\{i \in [d] \colon Q \text{ contains a violation for } f_i\}\big| < w\cdot|Q|\text{.}\] 
\end{claim}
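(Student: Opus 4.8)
The plan is to fix an arbitrary query set $Q \subseteq \{0,1\}^d$ and bound the number of coordinates $i \in [d]$ for which $Q$ contains a violation for $f_i$. First I would observe that a violation for $f_i$ requires two queried points $x \prec y$ with $f_i(x) > f_i(y)$. Because of the three-case definition of $f_i$, the only way to get $f_i(x) > f_i(y)$ with $x \prec y$ is for both $x_{-i}$ and $y_{-i}$ to land in (nearly) the same block region: if $|x_{-i}|$ is in a strictly lower band than $|y_{-i}|$ then $f_i(x) \le f_i(y)$, and the "$+ (1-x_i)$" correction term can only create a violation within a single block or between two adjacent blocks when $x_i = 1$, $y_i = 0$ — but $x \prec y$ forces $x_i \le y_i$, so in fact $x_i = y_i$ on the $i$-th coordinate unless $x_i = 0, y_i = 1$. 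The upshot I want to extract is: if $(x,y)$ witnesses a violation for $f_i$, then $x$ and $y$ differ only in coordinate $i$ (so $x_{-i} = y_{-i}$), $x_i = 0$, $y_i = 1$, and $x_{-i} = y_{-i} \in Z_j$ for some $j$; the violation then reads $f_i(x) = j + 1 > j = f_i(y)$.

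Given that reduction, the counting is straightforward. For each pair $x, y \in Q$ with $x \prec y$, let me write $x \oplus y$ for the set of coordinates on which they differ; the pair can witness a violation for $f_i$ only if $x \oplus y = \{i\}$, i.e. $x$ and $y$ are the two endpoints of a hypercube edge in direction $i$. So the coordinates $i$ for which $Q$ contains a violation are contained in $\{i : \exists\, \text{edge of } Q \text{ in direction } i \text{ with } x_{-i} = y_{-i} \in \cup_j Z_j\}$. Now I would bound this differently: count, over all edges $(x,y)$ with both endpoints in $Q$, how many directions arise, but weight by the requirement $x_{-i} \in \cup_j Z_j$. Actually the cleanest route is: fix a queried point $x \in Q$; the number of coordinates $i$ such that $x^{(i)} \in Q$ and (say $x_i = 0$, so $x \prec x^{(i)}$) the common projection lies in $\cup_j Z_j$ is at most the number of coordinates we can flip while keeping $|x_{-i}|$ in the middle band of width $2\sqrt d + 1$. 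But I think the intended, tighter argument goes through $w$: since each $Z_j$ has width $w$ and the violation forces both projections into a single block $Z_j$, and the two points differ in exactly one coordinate, we should charge each violated direction to a point whose weight lies in a window of size controlled by $w$. The bound $|\{i : Q \text{ violated for } f_i\}| < w \cdot |Q|$ then comes from: for each $x \in Q$, at most $w$ directions $i$ (roughly, those flips keeping $x$ inside a fixed block) can be the "new" direction that first creates a violation involving $x$, and summing over $x \in Q$ gives $w \cdot |Q|$.

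The main obstacle I expect is getting the case analysis on $f_i$ airtight — in particular verifying that a violation really does force $x_{-i} = y_{-i}$ (differ in exactly one coordinate) rather than, say, $x_{-i}$ and $y_{-i}$ both lying in $Z_j$ but at different weights with $x_i = y_i$. One has to check that within a block, $f_i(x) = j + (1-x_i)$ does not depend on $|x_{-i}|$ at all, so two points in the same block with $x \prec y$ and $x_i = y_i$ give $f_i(x) = f_i(y)$, no violation; and two points in adjacent blocks $Z_j \ni x_{-i}$, $Z_{j+1} \ni y_{-i}$ give $f_i(x) = j + (1-x_i) \le j+1 \le (j+1) + (1 - y_i) = f_i(y)$ when $x_i \ge y_i$, which holds since $x \prec y$ would need $x_i \le y_i$, hence $x_i = y_i$ and again no violation — so a violation genuinely requires the single-coordinate structure. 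Once that structural claim is nailed down, I would finish by the charging argument above: each coordinate $i$ with a violation is witnessed by an edge $(x, x^{(i)})$ of $Q$ with the midpoint projection in some $Z_j$; assign this $i$ to the lower endpoint $x$; each $x \in Q$ can be the lower endpoint of at most $w$ such violation-witnessing edges (only coordinates $i$ with $x_i = 0$ and $x_{-i} \in Z_j$ for the specific $j$ determined by $|x|$, and there are at most $w$ "usable" weights — I'll make the count of at most $w$ precise in the writeup), so the total is at most $w \cdot |Q|$, and strictness follows from a boundary point that cannot contribute.
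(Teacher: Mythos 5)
Your structural reduction is where the argument breaks: it is \emph{not} true that a violation for $f_i$ forces $x$ and $y$ to differ only in coordinate $i$. Inside a single block the function $f_i$ is constant on each of the two slices $x_i=0$ and $x_i=1$, so take $x \prec y$ with $x_i = 0$, $y_i = 1$, and $x_{-i} \prec y_{-i}$ both in the same block $Z_j$ but with $|y_{-i}| > |x_{-i}|$ (differing in several of the non-$i$ coordinates). Then $f_i(x) = j+1 > j = f_i(y)$, a genuine violation witnessed by a pair that differs in up to $w$ coordinates, not an edge. Your case analysis in the last paragraph only checks same-block pairs with $x_i = y_i$ and adjacent-block pairs; it misses exactly this same-block, $x_i=0,y_i=1$, $x_{-i}\neq y_{-i}$ case. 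This is also why the bound in the claim is $w\cdot|Q|$ rather than something like $|Q|$: the correct structural statement (and the one the paper proves) is that a violation for $f_i$ forces the two witnesses to differ in \emph{at most $w$} coordinates, one of which is $i$, after which the paper invokes the capture lemma of Berman et al.\ (Lemma 3.18 of [BCPRS17]), $|\texttt{cap}_w(Q)| \leq w(|Q|-1)$, to bound the number of such coordinates over all pairs in $Q$ simultaneously.

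The final charging step is also unsupported even on its own terms. For a fixed queried point $x$ and coordinates $i$ with $x_i = 0$, the condition $x_{-i} \in Z_j$ depends only on $|x_{-i}| = |x|$, which is the same for every such $i$; so either all zero-coordinates of $x$ ``qualify'' or none do, and there is no per-point bound of $w$ of the kind you assert. (If violations really did only come from edges inside $Q$, the number of distinct directions realized by those edges is at most $|Q|-1$, which would make the factor $w$ superfluous -- another sign that the single-coordinate reduction cannot be the right structure.) To repair the proof you need both pieces the paper uses: the weaker but correct ``at most $w$ differing coordinates, including $i$'' claim, and a lemma controlling how many coordinates can appear among the first $w$ differing coordinates over all pairs from a $|Q|$-point set; your pointwise charging does not substitute for the latter.
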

\begin{proof} 
We use the following claim due to \cite{BCPRS17}.

\begin{claim}[Lemma 3.18 of \cite{BCPRS17}, rephrased] \label{clm:capture} 
Let $c,d \in \NN$ and $Q \subseteq \{0,1\}^d$. Given $x,y \in Q$, define $\texttt{cap}_c(x,y)$ as follows. If $x$ and $y$ differ on at least $c$ coordinates, then let $\texttt{cap}_c(x,y)$ be the set of the first $c$ coordinates on which $x$ and $y$ differ. Otherwise, let $\texttt{cap}_c(x,y)$ be the set of all coordinates on which $x$ and $y$ differ. Define $\texttt{cap}_c(Q) = \bigcup_{x,y \in Q} \texttt{cap}_c(x,y)$. Then $|\texttt{cap}_c(Q)| \leq c(|Q|-1)$. 
\end{claim}

By design of $f_i$, if $Q$ contains a violation for $f_i$, then there exist $x,y \in Q$ that differ in at most $w$ coordinates, one of which is $i$. 
Then $i \in \texttt{cap}_w(x,y)$ and thus $i \in \texttt{cap}_w(Q)$. Therefore, by \Clm{capture},
\[\big|\{i \in [d] \colon Q \text{ contains a violation for } f_i\}\big| \leq |\texttt{cap}_w(Q)| \leq w(|Q|-1) < w \cdot |Q| \text{.}\]
This completes the proof of \Clm{prob}.
\end{proof}

Now, consider a nonadaptive tester $T$ with $1$-sided error that makes $q=q(\eps, d, r)$ queries. Let $\boldsymbol{Q} \subseteq \{0,1\}^n$ denote the random set of queries of size $q$ made by $T$. Using linearity of expectation and \Clm{prob}, 
\[
\sum_{i=1}^d \Pr[T \text{ finds a violation for } f_i] = \underset{\boldsymbol{Q}}{\EX} \Big[\big|\{i \in [d] \colon \boldsymbol{Q} \text{ contains a violation for } f_i\}\big|\Big] < w \cdot q
\] 
and therefore there exists $i \in [d]$ such that
%
\[\Pr\left[T \text{ finds a violation for } f_i\right] < \frac{w \cdot q}{d} = \frac{(2\sqrt{d}+1) \cdot q}{r d} < \frac{3q}{r\sqrt{d}} \text{,}\]
whereas, if $T$ is a valid monotonicity tester, then we must have $\Pr[T \text{ finds a violation for } f_i] \geq 2/3$. Therefore, for $T$ to be a valid monotonicity tester, we require that it makes $q \geq \frac{2}{9}r\sqrt{d} = \Omega(r\sqrt{d})$ queries.
\end{proof}

\section{Undirected Talagrand Inequality for Real-Valued Functions}\label{sec:undirected_real}

In this section we prove \Thm{undirected_real} through a simple reduction to Talagrand's (undirected) isoperimetric inequality for Boolean functions, which we state as \Thm{undirected_boolean}. 

\begin{proof} 
Given $t \in \RR$, let $p_t = \frac{1}{2^d}|\{x \colon f(x) = t\}|$ denote the fraction of points $x$ in $\{0,1\}^d$ with $f(x) = t$. Note that $\sum_{t\in \RR} p_t = 1$ and that $p_t>0$ for at most $2^d$ values of $t$. Choose $m \in \RR$ to be the smallest real number such that $\sum_{t \leq m} p_t \geq 1/2$. Then we also have $\sum_{t < m} p_t < 1/2$ and so $\sum_{t \geq m} p_t > 1/2$. 

Since $\sum_{t < m} p_t + \sum_{t > m} p_t = 1 - p_m$, only one of the two sums can be less than $\frac{1-p_m}{2}$. We define a Boolean function $h \colon \{0,1\}^d \to \{0,1\}$ as follows, depending on the value of $\sum_{t < m} p_t$. 

\begin{enumerate}
    \item Suppose $\sum_{t < m} p_t < \frac{1-p_m}{2}$. Then $\sum_{t > m} p_t > \frac{1-p_m}{2}$. Moreover, $\sum_{t \leq m} p_t \geq 1/2 \geq \frac{1-p_m}{2}$ by our choice of $m$. In this case, we define the Boolean threshold function $h: \{0,1\}^d\rightarrow \{0,1\}$ as $h(x) = 1$ if $f(x) > m$ and $h(x) = 0$ otherwise.

    \item Suppose $\sum_{t < m} p_t \geq \frac{1-p_m}{2}$. We know that $\sum_{t \geq m} p_t \geq 1/2 \geq \frac{1-p_m}{2}$ by our choice of $m$. 
    In this case, we define the Boolean threshold function $h: \{0,1\}^d\rightarrow \{0,1\}$ as $h(x) = 1$ if $f(x) \geq m$ and $h(x) = 0$ otherwise. 
    
\end{enumerate}
In either case, observe that the fraction of $0$'s and $1$'s for $h$ are both at least $\frac{1-p_m}{2}$. Thus, 
\[
\dist(h,\mathbf{const}) \geq \frac{1-p_m}{2} \geq \frac{1-\max_{t\in \RR} p_t}{2} = \frac{\dist(f,\mathbf{const})}{2}\text{.}
\] 
%
Moreover, for all edges $\{x,y\}$, if $h(x) > h(y)$ then $f(x) > f(y)$. Thus, $I_f(x) \geq I_h(x)$ for all $x$, and so
\[
\underset{\bx \sim \{0,1\}^d}{\EX}\left[\sqrt{I_f(\bx)}\right] 
\geq \underset{\bx \sim \{0,1\}^d}{\EX}\left[\sqrt{I_h(\bx)}\right] \text{.}
\]
Using these two facts and applying \Thm{undirected_boolean} yields
\begin{align*}
    \underset{\bx \sim \{0,1\}^d}{\EX}\left[\sqrt{I_f(\bx)}\right] 
    \geq \underset{\bx \sim \{0,1\}^d}{\EX}\left[\sqrt{I_h(\bx)}\right] 
    \geq \frac{\dist(h,\mathbf{const})}{\sqrt{2}} 
    \geq \frac{\dist(f,\mathbf{const})}{2\sqrt{2}}\,,
\end{align*}
and this completes the proof. 
\end{proof}

\medskip

\paragraph{Acknowledgments.} We thank Ramesh Krishnan Pallavoor Suresh for useful discussions.

\bibliographystyle{alpha}
\bibliography{biblio}

\appendix

\end{document}